\title{Tight Lower Bound on Equivalence Testing in Conditional Sampling Model}
\author{Diptarka Chakraborty\footnote{National University of Singapore, Singapore. Supported in part by an MoE AcRF Tier 2 grant (MOE-T2EP20221-0009) and Google South \& South-East Asia Research Award. Email: diptarka@comp.nus.edu.sg}\and Sourav Chakraborty \footnote{Indian Statistical Institute, Kolkata. Email: sourav@isical.ac.in} \and Gunjan Kumar \footnote{National University of Singapore, Singapore. Supported in part by  National Research Foundation Singapore under its NRF Fellowship Programme[NRF-NRFFAI1-2019-0004 ]. Email: dcsgunj@nus.edu.sg}}
\date{}
\newcommand{\E}{\mathbb{E}}
\newcommand{\mcA}{\mathcal{A}}
\newcommand{\YES}{\ensuremath{\mathsf{YES}}}
\newcommand{\NO}{\ensuremath{\mathsf{NO}}}
\newcommand{\Bad}{\ensuremath{\mathsf{Bad}}}
\newcommand{\Good}{\ensuremath{\mathsf{Good}}}
\newcommand{\G}{\ensuremath{\mathsf{G}}}
\newcommand{\K}{\ensuremath{\mathsf{K}}}
\newcommand{\tv}{\ensuremath{\mathsf{tv}}}
\newcommand{\dtv}{\ensuremath{\mathsf{d_{TV}}}}
\newcommand{\IY}{\ensuremath{\mathcal{I}_{\mathsf{YES}}}}
\newcommand{\IN}{\ensuremath{\mathcal{I}_{\mathsf{NO}}}}
\newcommand{\IE}{\ensuremath{\mathcal{I}_{\mathsf{EVEN}}}}
\newcommand{\IO}{\ensuremath{\mathcal{I}_{\mathsf{ODD}}}}
\newcommand{\Alg}{\ensuremath{\mathcal{A}}}
\newcommand{\COND}{\ensuremath{\mathsf{COND}}}
\newcommand{\At}{\ensuremath{\mathsf{At}}}
\newcommand{\SAMP}{\ensuremath{\mathsf{SAMP}}}
\newcommand{\mcB}{\ensuremath{\mathcal{B}}}
\newcommand{\Ratio}{\ensuremath{\mathsf{WCOND}}}
\DeclareMathOperator{\poly}{poly}
\renewcommand{\epsilon}{\varepsilon}
\newtheorem{theorem}{Theorem}[section]
\newtheorem{claim}[theorem]{Claim}
\newtheorem{lemma}[theorem]{Lemma}
\newtheorem{definition}[theorem]{Definition}
\newtheorem{corollary}[theorem]{Corollary}
\newtheorem{observation}[theorem]{Observation}
\begin{document}

\maketitle
\thispagestyle{empty}
\setcounter{page}{0}
\begin{abstract}
    We study the equivalence testing problem where the goal is to determine if the given two unknown distributions on $[n]$ are equal or $\epsilon$-far in the total variation distance in the conditional sampling model (CFGM, SICOMP16; CRS, SICOMP15)  wherein a tester can get a sample from the distribution conditioned on any subset. Equivalence testing is a central problem in distribution testing, and there has been a plethora of work on this topic in various sampling models.
    
   Despite significant efforts over the years, there remains a gap in the current best-known upper bound of $\Tilde{O}(\log \log n)$ [FJOPS, COLT 2015] and lower bound of $\Omega(\sqrt{\log \log n})$[ACK, RANDOM 2015, Theory of Computing 2018].
   Closing this gap has been repeatedly posed as an open problem (listed as problems 66 and 87 at sublinear.info). In this paper, we completely resolve the query complexity of this problem by showing a lower bound of $\Tilde{\Omega}(\log \log n)$. For that purpose, we develop a novel and generic proof technique that enables us to break the $\sqrt{\log \log n}$ barrier, not only for the equivalence testing problem but also for other distribution testing problems, such as uniblock property.
\end{abstract}
\newpage
\section{Introduction}

Probability distributions play a central role in modern data science, and consequently, the past few years have witnessed sustained interest from theoreticians and practitioners alike in the broad field of distribution testing, wherein the central object of study is probability distribution(s). In this work, we focus on the discrete distributions over the domain of size $n$. Since the value of $n$ is often too large for distributions of interest, it is  impractical to specify such a distribution. Therefore, one is often interested in quantifying complexity through the lens of queries to the distributions. The goal in this scenario is to check whether the input distribution(s) has some particular property or is (are) ``$\epsilon$-far" from satisfying that property, and doing all these while trying to reduce the number of queries made to the distribution(s).  

Initial studies in distribution property testing focused on the model ($\SAMP$) wherein one can only sample from the given distribution(s). The $\SAMP$ model was discovered to be too weak, as evidenced by strong lower bounds of the form $\Omega(n^{1-c})$, for some constant $c \ge 0$, for testing some of the most exciting properties. Such strong lower bounds necessitated the need to allow more powerful queries, and over the past decade, several models have been proposed. Among such proposals, the conditional sampling model ($ \COND $) -- that allows drawing samples from the input distribution(s) conditioned on any arbitrary subset -- is the most well-studied model in theory as well as in practice. From a theoretical perspective, various other distribution testing problems have been studied under the $\COND$ model~\cite{falahatgar2015faster, kamath2019anaconda, narayanan2021tolerant} and certain variants of it like \emph{subcube conditioning model}~\cite{bhattacharyya2018property, canonne2021random, chen2021learning}. Furthermore, the $\COND$ model and its variants have recently found applications in the areas like formal methods and machine learning (e.g.,~\cite{chakraborty2019testing, meel2020testing, golia2022scalable}).

In this paper, we focus on the equivalence testing problem, one of the central problems in the field of distribution testing. In particular, we want to determine whether two distributions $D_1$ and $D_2$ are equal or $\epsilon$-far from each other in the total variation distance under the $\COND$ model. Equivalence testing is arguably the most celebrated problem in distribution testing. 
In the $\SAMP$ model, the equivalence testing is well understood, and its query complexity is $\Theta(\max(n^{2/3}/\epsilon^{4/3},\sqrt{n}/\epsilon^2))$~\cite{chan2014optimal,batu2013testing,valiant2011testing}. 

Analyzing the complexity of equivalence testing has turned out to be more challenging in the context of the $\COND$ model. Despite significant efforts over the years, there remains a quadratic gap between the current best-known upper bound of $\Tilde{O}(\log \log n)$~\cite{falahatgar2015faster} and the lower bound of $\Omega(\sqrt{\log \log n})$~\cite{acharya2018chasm}. The challenge of closing the gap between the lower and upper bounds has been a recurring open problem, which has been raised multiple times at various workshops and conferences, including the 2014 Bertinoro Workshop on Sublinear Algorithms and the FOCS 2017 Frontiers of Distribution Testing. This problem has also been discussed on open problem forums such as sublinear.info (listed as problems 66 and 87).

The main difficulty in bridging the gap stems from the limitations inherent in the current approach that was used to establish the lower bound of $\Omega(\sqrt{\log \log n})$. This limitation was best highlighted by the authors of~\cite{acharya2018chasm}:
``\emph{There appear to be conceptual barriers to strengthening our result, which would require new ideas}". 

The primary contribution of this paper is to develop a novel and generic technique that overcomes the limitations of previous proof techniques and enables us to go past the $\Omega(\sqrt{\log \log n})$ lower bound barrier for not just equivalence testing but for several other problems in distribution testing.

\subsection{Our Lower Bound Result on Equivalence Testing}

The main contribution of this paper is to prove an (almost) tight lower bound on the query complexity in the $\COND$ query model for the equivalence testing of distributions (see Definition~\ref{def:equivalence} for the formal definition). In the $\COND$ query model, the tester can specify a subset $A \subseteq [n]$ and then samples each $j \in A$ according to the distribution $D$ conditioned on the set $A$, i.e., with probability $D(j)/D(A)$ (see Definition~\ref{def:cond}). 
Let us now state our main result.

\begin{theorem}
    \label{thm:mainlb}
    Any (randomized) adaptive tester for testing equivalence between two distributions over $[n]$ must make  $\Tilde{\Omega}(\log \log n)$ $\COND$ queries. (The tilde hides a $\poly(\log\log\log n)$ factor.)
\end{theorem}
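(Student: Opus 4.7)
The plan is to establish the lower bound via Yao's minimax principle: I would construct a distribution $\IY$ over YES instances (pairs with $D_1 = D_2$) and a distribution $\IN$ over NO instances (pairs that are $\epsilon$-far in total variation), and then show that for any deterministic adaptive $\COND$ tester $\mcA$ making $q = o(\log \log n / \poly(\log\log\log n))$ queries, the induced transcript distributions under $\IY$ and $\IN$ have total variation distance at most $1/3$. The governing intuition is that the domain $[n]$ admits at most $\Theta(\log \log n)$ geometrically distinct ``scales,'' so the hardness should scale with the number of such scales and meet the $\tilde{O}(\log \log n)$ upper bound of FJOPS.

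For the instance family, I would partition $[n]$ into $k = \Theta(\log \log n)$ levels $L_1, \dots, L_k$ of doubly-exponentially increasing sizes, e.g.\ $|L_i| \approx 2^{2^i}$, and place a carefully chosen mass profile on these levels so that no single conditional query can simultaneously ``see'' more than a few levels. The YES family takes $D_1 = D_2$ equal to this profile. The NO family picks a uniformly random secret index $i^\star \in [k]$ and perturbs $D_1$ and $D_2$ only within $L_{i^\star}$, in such a way that (i) they become $\epsilon$-far in total variation, (ii) the induced marginals on every ``coarse'' subset of $[n]$ remain unchanged, and (iii) outside of $L_{i^\star}$ the YES and NO worlds are statistically indistinguishable. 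The design criterion is that detecting the perturbation forces the tester to localize the correct level.

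The main technical step is a per-query analysis. For a query set $A \subseteq [n]$, I would decompose its mass across the levels and exploit the geometric scaling to show that $A$ has non-negligible conditional interaction with at most $O(\poly(\log\log\log n))$ levels at once. Together with the uniform prior on $i^\star$, this should imply that the contribution of a single query to the $\chi^2$ (or Hellinger) divergence between the YES-transcript law and the NO-transcript law is at most $O(\poly(\log\log\log n)/\log\log n)$. Summing over $q$ queries and converting back to total variation via Pinsker would yield the desired transcript separation of at most $1/3$, hence the claimed lower bound on $q$.

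The main obstacle, and the reason previous approaches stalled at $\Omega(\sqrt{\log \log n})$, is the interaction of adaptivity with the power of conditional queries: a naive union bound or Cauchy--Schwarz argument loses a square-root factor, which is precisely the gap being closed here. I expect the new ingredient to be a martingale/potential-function argument that tracks the posterior over $i^\star$ after each query and proves that it can concentrate only gradually, by $O(1)$ levels per query up to $\poly(\log\log\log n)$ factors, even against an optimally adaptive strategy. Formalizing the ``each query only sees a few levels'' claim uniformly across all adaptively chosen query sets $A$ will likely require a delicate level-by-level coupling between the YES and NO worlds, together with tail bounds that rule out queries concentrating atypical mass on any single level; this is the step I anticipate to be the hardest, and the place where the novel proof technique advertised in the abstract must do real work.
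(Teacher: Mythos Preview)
Your proposal takes a genuinely different route from the paper, and there are real gaps.

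\textbf{Instance construction.} The paper does \emph{not} use $\Theta(\log\log n)$ levels with a single secret index $i^\star$. It uses $\tau = \Theta(\sqrt{\log n})$ buckets whose sizes grow by a factor $\rho = 2^{\sqrt{\log n}}$, together with a random support-size parameter $b = 2^\kappa$ where $\kappa$ is uniform over $\Theta(\log n)$ values; the $\NO$ instance perturbs \emph{every} bucket (splitting each into a heavy and a light half), so the per-bucket TV contributions sum to a constant. Your plan to perturb only one level $L_{i^\star}$ out of $k = \Theta(\log\log n)$ cannot achieve constant total variation distance: if the levels carry comparable masses (as they must, to keep the tester from singling out a heavy level), each level has mass $O(1/k)$, and any perturbation supported on $L_{i^\star}$ that preserves level marginals yields $\dtv(D_1,D_2) \le O(1/k) = O(1/\log\log n) \to 0$. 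This alone breaks the plan as stated.

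\textbf{The core technical idea is missing.} You correctly sense a barrier at $\sqrt{\log\log n}$, but its source is not a Cauchy--Schwarz loss; it is the size of the core-adaptive decision tree, which is $2^{\Theta(q^2)}$. The previous proof union-bounds ``bad'' events over all nodes, each bad with probability roughly $1/\sqrt{\log n}$, forcing $2^{q^2} = o(\sqrt{\log n})$ and hence $q = O(\sqrt{\log\log n})$. The paper's new idea is \emph{not} a martingale or posterior argument. It introduces a weaker oracle, the $\Ratio$ model, in which the step ``pick an atom $V \in \At(U_i)$'' is done with probability $|V|/|U_i|$ (independent of the input distribution) rather than $D_k(V)/D_k(U_i)$. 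Because this step becomes internal randomness, the tester decomposes as a mixture of deterministic sub-testers whose decision trees have only $q^{O(q)}$ nodes, so the union bound now succeeds for $q = \tilde{O}(\log\log n)$. Two further steps close the loop: (i) on good nodes one has $D_k(V)/D_k(U) \in (1\pm O(1/\gamma))\,|V|/|U|$, which makes the $\COND$ and $\Ratio$ runs $O(q/\gamma)$-close in total variation level by level; and (ii) the bad event is redefined from ``some node of $R$ is bad'' to ``the run \emph{passes through} a bad node,'' which is exactly what the $\Ratio$ sparsification controls. None of these ingredients appear in your plan; the proposed ``posterior over $i^\star$'' device is a different mechanism and, given the TV-distance issue above, would not even be operating on a valid hard family.
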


We prove this result by introducing a weaker query model called the $\Ratio$ query model that is easier to analyze, then proving the query lower bound in that weaker model, and finally showing that this weaker model is (roughly) equivalent to the $\COND$ model at least for the equivalence testing problem. We discuss this approach in more detail in Section~\ref{sec:overview}. We believe that our proof technique is very generic and can be used for other distribution testing problems in the $\COND$ query model. Indeed, the same technique helps us establish a query complexity lower bound for another class of problems - testing label invariant properties.

\subsection{Our Lower Bound Result on testing Label Invariant properties}

A property of a distribution that is invariant under relabeling of the universe (on which the distribution is defined) is called \emph{label-invariant}. For a label-invariant property $\mathcal{P}$, the goal is to check if a given distribution satisfies the property $\mathcal{P}$ or $\epsilon$-far (in the total variation distance) from satisfying the property $\mathcal{P}$.
One crucial difference between the problem of testing equivalence and the problem of testing any label invariant property is that in the former, two distributions are given as input, and both have to be accessed using queries, while in the latter, the input is only one distribution that has to be accessed using queries.

CFGM~\cite{chakraborty2016power}  showed a universal tester to test any label invariant property in $\poly(\log n,1/\epsilon)$ queries. More precisely, there exists a tester that, for \textit{any} label invariant property $\mathcal{P}$, given $\COND$-access to a distribution $D$ as input, makes   $\poly(\log n)$  queries and with high probability,  returns ACCEPT if $D$ satisfies the property $\mathcal{P}$  and REJECT if $D$ is $\epsilon$-far in total variation distance from any distribution having property $\mathcal{P}$. They also defined a label invariant property, called \emph{even uniblock} property, and showed a lower bound of $\Omega(\sqrt{\log \log n})$ on the query complexity. The significance of this lower bound is that now we cannot hope to show a universal tester that can test any label invariant property in $o(\sqrt{\log \log n})$ queries.

In this paper, we improve this lower bound to $\Tilde{\Omega}(\log \log n)$.

\begin{theorem}\label{thm:main-odd-even}
There exists a label invariant property such that any (randomized) adaptive tester for that property must make $\Tilde{\Omega}(\log \log n)$ $\COND$-queries.
\end{theorem}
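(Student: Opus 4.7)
The strategy is to instantiate the three-part recipe developed for Theorem~\ref{thm:mainlb} --- namely (i) construct hard \YES{} and \NO{} distribution families, (ii) reduce an arbitrary $\COND$ tester to a tester in the weaker $\Ratio$ model, and (iii) prove the lower bound directly in the $\Ratio$ model --- but now for a single-distribution label invariant property. A natural candidate is a refined version of the \emph{uniblock} (even/odd) property from~\cite{chakraborty2016power}: a \YES{} instance is a distribution whose support decomposes into uniform blocks whose sizes follow a prescribed ``even'' sequence across $\tilde{\Theta}(\log\log n)$ geometrically spaced scales, and a \NO{} instance uses the analogous ``odd'' sequence. A direct weight calculation shows the resulting families $\IE$ and $\IO$ are $\epsilon$-far from each other in total variation, with every distribution in $\IE$ satisfying the property and every distribution in $\IO$ being $\epsilon$-far from any distribution that does. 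This upgrades the CFGM hard instance by hiding $\tilde{\Theta}(\log\log n)$ scales rather than just $\tilde{\Theta}(\sqrt{\log\log n})$.

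For step~(ii), I would recast an adaptive $\COND$ tester as a tester in the $\Ratio$ model. The key observation is that every distribution in $\IE \cup \IO$ takes only $\tilde{O}(\log\log n)$ distinct probability values, and both the property and the hard families are invariant under permutations within each level. After an appropriate bucketing of the universe by level, a $\COND$ sample from a query set $A$ then reveals essentially no more than the vector of total masses that $A$ puts on each level --- which is precisely the aggregate information available to the $\Ratio$ oracle. I expect this simulation to cost only a $\poly(\log\log\log n)$ overhead in the number of queries, mirroring the corresponding reduction for equivalence testing.

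Step~(iii) would be established by a Yao-style indistinguishability argument in the $\Ratio$ model: for any adaptive sequence of $q = o(\log\log n / \poly(\log\log\log n))$ queries, the joint distribution of $\Ratio$ responses under a random draw from $\IE$ is $o(1)$-close in total variation to that under a random draw from $\IO$. The calculation proceeds scale-by-scale: each query can probe only a constant number of scales in a genuinely informative way, so until $\Omega(\log\log n)$ queries have been asked, the random placement of ``even'' versus ``odd'' at the remaining scales is statistically invisible to the tester. Combining steps~(ii) and~(iii) then forces any $\COND$ tester to use $\Tilde{\Omega}(\log\log n)$ queries.

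The main obstacle is step~(ii): the $\COND$-to-$\Ratio$ reduction in the single-distribution setting. Unlike equivalence testing, where symmetry between two unknown distributions can be exploited, here ``forgetting labels'' must be extracted purely from the label invariance of the property together with the within-level symmetry of the hard instances. Carrying out this symmetrization uniformly against \emph{arbitrary adaptive} testers, while losing only a $\poly(\log\log\log n)$ factor, is precisely where the technique of~\cite{acharya2018chasm} hits its $\sqrt{\log\log n}$ barrier, and is therefore the step where the new generic technique announced in this paper must do its main work.
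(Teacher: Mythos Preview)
Your three-part outline has the right flavour, but two of the three parts miss the mark compared with what the paper actually does.

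First, you propose to \emph{upgrade} the hard instance to a multi-scale uniblock family hiding $\tilde{\Theta}(\log\log n)$ geometrically spaced levels. The paper does nothing of the sort: it uses the \emph{identical} even/odd uniblock instances from~\cite{chakraborty2016power} --- a single uniform block of random size $2^{2\kappa}$ (even) or $2^{2\kappa+1}$ (odd) --- with no added scales. The improvement from $\sqrt{\log\log n}$ to $\log\log n$ comes entirely from a sharper analysis, not from a harder instance. Your proposed multi-scale construction may well work, but it is an unnecessary complication, and your step~(iii) ``scale-by-scale'' information argument is accordingly not how the paper proceeds; instead the paper simply re-uses CFGM's existing indistinguishability lemma (Lemma~\ref{lem:if-good-then-small-tv-odd-even}) verbatim once the conditioning event has been established.

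Second --- and this is the real gap --- your step~(ii) frames the $\Ratio$ model as the target of a \emph{simulation reduction}: you want to show a $\COND$ tester can be replaced by a $\Ratio$ tester with $\poly(\log\log\log n)$ overhead, and you correctly flag this as the hard step. But the paper never attempts such a reduction. The actual mechanism is:
\begin{enumerate}
\item Define a $\Bad$ event as ``the run of the tester \emph{passes through} a bad node'' (rather than CFGM's ``some bad node \emph{exists} in the tree'').
\item In the $\Ratio$ model, the internal randomness of atom selection can be absorbed into the tester, yielding a distribution over decision trees each of size only $(q+1)^{2q}$; a union bound over these few nodes (Lemma~\ref{clm:prob-x-bad-u-odd-even}) shows $\Pr[\Bad]$ is $o(1)$ when $q=\tilde{o}(\log\log n)$.
\item Couple the $\COND$ and $\Ratio$ runs level-by-level on the \emph{same} decision tree $R$, showing that conditioned on not yet having visited a bad node, the next-step distributions differ by $O(1/\beta)$ (Lemma~\ref{lem:tv-small-G1-odd-even}); hence $\Pr[\Bad]$ is also $o(1)$ under $\COND$.
\end{enumerate}
So the $\Ratio$ model is an analytical device for bounding the probability of a single event, not a query model into which the tester is translated. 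Your proposal, by aiming for a general simulation, is aiming at something strictly stronger than what is needed and --- as you yourself note --- is exactly where a direct approach would get stuck.
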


We follow the same approach that we take for showing a lower bound for the equivalence testing - via the $\Ratio$ query model. While this improves the lower bound by a quadratic factor, the exact bound on the query complexity of label-invariant properties remains unknown.

\subsection{Related Work}
The equivalence testing problem has been studied extensively~\cite{batu2000testing,valiant2011testing,chan2014optimal} and a tight bound of  $\Theta(\max(n^{2/3}/\epsilon^{4/3},\sqrt{n}/\epsilon^2))$ on the query complexity is known in the basic $\SAMP$ model.
There has been considerable recent interest in several alternative powerful query models that allow tremendous savings in the number of required samples. \cite{canonne2015testing} showed an upper bound of $O(\log^5 n / \epsilon^4)$ in the $\COND$ model which was subsequently improved to $\Tilde{O}(\frac{\log \log n}{\epsilon^5})$ by~\cite{falahatgar2015faster}. On the other hand, \cite{acharya2018chasm} showed that $\Omega(\sqrt{\log \log n})$ queries are necessary for the $\COND$ model, leaving a quadratic gap (in the dependence on the domain size) between the upper and lower bound, which we settle in this work. In terms of $\epsilon$, a lower bound of $\Omega(1/\epsilon^2)$ is known~\cite{canonne2014testing}, and finding the true dependency on $\epsilon$ in the query complexity is an exciting open problem.

Onak and Sun~\cite{onak2018probability} considered a natural extension of $\SAMP$ model called the \emph{probability-revealing sample} model (in short, \textsc{PR-SAMP}), wherein in addition to returning a sample, the oracle also returns the exact probability of the sample. An even more powerful model of interest is \textsc{DUAL}~\cite{canonne2014testing,batu2002complexity} where we have access to two oracles -- $\SAMP$ that, as mentioned earlier, provides a sample from the distribution, and \textsc{EVAL} that returns the exact probability of any specified element\footnote{Note that the \textsc{PR-SAMP} oracle can only provide the probability of the sampled element, whereas the \textsc{EVAL} oracle can give the probability of any arbitrary specified element.}. Cannone and Rubinfield~\cite{canonne2014testing} showed that $O(1/\epsilon)$ queries are necessary and sufficient for the equivalence testing in the \textsc{DUAL} model. 

 Another fundamental distribution testing problem is the \emph{support size estimation} problem, where given a distribution $D$, the goal is to estimate the support size $\left|\{x \mid D(x) > 0\}\right|$. Very recently, Chakraborty, Kumar, and Meel (CKM)~\cite{CKM23} showed a tight lower bound of $\Omega(\log \log n)$ on the query complexity for the support-size estimation problem in the {\COND} model. However, the lower bound for the support size estimation does not imply any lower bound on the equivalence testing. Two distributions can have the same support size but can be arbitrarily far apart. Similarly, two distributions can be arbitrarily close to each other, but their support sizes can differ by an arbitrarily large value. For instance, consider distribution $P = (1,0, \cdots ,0)$ and $Q = (1 - \epsilon, \epsilon/K, \cdots ,\epsilon/K)$ where $\epsilon$ is any arbitrary small value and $K$ is an arbitrarily large integer. It is straightforward to see that the total variation distance between $P$ and $Q$ is $2 \epsilon$ (a small value), but their support size differs by a factor of $K$ (an arbitrarily large number). Furthermore, the proof technique of CKM does not yield any lower bound for the equivalence testing. It is also worth noting that the hard instance of CKM allows inverse exponential probability mass on an element in the distribution to achieve the lower bound, which is most often not allowed in the distribution testing setup (where we are concerned with additive error). We want to emphasize that by applying our new proof technique, we obtain a similar lower bound for the support size estimation problem, which, more importantly, holds even when it is promised that all the elements with non-zero probability have mass at least $1/n$ and thus strengthen the lower bound result of CKM (see Section~\ref{sec:support} for a brief discussion on this). 

\paragraph{Organisation of the paper.} In Section~\ref{sec:notations}, we provide the necessary notations, definitions, and important theorems that we use in the rest of the paper. In Section~\ref{sec:overview}, we provide the technical overview of our results. We first discuss the previous techniques and how we improve upon them for our results and then present a sketch of our proof technique in this section. In Section~\ref{sec:main} and in Section~\ref{sec:label}, we present the proofs of Theorem~\ref{thm:mainlb} and Theorem~\ref{thm:main-odd-even} respectively. We conclude with a discussion on the limitation of our technique in Section~\ref{sec:conclusion}.

\section{Notations and Preliminaries}
\label{sec:notations}

 
In this paper, we assume that all the distributions are defined over the set $[n]$. For any distribution $D$ over $[n]$ and for any $s\in [n]$, we will denote by $D(s)$ the probability weight of $s$ according to the distribution $D$.  Similarly, for any $A\subseteq [n]$, we denote by $D(A)$ the probability mass of the set $A$ according to the distribution $D$. In other words, $$D(A) = \sum_{a\in A} D(a).$$
The \emph{total variation distance} between two distributions $D_1$ and $D_2$ over $[n]$, denoted by $\dtv(D_1,D_2)$, is defined as $\frac{1}{2}\sum_{s\in [n]}|D_1(s) - D_2(s)|$.  If the $\dtv$ distance between two distributions is $0$, then they are \emph{equal}. 

\begin{definition}[Equivalence Testing] \label{def:equivalence}
    The \emph{equivalence testing problem} is that, given sample access to two (unknown) distributions $D_1$ and $D_2$ over $[n]$, and an $\epsilon > 0$, decide whether
    \begin{itemize}
        \item $\YES:$ $D_1$ and $D_2$ are equal, or
        \item $\NO:$ $\dtv(D_1,D_2) \ge \epsilon$ 
    \end{itemize}
    while drawing as few samples as possible. 
\end{definition}
Throughout this paper, for the purpose of proving the lower bound, we consider $\epsilon =1/4$. 

The problem of equivalence testing of distributions is one of the most fundamental problems in statistics and property testing and has been studied under various sampling models. 
In this paper, our main sampling model is the $\COND$ model. To define this formally, we first need to define the conditional distribution. 

\begin{definition}
    For a distribution $D$ over $[n]$ and a subset $A\subseteq [n]$, the conditional distribution over $A$, denoted by $D|_A$, is defined as the distribution over $A$ where for each $a\in A$ the probability mass is set to be $D(a)/D(A)$ if $D(A) > 0$, and $1/|A|$ if $D(A) =0$.
\end{definition}

\begin{definition}[$\COND$ Query Model]
\label{def:cond}
    In the $\COND$ query model (or simply $\COND$ model), the sampling algorithm/tester specifies a subset $A\subseteq [n]$ and draws a sample according to the conditional distribution $D|_A$. We denote such a conditional query by $\COND_D(A)$. Note that, in the case of an adaptive algorithm, at any point in time, the subset $A$ may depend on the samples the algorithm has previously obtained. 
\end{definition}
 In this paper, we deal with adaptive algorithms.

\subsection*{Hypergeometric Distribution}
The \emph{hypergeometric distribution$(n, K, N)$} is a probability distribution that describes the number of successes (drawn item has a specified feature) when $n$ items are drawn without replacement from a population of size $N$ containing $K$ objects with that feature. Note that when items are drawn with replacement, the distribution becomes $Binomial(n, K, N)$. 

If $X \sim Hypergeometric(n,K,N)$ then like binomial distribution, we have $\E[X] = \frac{nK}{N}$. Further, Chernoff bound for hypergeometric distribution holds similar to  binomial distributions.
\begin{lemma}
\label{lem:chernoff}
    Let $X \sim Hypergeometric(n,K,N)$ then  $\mu = \E[X] = \frac{nK}{N}$ and 
    \[
    \Pr\left[|X-\mu| \ge \lambda \mu\right] < 2exp\left(-\frac{\lambda^2 \mu }{3}\right), \quad \text{for any }  0 \le \lambda \le 1.
    \]
\end{lemma}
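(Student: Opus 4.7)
The plan is to deduce this inequality from the classical Chernoff bound for the binomial, since it is a well-known fact going back to Hoeffding that the hypergeometric distribution (sampling without replacement) is at least as concentrated as the corresponding binomial (sampling with replacement). I would not re-derive the Chernoff bound from scratch; I would instead reduce to it in two short steps.

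First I would verify the expectation. Write $X = \sum_{i=1}^{n} X_i$ where $X_i$ is the indicator that the $i$-th item drawn without replacement has the specified feature. By symmetry $\Pr[X_i=1] = K/N$ for every $i$, so linearity of expectation gives $\mu = nK/N$, independently of the joint (non-product) distribution of the $X_i$'s.

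Second, for the tail bound I would invoke Hoeffding's reduction theorem (Hoeffding 1963): if $f : \mathbb{R} \to \mathbb{R}$ is continuous and convex, then $\mathbb{E}[f(X)] \le \mathbb{E}[f(Y)]$ where $Y \sim \mathrm{Binomial}(n, K/N)$. Applying this with $f(x) = e^{tx}$ for $t \ge 0$ dominates the moment generating function of the hypergeometric by that of the binomial. Plugging this into the standard Markov--Chernoff argument,
\[
\Pr[X \ge (1+\lambda)\mu] \;\le\; e^{-t(1+\lambda)\mu}\,\mathbb{E}[e^{tX}] \;\le\; e^{-t(1+\lambda)\mu}\,\mathbb{E}[e^{tY}],
\]
and then optimizing $t$ exactly as in the binomial Chernoff bound yields $\Pr[X \ge (1+\lambda)\mu] \le \exp(-\lambda^2\mu/3)$ for $0 \le \lambda \le 1$. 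The lower tail $\Pr[X \le (1-\lambda)\mu] \le \exp(-\lambda^2\mu/3)$ is handled identically by taking $t \le 0$ (convexity of $e^{tx}$ still applies), or equivalently by replacing the "successes" population with the "failures". A union bound over the two tails gives the factor $2$ in the stated bound.

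The only non-trivial ingredient is Hoeffding's inequality comparing sampling with and without replacement; since this is a standard textbook result I would simply cite it rather than reproving it. The rest is a routine Chernoff calculation, identical to the binomial case, so I expect no real obstacle — the lemma is included purely as a preliminary tool and the proof is essentially a pointer to the classical literature.
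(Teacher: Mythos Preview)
Your proposal is correct. The paper itself does not prove this lemma at all: it is stated in the preliminaries as a known concentration fact for the hypergeometric distribution, with no argument given. Your sketch via Hoeffding's reduction (sampling without replacement is at least as concentrated as with replacement, hence the binomial Chernoff bound transfers) is exactly the standard justification one would cite, so your write-up is consistent with, and in fact more detailed than, what the paper does.
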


\subsection*{Extension of Yao's Lemma}

One useful tool for proving the lower bound on the query complexity of various problems is the extension of Yao's lemma (formally proved in~\cite{fischer04}). Let $\IY$ and $\IN$ be two distributions over the $\YES$-instances and $\NO$-instances respectively. We use the notation $x \in_R \IY$ (resp., $x \in_R \IN$) to denote that $x$ is drawn uniformly at random from $\IY$ (resp., $\IN$). Let a single query return an element from the set $[n]$. For a deterministic query algorithm $\Alg$ that makes $q$ adaptive queries\footnote{We can assume without loss of generality that an adaptive algorithm that makes at most $q$ queries actually makes exactly $q$ queries.} note that all the answers to the $q$ queries is an element of $[n]^q$. From now on, we consider a tiny constant $\delta = 1/100$.



\begin{theorem}[\cite{fischer04}]
\label{thm:extyao}
If for a deterministic algorithm $\Alg$ that makes $q$ adaptive queries to test a property $\mathcal{P}$, and for an event $\Bad(\Alg, x)$ (that depends on the algorithm $\Alg$ and the input) the following holds 
\begin{enumerate}
\item $\Pr_{x\in_R \IY}[\Bad(\Alg, x)] + \Pr_{x\in_R \IN}[\Bad(\Alg, x)] \leq \delta/2$
\item For all $\sigma\in [n]^q$,
\begin{align*}
\Pr_{x\in_R \IY} \Big[\mbox{ the answers to the $q$} & \mbox{ queries made by $\Alg(x)$ is $\sigma$} \mid \overline{\Bad(\Alg, x)}\Big] \\
\leq & \frac{3}{2}\Pr_{x\in_R \IN} \left[\mbox{ the answers to the $q$ queries made by $\Alg(x)$ is $\sigma$ } \mid \overline{\Bad(\Alg, x)} \right],\\
\end{align*}
\end{enumerate}
then the adaptive query complexity of the property $\mathcal{P}$ is $\Omega(q)$.
\end{theorem}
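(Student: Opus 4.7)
My plan is to invoke Yao's minimax principle with the uniform mixture $\mu = \tfrac{1}{2}\IY + \tfrac{1}{2}\IN$ as the hard distribution over inputs. A deterministic $q$-query tester $\Alg$ has its behavior determined by the $q$-tuple of answers $\sigma \in [n]^q$ it sees, so letting $S\subseteq [n]^q$ be the set of transcripts on which $\Alg$ outputs $\YES$, and writing $p_Y, p_N$ for the transcript distributions induced by $x\in_R\IY$ and $x\in_R\IN$ respectively, the success probability of $\Alg$ on $\mu$ equals $\tfrac{1}{2}\bigl(1 + p_Y(S) - p_N(S)\bigr)$. Thus to conclude $\Omega(q)$ it suffices to bound the distinguishing advantage $p_Y(S) - p_N(S)$ away from $1$ uniformly in $S$.

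Next I will isolate the event $\Bad$ using condition (1). For any $S$,
\[
p_Y(S) \;\leq\; p_Y(S\mid\overline{\Bad}) + \Pr_{x\in_R\IY}[\Bad], \qquad p_N(S) \;\geq\; p_N(S\mid\overline{\Bad})\bigl(1 - \Pr_{x\in_R\IN}[\Bad]\bigr),
\]
and condition (1) upgrades this to
\[
p_Y(S) - p_N(S) \;\leq\; p_Y(S\mid\overline{\Bad}) - \bigl(1 - \tfrac{\delta}{2}\bigr)\,p_N(S\mid\overline{\Bad}) + \tfrac{\delta}{2}.
\]
Summing the pointwise inequality of condition (2) over $\sigma\in S$ gives $p_Y(S\mid\overline{\Bad}) \leq \tfrac{3}{2}\,p_N(S\mid\overline{\Bad})$, and substituting back (together with $p_N(S\mid\overline{\Bad})\leq 1$) yields
\[
p_Y(S) - p_N(S) \;\leq\; \bigl(\tfrac{1}{2} + \tfrac{\delta}{2}\bigr)\,p_N(S\mid\overline{\Bad}) + \tfrac{\delta}{2} \;\leq\; \tfrac{1}{2} + \delta.
\]
Consequently every deterministic $q$-query tester succeeds on $\mu$ with probability at most $\tfrac{3}{4} + O(\delta)$.

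Finally I will close with Yao's principle combined with standard amplification: any randomized adaptive tester for $\mathcal{P}$ with two-sided error $\leq 1/3$ can be boosted to error $\leq 1/100$ using $O(1)$ independent repetitions and a majority vote, and the boosted tester would violate the $\tfrac{3}{4}+O(\delta)$ success ceiling on $\mu$ unless it made strictly more than $q$ queries. Hence the original tester uses $\Omega(q)$ queries. The only real obstacle, as I see it, is conceptual rather than computational: condition (2) is a \emph{pointwise multiplicative} bound, not a bound on total variation distance in the usual additive sense, yet it translates cleanly into the set-wise inequality $p_Y(S\mid\overline{\Bad}) \leq \tfrac{3}{2}\,p_N(S\mid\overline{\Bad})$, and this is exactly what caps the algorithm's distinguishing advantage at $\tfrac{1}{2}+O(\delta)$ and thereby delivers the lower bound.
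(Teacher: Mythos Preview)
The paper does not actually prove Theorem~\ref{thm:extyao}; it merely cites it from~\cite{fischer04} and uses it as a black box (see the sentence ``One useful tool\ldots formally proved in~\cite{fischer04}'' immediately preceding the statement). So there is no in-paper proof to compare against.

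That said, your argument is the standard one and is correct. The chain
\[
p_Y(S) - p_N(S) \;\le\; p_Y(S\mid\overline{\Bad}) - p_N(S\mid\overline{\Bad}) + \tfrac{\delta}{2} \;\le\; \tfrac{1}{2}\,p_N(S\mid\overline{\Bad}) + \tfrac{\delta}{2} \;\le\; \tfrac{1}{2} + \tfrac{\delta}{2}
\]
(your version carries an extra harmless $\tfrac{\delta}{2}$ term) gives success probability at most $\tfrac{3}{4}+O(\delta)$ on the mixture, and the amplification/Yao step is exactly right. One small clarification worth making explicit: as written the theorem hypothesizes conditions~(1)--(2) for ``a'' deterministic $q$-query algorithm, but the intended reading---and the way the paper applies it via Lemmas~\ref{lem:Bad} and~\ref{lem:if-good-then-lb}---is that they hold for \emph{every} such algorithm (with a possibly algorithm-dependent $\Bad$ event). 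Your proof already uses this reading implicitly; the amplification step in particular needs the hypotheses to hold for the $O(1)$-times-repeated tester as well, which is automatic in the paper's application since the conditions are established for all $q' \le q$.
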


\section{Technical Overview}\label{sec:overview}

\subsection{Previous Approach}
\label{sec:prev}
In the context of distribution testing, the $\COND$ model is known to be significantly more powerful than the $\SAMP$ model. Consequently, establishing a lower bound for the $\COND$ model is an immense challenge. One of the reasons it is so difficult to capture a tester's power in this model is that it allows the conditioning of arbitrary-sized sets in an adaptive manner. To overcome this challenge, ~\cite{chakraborty2016power} introduced the concept of \emph{core-adaptive testers} for label-invariant properties. Roughly speaking, these testers do not consider the samples' labels into account when making decisions; instead, they rely on relations between the samples, such as whether two samples are the same or different. Quite surprisingly, they showed that the class of core-adaptive testers is as powerful as general testers in terms of testing label invariant properties. 

Later,~\cite{acharya2018chasm} further built upon this idea by considering two classes of pairs of distributions. The first class consists of pairs in which both distributions are identical ($\YES$-instance), while the second class consists of pairs that are far apart in terms of the total variation distance ($\NO$-instance). The authors proved that any core-adaptive tester must make at least $q = \Omega(\sqrt{\log \log n})$ queries; otherwise, the distributions from the $\YES$ and $\NO$ instances become indistinguishable from the tester's point of view. The idea of the core-adaptive tester essentially helps in upper bounding the size of the corresponding decision tree $R$ by $2^{O(q^2)}$. \cite{acharya2018chasm} argued that to distinguish between the $\YES$ and $\NO$ instances, the number of nodes present in $R$ must be $\Omega(\sqrt{\log n})$, and as a consequence, $q \ge \Omega(\sqrt{\log \log n})$. While this lower bound does not match the current best upper bound, it is optimal with respect to the proof technique, which is also emphasized in the survey~\cite{canonne2020survey} as:

``\emph{The fact that both the lower bounds are similar is not a coincidence, but rather inherent to the technique used. Indeed, the core adaptive tester approaches both proofs rely on cannot get past this  
$\sqrt{\log \log n}$ barrier, which derives from the size of the decision tree representing the tester (namely, $2^{O(q^2)}$ for a $q$-query tester)}". 

Our primary contribution lies in developing a novel and generic technique that overcomes the limitations of previous proof techniques and enables us to break the $\sqrt{\log \log n}$ barrier in the $\COND$ model. Our technique does not only apply to the equivalence testing problem, but we believe it can be applied to various other problems as well. For instance, we show that it also provides $\Tilde{\Omega}(\log \log n)$ lower bound to the problem of testing another label-invariant property called the \emph{even-uniblock property}, introduced by~\cite{chakraborty2016power}. 






\subsection{Core-adaptive testers}
\label{sec:core-adapt}
As in previous approaches, core-adaptive testers also play a crucial role in our proof technique. Thus let us start by describing the core-adaptive testers. Any general algorithm/tester for equivalence testing between two distributions $D_1$ and $D_2$ that makes at most $q$ $\COND$ queries, at any step $1 \le i \le q$, chooses $k \in \{1,2\}$, a set $A_i \subseteq [n]$ and places the conditional query $\COND_{D_k}(A_i)$, and then receives a sample $s_i \in A_i$ drawn according to the conditional distribution $D_k|_{A_i}$. Note that if the algorithm queries both distributions $D_1$ and $D_2$ on a set $A$, then we count it as two separate queries.

In~\cite{chakraborty2016power}, it is shown that without loss of generality, a general algorithm/tester for a label invariant property (such as equivalence testing) can be assumed to belong to a smaller class of testers called \emph{core-adaptive testers}. To formally define these testers, we first give a few definitions.
\begin{definition}[Atom]
    Given a family of sets $\mcA = \{A_1,\dots,A_i\},$ the \emph{atoms} generated by  $\mcA$, denoted by $\At(\mcA)$, are (at most) $2^i$ distinct sets of the form $\cap_{j=1}^{i}C_j$ where $C_j \in \{A_j,[n]\setminus A_j\}$.
\end{definition}
 For example, if $i =2$, then $\At(A_1, A_2) = \{A_1\cap A_2, A_1\setminus A_2, A_2\setminus A_1, \overline{A_1\cup A_2}\}$. Given a sequence of query-sample pairs  $((A_1,s_1),\dots,(A_i,s_i))$,  all the label invariant information  about the sample $s_i$ can be captured by the \emph{configuration} of $s_i$, defined below.


\begin{definition}[Configuration of $s_i$]
\label{def:configuration}
    Given a sequence of query-sample pairs $((A_1,s_1),\dots, (A_i,s_i))$, a \emph{configuration} of $s_i$ with respect to $((A_1,s_1),\dots, (A_i,s_i))$, denoted by $c_i$, consists of $2(i-1)$ bits, indicating for each $1 \le \ell < i$ whether
    \begin{enumerate}
        \item \label{step:confg-1} $s_i = s_\ell$ or $s_i \neq s_\ell$, and
        \item \label{step:confg-2} $s_i \in A_\ell$ or not.
    \end{enumerate}
\end{definition}
Note that a configuration of $s_i$ contains all the label-invariant information about $s_i$ -- whether collisions have happened (and if yes, then with which sample) and which unique atom in $\At(\mcA)$ contains the sample $s_i$.

\begin{definition}[Core-Adaptive Tester]
\label{def:core-adaptive}
    A \emph{core-adaptive tester} for a pair of distributions is an algorithm $T$ that does the following:
    \begin{enumerate}
     \item Fixes $k \in \{1,2\}$ (fixes the distribution $D_1$ or $D_2$ on which to perform the conditional sampling query).
        \item To make $i$-th query, based only on its own internal randomness and the configuration of the previous samples $(c_1,\dots,c_{i-1})$, $T$ provides:
        \begin{enumerate}
            \item \label{step:core-one} A (non-negative) integer $k^A_i$ for each $A \in \At(A_1,\dots, A_{i-1})$ between $0$ and $|A \setminus \{s_1,\dots,s_{i-1}\}|$ (how many fresh -- not already seen -- elements of each particular atom should be included in the next query),
            \item A set $O_i \subseteq \{s_1,\dots, s_{i-1}\}$ (which of the samples $s_1,\dots,s_{i-1}$   will be included in the next query). 
        \end{enumerate}
        \item Based on these specifications, the tester $T$  constructs the $i$-th query set $A_i$  by
        \begin{enumerate}
            \item Drawing uniformly at random, a set $U_i$ from the set
            \begin{align}
                \left\{U \subseteq [n]\setminus \{s_1,\dots,s_{i-1}\} \mid \forall A \in \At(A_1,\dots,A_{i-1}), |U_i \cap A| = k^A_i\right\}
            \end{align}
            i.e., among all the sets containing only ``fresh elements", whose intersection with each atom contains exactly as many elements as $T$ specifies (at Step~\ref{step:core-one}).
            \item $A_i := O_i \cup U_i$.
        \end{enumerate}
       
        \item Samples from $\COND_{D_k}(A_i)$.
        \end{enumerate}
        After $q = q(\epsilon,n)$ queries, the tester $T$ returns ACCEPT or REJECT based  on the configurations  $(c_1,\dots,c_q)$.
\end{definition}
From now on, for brevity, we denote
\[
\At(U_i) : = \left\{U_i \cap A \mid A \in \At(\mathcal{A}_{i-1})\right\}.
\]


It is easy to observe that the conditional sampling queries made by a core-adaptive tester can be viewed in an equivalent way as follows.

\begin{observation}\label{obs:cond-new-defn}
For any distribution $D_k$ ($k \in \{1,2\}$) and $i$-th query set $A_i = O_i \cup U_i$ ($i \in [q]$), the sample obtained from $\COND_{D_k}(A_i)$ can be viewed as:
\begin{enumerate}
    \item First, pick an element $e \in O_i \cup \{U_i\}$ (where $O_i \cup \{U_i\}$ is the set consisting the elements of $O_i$ and the set $U_i$ itself) such that each $j \in O_i$ is picked with probability $\frac{D_k(j)}{D_k(O_i \cup U_i)}$ and $\{U_i\}$ is picked with probability $\frac{D_k(U_i)}{D_k(O_i \cup U_i)}$.    
    \item If $e \in O_i$, then the sample obtained from $\COND_{D_k}(A_i)$ is $e$.
    \item  Otherwise (i.e., if $e = \{U_i\}$), then pick $s' \sim  \COND_{D_k}(U_i)$, and $s'$ is the sample obtained from $\COND_{D_k}(A_i)$. Note that this is equivalent to
    \begin{enumerate}
        \item First picking an atom $V \in \At(U_i)$ with probability $\frac{D_j(V)}{D_j(U_i)}$, and
        \item Then returning a sample $s' \sim  \COND_{D_k}(V)$. 
    \end{enumerate}
\end{enumerate}

\end{observation}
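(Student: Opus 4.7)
The plan is to verify the observation by direct computation: show that the proposed multi-stage sampling procedure induces exactly the conditional distribution $D_k|_{A_i}$ on the set $A_i = O_i \cup U_i$. This is essentially the law of total probability (chain rule for conditional distributions), and the argument is routine bookkeeping — there is no deep idea hiding here.

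Concretely, I would first handle the generic case $D_k(A_i) > 0$. For any $j \in O_i$, the target probability under $\COND_{D_k}(A_i)$ is $D_k(j)/D_k(O_i \cup U_i)$, which is exactly the probability that Step~1 picks $j$ and the procedure terminates at Step~2. For $j \in U_i$, the proposed process produces $j$ precisely when Step~1 picks the token $\{U_i\}$ (probability $D_k(U_i)/D_k(O_i \cup U_i)$) and then the draw from $\COND_{D_k}(U_i)$ returns $j$ (probability $D_k(j)/D_k(U_i)$); these factors cancel to $D_k(j)/D_k(O_i \cup U_i)$, matching the target. For the inner decomposition by atoms, observe that because $\At(A_1,\dots,A_{i-1})$ partitions $[n]$, the family $\At(U_i) = \{U_i \cap A : A \in \At(A_1,\dots,A_{i-1})\}$ partitions $U_i$; hence for any $j$ lying in the unique atom $V \in \At(U_i)$,
\[
\frac{D_k(V)}{D_k(U_i)} \cdot \frac{D_k(j)}{D_k(V)} = \frac{D_k(j)}{D_k(U_i)},
\]
which is the correct conditional probability on $U_i$, completing the equivalence.

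The only subtleties requiring care are edge cases where some of the involved sets have zero mass. If $D_k(U_i) = 0$ (or similarly for some atom $V$), the factor $D_k(U_i)/D_k(O_i \cup U_i)$ is already zero, so the inner $\COND_{D_k}(U_i)$ call is never reached and any convention chosen inside is harmless. When $D_k(A_i) = 0$ altogether, $\COND_{D_k}(A_i)$ is defined to be uniform over $A_i$, and one interprets each inner step as uniform on the corresponding subset; this stays consistent because uniformity on a finite set decomposes into uniformity on each block of a partition weighted by the block's relative size. I do not anticipate any real obstacle — the observation is essentially a rewriting of the definition of conditional sampling, and its value lies not in the proof but in providing a convenient recursive viewpoint for the subsequent analysis (separating ``already-seen'' elements $O_i$ from ``fresh'' elements $U_i$, and then further decomposing the fresh part along the atomic structure).
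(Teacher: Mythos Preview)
Your proposal is correct and is precisely the routine verification the paper has in mind; the paper itself does not give a proof, simply stating the result as an ``easy observation'' immediately after the definition of core-adaptive testers. Your handling of the degenerate zero-mass cases is more careful than anything the paper spells out, but otherwise there is nothing to compare.
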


It was shown in~\cite{chakraborty2016power} that general testers are equivalent to core-adaptive testers in terms of testing a label invariant property.

\begin{theorem}[\cite{chakraborty2016power}]
    \label{thm:general-core-adaptive-equiv}
    If there exists a $q$-query general tester for any label invariant property, then there also exists a $q$-query core adaptive tester.
\end{theorem}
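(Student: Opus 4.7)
The plan is to prove this equivalence via a symmetrization argument that exploits label invariance. Given a general $q$-query tester $T$ for a label-invariant property $\mathcal{P}$, the strategy is to show that $T$'s acceptance probability is preserved when we replace the true input by a uniformly random relabeling of itself, and then to realize the resulting randomized simulation as a core-adaptive tester by revealing the relabeling lazily.

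Concretely, I would introduce an intermediate \emph{randomly relabeled} tester $T^{\pi}$ as follows: sample a uniformly random permutation $\pi$ of $[n]$, and then run $T$ internally while interpreting its queries through $\pi$. Whenever the internal copy of $T$ wishes to place a query $\COND_{D_k}(B)$, the outer procedure $T^{\pi}$ instead places $\COND_{D_k}(\pi^{-1}(B))$ and feeds $\pi$ applied to the returned sample back to $T$. Since $\mathcal{P}$ is label-invariant, $D$ satisfies $\mathcal{P}$ iff $D^{\pi}$ does (and similarly for $\epsilon$-farness), so $T^{\pi}$ accepts with exactly the same probability as $T$, on every input.

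Next, I would implement $T^{\pi}$ lazily: only the values of $\pi$ on previously returned samples, together with the atom into which each fresh sample falls, affect the subsequent behavior of the combined system. Conditioning on the configurations $(c_1,\dots,c_{i-1})$ (Definition~\ref{def:configuration}), one can argue inductively that the posterior distribution of $\pi$ on the not-yet-revealed part of $[n]$ remains uniform over permutations compatible with the revealed data, and in particular factorizes into independent uniform permutations on each atom of $\At(A_1,\dots,A_{i-1})$. This lets the simulator, at step $i$, deterministically compute, from $(c_1,\dots,c_{i-1})$ and its private randomness alone, the numbers $k_i^A$ and the set $O_i \subseteq \{s_1,\dots,s_{i-1}\}$ prescribed by $T$'s next move, and then draw $U_i$ uniformly over subsets with the required atom-intersection sizes — which is precisely the core-adaptive recipe of Definition~\ref{def:core-adaptive}. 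A short induction shows that the joint law of configurations produced this way coincides with the law under $T^{\pi}$.

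The main obstacle is formalizing the lazy revelation: one must show that after $i{-}1$ steps, the residual randomness of $\pi$, conditioned on the full transcript seen by the simulator, truly is a uniform product of symmetric groups on the atoms — so that sampling $U_i$ uniformly as in the core-adaptive model reproduces the correct distribution on fresh elements and configurations. The delicate point is that each new query simultaneously refines the atom structure and consumes a fresh element; one has to check that refinement does not introduce correlations that would bias the distribution of $U_i$. Once this invariant is established, combining it with Observation~\ref{obs:cond-new-defn} (which decouples how a conditional sample is drawn from $O_i \cup U_i$ into an atom-selection step followed by a within-atom conditional sample) and the label-invariance of $\mathcal{P}$ yields a core-adaptive tester making exactly $q$ queries with the same acceptance probability as $T$.
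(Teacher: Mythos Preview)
The paper does not prove this theorem at all; it is quoted verbatim from \cite{chakraborty2016power} and used as a black box. Your symmetrization-plus-lazy-revelation argument is exactly the approach taken in \cite{chakraborty2016power}, so your proposal is correct and matches the original source's proof.
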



A  core adaptive tester $T$, for any $i \le q$,  maps a sequence of configurations $(c_1,\dots,c_{i-1})$ of the received samples so far, via a  function $f_T$, to a pair  $(O_i, U_i)$ (where $O_i \subseteq  \{s_1,\dots,s_{i-1}\}$ and $U_i \subseteq [n]\setminus \{s_1,\dots,s_{i-1}\}$) which determines the $i$-th query set $A_i = O_i \cup U_i$.

In a natural way, a tester $T$ can be fully described by a decision tree $R$. Formally, (the edges of) a path (from root) to any node $v$ at depth $i$ is associated with a sequence of configurations $(c_1,\dots,c_i)$, and the node $v$ is labeled with a pair $(O_v, U_v) = f_T(c_1,\dots,c_{i-1})$ (where $O_v \subseteq \{s_1,\dots,s_{i-1}\}$ and $U_v \subseteq [n] \setminus \{s_1,\dots,s_{i-1}\}$)  which determines the next query set $A_{v} = O_v \cup U_v$. Further, for every possible value of the configuration $c_{i}$ of the sample from $A_{v}$, there is a corresponding child of the node $v$, and the corresponding edge is labeled by the value of the configuration $c_{i}$. Finally, the leaves of $R$ are labeled by either ACCEPT or REJECT. 

We now define a few notations that we will use throughout the paper. We will use $(O_v, U_v)$ for the label of a node $v$ and $A_v = O_v \cup U_v$ for the corresponding query set. Let the nodes of the path from the root to the node $v$ be $v_1,\dots,v_{i}$ where $v_1$ is the root, and $v_i$ is the node $v$. We will use  $\bm{A}_v = (A_{v_1},\dots, A_{v_i})$ for the sequence of the query sets corresponding to the nodes in this path and $\bm{S}_v = (s_{v_1},\dots,s_{v_i})$ to denote the set of samples obtained in the node from the root to $v$. Further, $\At(\bm{A}_v) := \At(A_{v_1},\dots,A_{v_i})$. Note that the set of all the relevant atoms for the tester (described by a decision tree $R$) is  $\cup_{v \in R} \At(\bm{A}_v) $, which we will denote by $\At(R)$.

\subsection{High-level framework}
\label{subsec:highlevelframe}
To demonstrate a lower bound, one of the standard approaches (using Yao's minimax lemma) is, to begin with, two sets of pairs of distributions. The first set, $\YES$ instances, contains pairs of identical distributions. The second set, called $\NO$ instances, has pairs of distributions separated by a (total variation) distance of at least $1/4$ in the total variation distance. The $\YES$ and $\NO$ instances that we consider in this paper (formally defined in Section~\ref{sec:hardinstance}, informally later in this subsection) are slight modifications of the instances considered in~\cite{acharya2018chasm}.\footnote{The change is to simplify our proof, and our proof technique works for their exact instances as well, albeit with certain modifications.} To prove a lower bound on the number of $\COND$ queries of $\Omega(\log \log n)$, it suffices to show that no tester can successfully differentiate between whether an input pair of distributions are drawn from the $\YES$ and $\NO$ instances (with a high probability) unless the tester makes at least $\Omega(\log \log n)$ queries. In other words, we need to show that if the input pair of distributions are drawn from the $\YES$ and $\NO$ instances, the respective distributions on the leaves of the tester are close to each other in the total variation distance if the number of queries made is $q < o(\log \log n)$.

Recall, as mentioned in Section~\ref{sec:core-adapt}, we only focus on proving lower bounds concerning the core-adaptive testers. We need to understand how the path traversed by the tester on the (corresponding) decision tree $R$ depends on the input pair of distributions from $\YES$ and $\NO$ cases. Note the height of $R$ is the number of queries made, i.e., $q$. At any given node $v$, the next node the tester reaches depends on the outcome of the $\COND$ query associated with that node. Suppose we can prove that for any node $v$, the distributions (for $\YES$ and $\NO$ cases respectively) over the outcomes are $\eta$-close in total variation distance. In that case, the overall total variation distance can be at most $q \eta$ since $q$ is the height of the decision tree.
We refer to a node $v$ as \emph{good} if the distributions over the outcomes are close in total variation corresponding to $\YES$ and $\NO$ cases, and \emph{bad} otherwise (more details on good and bad nodes are provided later in this subsection).

As highlighted in Section~\ref{sec:prev}, by~\cite{acharya2018chasm}, the probability that there exists a bad node in $R$ when input is drawn from either $\YES$ or $\NO$ instances  is $\frac{|R|\sqrt{\log n}}{\log n} = o(1)$ only when $|R|  = o(\sqrt{\log n})$. Since we have $|R| = 2^{q^2}$, if  $q = o(\sqrt{\log \log n})$, then there are no bad nodes (with high probability). If there are no bad nodes, indistinguishability follows.

Thus, the bottleneck to show a stronger lower bound will be to reduce $|R|$ (in terms of $q$), which is impossible since the bound is already tight. This is where our novel idea of "decision tree specification" using weaker $\COND$ queries comes into play.

\subsubsection*{Decision Tree Sparsification using $\Ratio$ query}
Let us introduce a new weaker query model called $\Ratio$ query.

\begin{definition}[$\Ratio$ query model]\label{def:ratio-query}
      For any $i$, given the $i$-th query set $A_i = O_i \cup U_i$ and $k \in \{1,2\}$, the weak conditioning query $\Ratio_{D_k}(A_i)$ does the following:
    \begin{enumerate}
        \item \label{step:ratio-first}It picks an element $e$ in $O_i \cup \{U_i\}$ (elements of $O_i$ and the set $U_i$ itself) with probability $\frac{D_j(e)}{D_j(O_i \cup U_i)}$. 
        \item If $e \in O_i$, then the $i$-th sample $s_i = e$. 
        \item Otherwise (i.e., if $e = \{U_i\}$), then
        \begin{enumerate}
            \item \label{step:ratio-second}
            An atom $V \in \At(U_i)$ is picked with probability $\frac{|V|}{|U_i|}$.
            \item \label{step:ratio-third}The sample $s_i$ is generated by the $s_i \sim \COND_{D_k}(V)$. 
    \end{enumerate}
    \end{enumerate}
\end{definition}

Notice the difference between a $\Ratio$ and $\COND$ query is only in Step~\ref{step:ratio-second} (see Observation~\ref{obs:cond-new-defn}). In $\COND$, an atom $V \in \At(U_i)$ is picked with probability $\frac{D_k(V)}{D_k(U_i)}$ whereas in $\Ratio$, it is picked with probability $\frac{|V|}{|U_i|}$ (which is independent of input distributions).

Fix a  decision tree $R$. 
Let $T(\COND)$ and $T(\Ratio)$ denote the tester $T$ that uses the decision tree $R$ (to determine the next node/query) when given access to $\COND$ and $\Ratio$ queries, respectively. We compare the behaviors of these two testers. Before doing so, we would like to emphasize that whether a node is good or bad depends solely on $R$ and the input distributions, it does not depend on oracle access of tester $T$ (we describe the good and bad nodes in the next subsection).
 
For $T(\COND)$, the randomness in picking at atom $V \in \At(U_i)$ (in Step~\ref{step:ratio-second}) is external (i.e., depends on the distribution), whereas, for $T(\Ratio)$, this randomness is internal. Thus a randomized tester with $\Ratio$ oracle can simulate this internal randomness. Any randomized algorithm can be seen as a distribution over its (deterministic) instantiations (obtained by fixing its random choices). Hence, the tester $T(\Ratio)$ can be seen as a distribution over a set of deterministic algorithms such that in each deterministic algorithm, the tester picks a fixed atom $V \in \At(U_i)$ for each query $i \in [q]$. The crucial aspect is that now, the number of possible outcomes (at each node) is at most $|O_i| \le q$ (as compared to the previous $2^q$ that has $\COND$ access, one for each possible atom in $U_i$) resulting in at most $q^q$ nodes (instead of $2^{q^2}$) in the decision tree. Thus, for each decision tree, we can show that if $q = o(\log \log n)$, there are no bad nodes with high probability. Since the original tester is a distribution over these deterministic counterparts, it is easy to argue that the $T(\Ratio)$ does not pass through a bad node either with high probability if $q = o(\log \log n)$. Note that this does not contradict what we said earlier that the previous analysis is tight --- we show that ``the tester does not pass through a bad node with high probability" as opposed to ``there are no bad nodes". 

However, so far, we argue the results for $T(\Ratio)$, but ultimately we want it to hold for $T(\COND)$. Surprisingly we show that the distributions of the run of a tester are close in total variation distance for both cases. The following theorem is the heart of our proof.
 \begin{lemma}[Informal Statement of Lemma~\ref{lem:good1-high-prob}]
     \label{thm:main1}
     For the instances we consider, the distributions on leaves of the decision tree $R$ are close in total variation distance, for the cases when the tester is provided access to $\COND$ queries and $\Ratio$ queries respectively, unless $q = \Tilde{\Omega}(\log \log n)$. 
 \end{lemma}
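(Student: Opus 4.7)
The plan is to couple the executions of the tester on its decision tree $R$ under $\COND$ and under $\Ratio$ using shared internal randomness. Both oracles use the same rule (weighted by $D_k$) for choosing between $O_i$ and $\{U_i\}$ in Step~\ref{step:ratio-first}, and when sampling within a fixed atom $V$ both reduce to $\COND_{D_k}(V)$ (Step~\ref{step:ratio-third}). The only point of divergence is Step~\ref{step:ratio-second}, where an atom $V\in\At(U_i)$ is picked either with probability $D_k(V)/D_k(U_i)$ under $\COND$ or with probability $|V|/|U_i|$ under $\Ratio$. Consequently, the total variation distance between the leaf distributions of the two processes is upper bounded by the expected sum, along the root-to-leaf path, of the TV distances between these two atom-selection distributions.

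Next, I would call a node $v$ of $R$ \emph{good} if, for both $k\in\{1,2\}$ and every atom $V\in\At(U_v)$,
\[
\bigl|D_k(V)/D_k(U_v)-|V|/|U_v|\bigr|\le \eta,
\]
for a parameter $\eta=1/\poly(\log\log n)$. The hard instance from Section~\ref{sec:hardinstance} places uniform mass within each ``bucket'' of the domain, so $D_k(U_v\cap A)$ equals a weighted sum, over buckets $b$, of $(\text{mass of }b)\cdot |U_v\cap A\cap b|/|b|$. Since $U_v$ is drawn uniformly conditioned on the prescribed intersection sizes $k^A_v$ (see Definition~\ref{def:core-adaptive}), each cell count $|U_v\cap A\cap b|$ is hypergeometric, and Lemma~\ref{lem:chernoff} yields sharp concentration around its mean whenever that mean is not too small. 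Tuning the parameters inherited from~\cite{acharya2018chasm}, I expect that with probability $1-\rho$ over the draw of $U_v$ every cell of non-negligible mass concentrates simultaneously, making $v$ good, while cells of negligible mass contribute only $o(\eta)$ to the ratio and are absorbed into the slack.

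Given the good event at a visited node $v$, the two atom-selection distributions are $O(\eta)$-close in TV, so the coupling of the two oracles fails at step $v$ with probability at most $O(\eta)$. Along a single root-to-leaf path there are exactly $q$ nodes, so a union bound gives that (i) every visited node is good except with probability $q\rho$, and (ii) conditioned on this, the coupling succeeds except with probability $O(q\eta)$. Picking $q=\Tilde{O}(\log\log n)$ and choosing $\eta,\rho=1/\poly(\log\log n)$ with appropriate slack keeps $q\rho+O(q\eta)=o(1)$, which is precisely the conclusion of Lemma~\ref{lem:good1-high-prob}. The same argument is then run separately for the $\YES$ and $\NO$ distributions over inputs.

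The main obstacle is that $U_v$ is not uniform over the fresh elements of $[n]$, but uniform subject to fixed intersection sizes with \emph{every} previous atom in $\At(\bm{A}_{v})$, and a core-adaptive tester can deliberately sculpt small atoms by repeatedly bisecting large ones. Such tiny cells break hypergeometric concentration directly. The technical heart of the argument will therefore be to show that, under the bucket structure of the hard instance, tiny cells either carry vanishing mass (and hence contribute only to the $o(\eta)$ slack above) or are encountered with only $o(1/q)$ probability along any fixed path, exploiting quantitatively the gaps between consecutive bucket sizes and the uniformity of mass within each bucket. This delicate book-keeping over conditional hypergeometric concentrations across all atom-by-bucket cells is where I expect the bulk of the proof to live.
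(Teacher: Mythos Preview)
Your coupling framework is correct and matches the paper's Lemma~\ref{lem:tv-small-G1}: conditioned on the run never touching a bad node, the $\COND$ and $\Ratio$ processes differ only in the atom-selection step, and at good nodes those two distributions over $\At(U_v)$ are $O(1/\gamma)$-close, so the leaf distributions are $O(q/\gamma)$-close.

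The genuine gap is in how you establish that the visited nodes are good. You write that a union bound over the $q$ nodes along the realized path gives failure probability $q\rho$, but this is not justified: which node is visited at step $i$ is determined by the oracle's previous answers, which depend on the input distribution, which is in turn correlated with whether that node is bad (goodness in Definition~\ref{def:good node} depends on $\kappa$ through the thresholds $|A|b\rho^j/n$). Bounding $\Pr[v\text{ bad}\mid v\text{ visited}]$ therefore requires controlling the posterior on $\kappa$ given the entire path so far, which you do not attempt. Moreover, the randomness you invoke---``over the draw of $U_v$''---is not the relevant one: for a deterministic core-adaptive tester the sets $U_v$ are fixed functions of the configuration history, and the per-node bad probability in Claim~\ref{clm:prob-x-bad-u} is over the random choice of $\kappa$, not over $U_v$. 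Your closing paragraph acknowledges that adversarially sculpted small atoms break direct concentration, but the proposed fix (``tiny cells either carry vanishing mass or are encountered with $o(1/q)$ probability'') is exactly the statement that needs the missing machinery. Falling back to a union bound over \emph{all} nodes of $R$, of which there are $2^{\Theta(q^2)}$, only recovers $q=\Omega(\sqrt{\log\log n})$.

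The paper's resolution---precisely the idea needed to break the $\sqrt{\log\log n}$ barrier---is the sparsification in Lemma~\ref{lem:ratio-good}. Under $\Ratio$, the atom selection in Step~\ref{step:ratio-second} is input-independent and can be absorbed into the tester's internal coins. Fixing those coins, every remaining branching is only over $O_i\cup\{U_i\}$, so the fan-out is at most $q+1$ and the resulting deterministic tree has at most $(q+1)^{2q}$ nodes. Now a \emph{global} union bound over all nodes of this small tree, together with the per-node estimate of Claim~\ref{clm:prob-x-bad-u}, shows that with probability $1-o(1)$ over $\kappa$ no node anywhere in the tree is bad; averaging back over the internal coins gives $\Pr[\G_1]=1-o(1)$ under $\Ratio$. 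Only then does the per-step coupling (your argument, the paper's Lemma~\ref{lem:tv-small-G1}) transfer ``no bad node visited'' from $\Ratio$ to $\COND$. Your proposal contains the transfer step but not the sparsification step that makes it applicable.
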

 
The above theorem is pivotal because, by only an additional $o(1)$ probability, we can show that even $T(\COND)$ does not pass through a bad node. Finally, in Lemma~\ref{lem:if-good-then-lb}, we show that conditioned on the event that the tester does not pass through a bad node, the $\YES$ and $\NO$ instances are indistinguishable unless  $q = \Tilde{\Omega}(\log \log n)$.

\begin{lemma}[Informal Statement of Lemma~\ref{lem:if-good-then-lb}]
     \label{thm:main}
     The distributions on leaves of the decision tree $R$ are close in total variation distance for $\YES$ and $\NO$ instances in $T(\COND)$  unless $q = \Tilde{\Omega}(\log \log n)$. 
 \end{lemma}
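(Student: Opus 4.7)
The plan is to combine the previous lemma (which tells us the tester's root-to-leaf path in $R$ stays in good nodes with probability $1-o(1)$ under both priors) with a standard chain-rule / coupling argument to control the leaf distributions. Let $\Good$ denote the event that running $T(\COND)$ produces a path visiting only good nodes, and write $\mathcal{L}_Y$ and $\mathcal{L}_N$ for the induced leaf distributions under the YES and NO priors. Lemma~\ref{lem:good1-high-prob}, combined with the fact that in $T(\Ratio)$ the decision tree has only $q^q$ nodes and hence (with high probability) contains no bad node on the tester's path, transfers to $\Pr_{\YES}[\Good] \ge 1 - o(1)$ and $\Pr_{\NO}[\Good] \ge 1 - o(1)$ whenever $q = o(\log\log n/\poly(\log\log\log n))$. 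The triangle inequality for total variation distance then gives
\[
\dtv(\mathcal{L}_Y,\mathcal{L}_N) \;\le\; \dtv(\mathcal{L}_Y \mid \Good,\, \mathcal{L}_N \mid \Good) \;+\; o(1),
\]
so it suffices to bound the conditional TV distance.

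To bound the conditional TV, I would unroll each leaf as its configuration sequence $(c_1,\dots,c_q)$ and apply a hybrid argument along the path. At any good internal node $v$ reached after $i-1$ steps, the defining property of goodness (as sketched in Section~\ref{subsec:highlevelframe}) is that the two one-step transition distributions -- the distribution of the next configuration $c_i$ under YES given the past and the corresponding one under NO -- are within $\eta$ in total variation distance, where $\eta$ is the goodness threshold fixed by the construction of Section~\ref{sec:hardinstance} (for that construction one should be able to take $\eta = O(1/\log\log n)$, which is precisely the quantitative improvement over the $\sqrt{\log\log n}$ bound, where the effective threshold was $\Omega(1)$). Summing these $q$ per-step TV contributions telescopes to
\[
\dtv(\mathcal{L}_Y \mid \Good,\, \mathcal{L}_N \mid \Good) \;\le\; q\eta.
\]
Chaining the two displays, any tester $T(\COND)$ distinguishes YES from NO with advantage at most $q\eta + o(1)$, which is $o(1)$ unless $q\eta = \Omega(1)$; with $\eta = O(1/\log\log n)$ this forces $q = \Tilde{\Omega}(\log\log n)$.

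The main obstacle I expect is handling the conditioning on $\Good$ cleanly: the goodness guarantee is stated for the unconditional one-step TV of the next configuration, so I must argue that restricting to good continuations of the path does not inflate the per-step TV. The cleanest workaround is to replace the literal chain-rule-on-conditional-densities step by an explicit step-by-step coupling of the YES and NO executions -- at each good node, pick a maximal coupling of the two one-step transition distributions, which matches the two configurations with probability at least $1 - \eta$; declare the coupling to fail either the first time a matching step fails or the first time either execution enters a bad node. A union bound over $q$ steps together with Lemma~\ref{lem:good1-high-prob} bounds the total failure probability by $q\eta + o(1)$, which directly upper bounds $\dtv(\mathcal{L}_Y, \mathcal{L}_N)$ and sidesteps all conditional-TV subtleties. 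With this framework in place, the only remaining quantitative task is verifying the value of $\eta$ for the specific hard instances of Section~\ref{sec:hardinstance}, where the level-by-level structure of the construction is tuned so that at a typical (good) node the next configuration's TV distance is forced down to the claimed $O(1/\log\log n)$.
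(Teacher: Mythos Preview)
Your high-level strategy (chain rule / step-by-step coupling along the path, summing a per-step error $\eta$ over $q$ steps) matches the paper's in spirit, but there is a genuine gap in the ``one-step transition distributions are $\eta$-close at every good node'' claim. The issue is that the transition law at a node $v$ is \emph{not} determined by the configuration history alone: it also depends on which buckets the previously seen samples $s_1,\dots,s_{i-1}$ landed in, since the probabilities $D_k(s)$ for $s\in O_i$ (and hence which element of $O_i\cup\{U_i\}$ is heaviest, and with what margin) are functions of the bucket indices $b(s_1),\dots,b(s_{i-1})$. These bucket indices are hidden state not recorded in the configurations $c_1,\dots,c_{i-1}$, so the ``one-step TV at a good node'' you invoke is not well-defined without further conditioning. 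The paper flags exactly this point in Section~\ref{subsec:highlevelframe} and resolves it by running the induction on the \emph{joint} law of $\bigl(V_i,\, b(s_1),\dots,b(s_i)\bigr)$ rather than on $V_i$ alone; see the proof in Section~\ref{sec:ifGoodthenlb}, where the hypothesis is a pointwise multiplicative bound
\[
\Pr\bigl[V^{\YES}_i=v_i,\ \mcB^{\YES}(\bm{S}_i)=\bm{B}_i\bigr]\ \le\ \Bigl(1+\tfrac{100i}{\gamma}\Bigr)\,\Pr\bigl[V^{\NO}_i=v_i,\ \mcB^{\NO}(\bm{S}_i)=\bm{B}_i\bigr],
\]
and the induction step uses that, \emph{once the bucket history is fixed}, the identity of the unique heaviest element of $O_i\cup\{U_i\}$ is the same under $\YES$ and $\NO$.

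Two smaller points. First, your $\Good$ event is only $\Good_1$ (path avoids bad nodes); the paper's argument also needs $\Good_2$ (no two seen samples share a bucket, and $D_k(U_i)/D_k(s)$ is never in a ``comparable'' range) and $\Good_3$ (the $\COND$ outcome is always the unique heaviest element), which together are what make the per-step comparison go through deterministically once bucket indices are fixed. Second, the per-step error in the paper is $O(1/\gamma)$ with $\gamma=(\log\log n)^9$, not $O(1/\log\log n)$; this does not change the conclusion but is what makes $q\cdot\eta=o(1)$ comfortably hold for $q\le \frac{\log\log n}{100\log\log\log n}$. Your coupling reformulation can be made to work, but only if you couple the pair (configuration, bucket index) at each step rather than the configuration alone.
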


\subsubsection*{On our $\YES$ and $\NO$ instances}
We informally explain our $\YES$ and $\NO$ instances (formally defined in Section~\ref{sec:hardinstance}). For $\YES$ instances, we consider a pair of identical distributions $(Q_1, Q_1)$. On the other hand, for $\NO$ instances, we consider a pair of $1/4$-far distributions $(Q_1, Q_2)$. For both $Q_1$ and $Q_2$, the support of the distributions is randomly partitioned into $\Theta(\sqrt{\log n})$ buckets, with each successive bucket increasing geometrically (more specifically, by a factor of exponential in $\sqrt{\log n}$) in size; however, the probability mass of each bucket remains the same (Figure~\ref{fig:buckets}). In other words, if we compare the probability of two elements from two different buckets, they differ (slightly) sub-polynomially (in $n$). We leverage this property crucially in our lower-bound argument. The distributions $Q_1$ and $Q_2$ differ within each bucket. Inside any bucket, $Q_1$ is uniform on its elements. In contrast, in $Q_2$, a random half of elements have probability a  constant factor times the other random half of elements (see Figure~\ref{fig:inside-buckets}). Let us refer to these two random partitions of buckets as sub-buckets.
 
 It is important to note that if we could get two random samples, say $s$ and $s'$, from any particular bucket, it would be easy to distinguish between the two cases. The reason is that $s$ and $s'$ would be in two different sub-buckets with constant probability. Then for $Q_1$,  we have $Q_1(s) = Q_1(s')$ whereas for $Q_2$, we have $Q_2(s)/Q_2(s') = \Omega(1)$ so a $\COND(s,s')$ query would detect whether the input pair of distributions is $(Q_1,Q_1)$ or $(Q_1,Q_2)$.  Indeed, if the support size is known to the tester, then in $O(1)$-query, it is possible to pick two random samples from the same bucket. That is why our instances' support size is also set randomly.

\subsubsection*{Good and Bad nodes}
  Now we explain the good and bad nodes in detail. Ideally, we want to say a node is \emph{good} if the outcomes are close in the total variation distance for $\YES$ and $\NO$ instances. It happens if three conditions are satisfied (see Definition~\ref{def:good node}), which we summarize below.

 In any atom $A \in \At(R)$, the expected number of items from the $j$-th bucket is $\frac{|A|b \rho^j}{n}$. We would like that the actual number is concentrated tightly around this expectation for all atoms and all buckets. By an application of standard concentration inequalities, this would happen if this expected value is either too large or too small compared to 1 (see Lemma~\ref{lem:phi}). Next, we also want that an atom to either intersects a large number of buckets  (at least $\poly(\log \log n)$) -- such atoms are called \emph{large} atoms -- or not intersect with any bucket (such atoms are called \emph{small} atoms). This condition would be pivotal to argue that with high probability, no two samples can be from the same bucket (first item of Definition~\ref{def:good2}). Finally, we want the probability mass of $U_v$ (at a node $v$) to be too heavy or too light compared to the probability mass of any single sampled element (second item of Definition~\ref{def:good2}). If none of the above three conditions holds, we call such a node a \emph{bad} node.

It is worth noting that our definition of a node being good or bad is (almost) equivalent to that of ACK~\cite{acharya2018chasm}. Our main ingenuity lies in (i) considering the event ``the tester does not pass through a bad node with high probability" as opposed to ``there are no bad nodes", and (ii) introducing the weaker query model $\Ratio$ (and showing equivalence to $\COND$) to analyze the event ``the tester does not pass through a bad node with high probability".

\subsubsection*{Indistinguishability between $\YES$ and $\NO$}
Assuming that the tester does not pass through a bad node, we show that the input pair of distributions from the $\YES$ and $\NO$ instances are indistinguishable. As we argued before, we need to show that the induced distributions on the leaves of the decision tree are close in the total variation distance for both cases.

The main challenge is that even if we assume the testers do not pass through a bad node, the decision to choose the next node by the tester depends not only on the previous traversed nodes but also on the bucket distribution of the seen samples (that is, which buckets the seen samples belong to). Note that the bucket distributions are not part of configurations. That is why we argue that the bucket distributions, as well as the distributions of the next configuration for both $\YES$ and $\NO$ cases, are close in total variation distance at every step. We formally argue this in Section~\ref{sec:ifGoodthenlb}.

\subsection{Applying our technique to testing label-invariant property}
Firstly, note that our lower bound of  $\Omega(\log \log n)$ for equivalence 
testing will not directly give a lower bound on a label invariant property. In
the equivalence testing problem, the input is a pair of distributions, while 
the testing of a label invariant property has only one distribution as input.
One may fix one of the distributions in the equivalence testing and regard the
other distribution as the input. Interestingly, in this case, the number of
required queries becomes independent of $n$~\cite{falahatgar2015faster}. 

To obtain a lower bound on testing a label invariant property, we proceed along the path already charted by CFGM~\cite{chakraborty2016power}. They defined a property called \emph{even uniblock property} and used it to prove the lower bound of $\sqrt{\log\log n}$. In fact, for our lower bounds, we even use the same hard instances as defined in \cite{chakraborty2016power}. The crucial point where we do better than the previous attempts is that, like in the equivalence testing problem, our ``$\Bad$ event" is when a run of the algorithm passes through a ``$\Bad$ node" compared to the previous attempts where the ``$\Bad$ event" was defined as when there is some ``$\Bad$ node" in the decision tree. Of course, this crucial difference is significant as we can show via the $\Ratio$ model that if the number of queries is $\Omega(\log\log n)$, the $\Bad$ event happens with very low probability. Our proof, as in the case of the lower bound of equivalence testing, has three parts: (a) showing that a run of the $\Ratio$ model algorithm does not pass through a bad node with high probability if $q = o(\log \log n)$, (b) showing that the total variation distance between the distribution on leaves of the decision tree for the $\Ratio$ model and the $\COND$ model is small, (c) assuming that a run of the algorithm does not pass through a bad node, the total variation distance over leaves is small for $\YES$ and $\NO$ cases.


\subsection{Corollary to the support size estimation}
\label{sec:support}
We show that if we can know the support size of the given distribution, then in $O(1)$ queries, we can distinguish between our $\YES$ and $\NO$ instances constructed for the equivalence testing problem (as we mentioned before, it is for this reason that we randomly set the support size in our instances). Therefore, the lower bound for the equivalence testing in Theorem~\ref{thm:mainlb} also holds for the support size estimation.

We quickly recap the informal construction of our $\YES$ and $\NO$ instances (see section~\ref{sec:hardinstance} for the formal definition). For $\YES$ instances, we consider a pair of identical distributions $(Q_1, Q_1)$. On the other hand, for $\NO$ instances, we consider a pair of $1/4$-far distributions $(Q_1, Q_2)$.  For both $Q_1$ and $Q_2$, the support of the distributions are the same and randomly partitioned into $\Theta(\sqrt{\log n})$ buckets, with each successive bucket increasing geometrically (by a factor of exponential in $\sqrt{\log n}$, to be precise) in size.  The distributions $Q_1$ and $Q_2$ differ within each bucket. Inside any bucket, $Q_1$ is uniform on its elements. In contrast, in $Q_2$, a random half of elements have probability a  constant factor times the other random half of elements (see Figure~\ref{fig:inside-buckets}). These two random partitions of buckets are called sub-buckets.

Let $(D_1, D_2)$ be the pair of distributions provided to the tester (which could be either $(Q_1, Q_1)$ or $(Q_1, Q_2)$). Note that by our construction, distributions $D_1$ and $D_2$ have the same support size (say $s$). Assume that the support size $s$ is known. Let $c$ be a large constant.
  We construct a set $S$  by including each element $i \in [n]$ with probability $c/s$. 
  Thus on expectation, there will be $c$ elements from the support in $S$.
  Note that the size of each successive bucket differs by a factor of $2^{\sqrt{\log n}}$. Hence on expectation, there will be $O(1) (\approx c/2)$ elements from both the sub-buckets of the largest bucket (as both the sub-buckets are of equal size) and $o(1)$ elements from the rest of the buckets. Therefore, with a high probability, there will be $O(1)$ elements from both the sub-buckets of the largest bucket and no elements from the rest of the buckets. 
    Since there are only $O(1)$ elements from the support in $S$, it is possible to identify these elements using only $O(1)$ queries. Finally,  for each pair $s,s'$ among these elements,  we perform $\COND_{D_2}\left(\{s,s'\}\right)$ queries.  We note that in the $\YES$ instance, for each pair $s,s'$, $\COND_{D_2}\left(\{s,s'\}\right)$ is a uniform distribution on $\{s,s'\}$ whereas in the $\NO$ instance, for at least one pair $s,s'$ (when $s$ and $s'$ belongs to different sub-buckets),  $\COND_{D_2}(s,s')$ assigns probability $3/4$  to one and $1/4$ to another. Hence, using overall $O(1)$ queries (assuming the support size is known), one can distinguish the $\YES$ and $\NO$ instances.


\begin{corollary}
  Any (randomized) adaptive tester for estimating the support size of a distribution to a $4/3$-multiplicative factor must make  $\Tilde{\Omega}(\log \log n)$ $\COND$ queries even when it is promised that the probability of any element in the support is at least $1/n$.
\end{corollary}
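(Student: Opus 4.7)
The plan is to establish the corollary by a black-box reduction from equivalence testing, instantiating the sketch already given in Section~\ref{sec:support}. Suppose, toward a contradiction, that there is an adaptive $\COND$-tester $\mathcal{T}$ that estimates the support size to within a $4/3$-factor using $q(n)$ queries, under the promise that every support element has probability at least $1/n$. I would use $\mathcal{T}$, together with only $O(1)$ further $\COND$ queries, to distinguish the $\YES$ instance $(Q_1, Q_1)$ from the $\NO$ instance $(Q_1, Q_2)$ from Section~\ref{sec:hardinstance}, forcing $q(n) = \Tilde{\Omega}(\log\log n)$ by Theorem~\ref{thm:mainlb}.

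The first step is to verify that the hard instances satisfy the promise: the lightest elements of $Q_1$ and $Q_2$ live in the largest bucket, which has size $\Theta(s)$ and carries mass $1/\Theta(\sqrt{\log n})$, giving per-element probability $\Theta(1/(s\sqrt{\log n})) \ge 1/n$ for the chosen support size $s \le n/\sqrt{\log n}$. Crucially, $D_1$ and $D_2$ share the \emph{same} support of size $s$ in both $\YES$ and $\NO$ cases, so feeding $D_2$ to $\mathcal{T}$ returns an estimate $\hat{s} \in [3s/4,\, 4s/3]$ with high probability regardless of the case.

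Next, I would construct $S \subseteq [n]$ by including each element independently with probability $c/\hat{s}$ for a sufficiently large absolute constant $c$. Since successive bucket sizes differ by a factor $2^{\sqrt{\log n}}$, the expected number of support elements of the largest bucket lying in $S$ is $\Theta(c)$, whereas every smaller bucket contributes only $O(c\cdot 2^{-\sqrt{\log n}}) = o(1)$ in expectation; a Chernoff bound then yields that with high probability all support elements of $S$ lie in the largest bucket and their count is $\Theta(c)$. I would enumerate the set $T := S \cap \mathrm{supp}(D_2)$ by drawing $O(c \log c) = O(1)$ samples from $\COND_{D_2}(S)$ --- a standard coupon-collector argument works because within the largest bucket all element probabilities are within a constant factor of one another in both $Q_1$ and $Q_2$.

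Finally, for each of the $\binom{|T|}{2} = O(1)$ pairs $\{s, s'\} \subseteq T$, I would perform $O(1)$ draws from $\COND_{D_2}(\{s, s'\})$ and test whether the induced two-point distribution is uniform. In the $\YES$ instance, $D_2 = Q_1$ is uniform on the bucket, so every such pair gives the uniform distribution on $\{s,s'\}$. In the $\NO$ instance, $D_2 = Q_2$, and the random sub-bucket partition fails to put all of $T$ into a single sub-bucket with probability at least $1 - 2^{-(c-1)}$; any cross-sub-bucket pair then yields a two-point distribution with bias bounded away from $1/2$ by a fixed constant, which $O(1)$ samples clearly detect. Thus the whole procedure uses $q(n) + O(1)$ queries and solves equivalence testing on the hard instances with constant success probability, contradicting Theorem~\ref{thm:mainlb} unless $q(n) = \Tilde{\Omega}(\log\log n)$. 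The only delicate bookkeeping, and the main (mild) obstacle, is to verify that the bucket parameters force the promise $D(x) \ge 1/n$ to hold, and to union-bound all the ``with high probability'' events (together with suitable error amplification of $\mathcal{T}$) so that the reduction remains a bounded-error distinguisher.
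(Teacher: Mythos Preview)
Your proposal is correct and follows essentially the same reduction the paper sketches in Section~\ref{sec:support}: verify the hard instances obey the $1/n$ promise, use a hypothetical cheap support-size estimator on $D_2$ to (approximately) learn $s$, randomly subsample with rate $\Theta(1/s)$ to isolate $O(1)$ elements of the largest bucket, and then pairwise $\COND$-test them to separate $(Q_1,Q_1)$ from $(Q_1,Q_2)$. You flesh out two points the paper leaves implicit (the coupon-collector step for enumerating $S\cap\mathrm{supp}(D_2)$ and the explicit check that $m\le n^{3/4}$ forces the minimum mass $\frac{1}{2\tau b\rho^\tau}\ge 1/n$), but the strategy is the same. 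One small wording issue: the contradiction is not with Theorem~\ref{thm:mainlb} as a black box but with the indistinguishability of the specific $\IY$/$\IN$ instances established in its proof; the paper makes the same conflation, so this is fine.
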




\section{$\Tilde{\Omega}(\log \log n)$ lower bound for Equivalence testing}\label{sec:main}

In this section, we prove Theorem~\ref{thm:mainlb}. We will eventually use Theorem~\ref{thm:extyao} to prove our theorem. For that purpose, we start with defining the hard-to-distinguish $\YES$ and $\NO$ instances in Section~\ref{sec:hardinstance}. Then we define the $\Good$ event in Section~\ref{sec:goodevent}. The proof of Theorem~\ref{thm:mainlb} follows from two crucial lemmas - namely that the $\Good$ event happens with high probability and that if $\Good$ event happens, then the $\YES$ and $\NO$ instances are hard to distinguish. Lemma~\ref{lem:Bad} proves that the $\Good$ event happens with high probability, and this is proved in Section~\ref{sec:bad}. Lemma~\ref{lem:if-good-then-lb} states that if the $\Good$ event happens, then $\YES$ and $\NO$ instances are indistinguishable, and this is proved in Section~\ref{sec:ifGoodthenlb}.

\subsection{Distributions over the YES and NO instances}\label{sec:hardinstance}
Let us start by describing a distribution $\mathcal{I}_{\YES}$ over the $\YES$-instances and a distribution $\mathcal{I}_{\NO}$  over the $\NO$-instances. We describe them by describing the process using which an element of the $\YES$-instances (and $\NO$-instances respectively) is produced.

We will now present a randomized procedure to generate a pair of distributions $(Q_1, Q_2)$. The $\YES$ instance will have both distributions as $Q_1$, whereas the $\NO$ instance will have the two distributions as $Q_1$ and $Q_2$. 

\begin{enumerate}
    \item An integer $\kappa \in \left\{0,1,\dots,\left\lfloor \frac{\log n}{2} \right\rfloor \right\}$ is chosen uniformly at random. We set $b=2^{\kappa}$, $\rho = 2^{\sqrt{\log n}}$, $\tau = \frac{\sqrt{\log n}}{4}$, and let $m = b(\rho+\rho^2+\dots+\rho^{\tau})$.
    \item A pair of distributions $(Q_1, Q_2)$ is constructed as follows (see Figure~\ref{fig:buckets} and \ref{fig:inside-buckets}):
    \begin{itemize}
        \item We randomly choose a subset of size $m$ from $[n]$ that forms the support of both distributions $Q_1$ and $Q_2$.
        \item We then randomly partition the support into $\tau$ buckets (for both $Q_1$ and $Q_2$) $B_1,\dots,B_{\tau}$ such that $|B_j| = b \rho^j$ and assign the probability $1/\tau$ to each bucket.
        \item In distribution $Q_1$, the probability distribution is uniform in each bucket, i.e., for all $j \in [\tau]$, and $i \in B_j$, we have  $Q_1(i) = \frac{1}{\tau b \rho^j}$.
        \item In distribution $Q_2$, each bucket $B_j$ is randomly partitioned into two subbuckets $B^h_j$ and  $B^{\ell}_j$ of equal size ($ = \frac{|B_j|}{2}$) such that $Q_2(B^h_j) = \frac{3}{4\tau}$ and $Q_2(B^{\ell}_j) = \frac{1}{4\tau}$ and distribution is uniform in each sub-buckets. In other words, for all $j \in [\tau]$, we have
        \[
        Q_2(i) = \begin{cases}
        \frac{3}{2\tau b \rho^j} \quad \text{if }i \in B^h_j\\
        \frac{1}{2\tau b \rho^j} \quad \text{if }i \in B^{\ell}_j
        \end{cases}
        \]
    \end{itemize}
\end{enumerate}

\begin{figure}
    \centering
    \begin{tikzpicture}[scale=0.4]
    \node[circle,inner sep=0pt,minimum size=3pt,fill=black] (a) at (0,0) {};
    \node[circle,inner sep=0pt,minimum size=3pt,fill=black] (b) at (1,0) {};
    \node[circle,inner sep=0pt,minimum size=3pt,fill=black] (c) at (3,0) {};
    \node[circle,inner sep=0pt,minimum size=3pt,fill=black] (d) at (7,0) {};
    \node[circle,inner sep=0pt,minimum size=3pt,fill=black] (e) at (15,0) {};
    \node[circle,inner sep=0pt,minimum size=3pt,fill=black] (f) at (31,0) {};
    
    \draw[] (a) -- (b) -- (c) -- (d) --(e) -- (f);

\node[] (l) at (0.5,2) {$B_1$};
\node[] (m) at (2,2) {$B_2$};
  \node[] (n) at (23,2) {$B_\tau$}; 
    \end{tikzpicture}
 \caption{For both $Q_1$ and $Q_2$, $|B_j| = b \rho^j$ and $Q_1(B_j) = Q_2(B_j) = \frac{1}{\tau}$}  
 \label{fig:buckets}
 \vspace{0.4cm}
 \begin{tikzpicture}[scale = 0.4]
      \node[circle,inner sep=0pt,minimum size=3pt,fill=black] (a) at (0,0) {};
    \node[circle,inner sep=0pt,minimum size=3pt,fill=black] (b) at (10,0) {};
     \node[circle,inner sep=0pt,minimum size=3pt,fill=black] (c) at (20,0) {};
     \draw[] (a) -- (b) -- (c);
     \node[] (l) at (5,6) {$B^l_j$};
\node[] (m) at (15,6) {$B^h_j$};

\node[circle,inner sep=0pt,minimum size=3pt,fill=black] (a') at (0,-3) {};
    \node[circle,inner sep=0pt,minimum size=3pt,fill=black] (b') at (10,-3) {};
 \draw[red] (a') -- (b');

\node[circle,inner sep=0pt,minimum size=3pt,fill=black] (b'') at (10,3) {};
    \node[circle,inner sep=0pt,minimum size=3pt,fill=black] (c'') at (20,3) {};
 \draw[red] (b'') -- (c'');

 \draw[dotted] (b'') -- (b');

\node[] () at (-1,0) {$Q_1$};
\node[] () at (-1,-3) {$Q_2$};
 
 \end{tikzpicture}
 \caption{$B^l_j$ and $B^h_j$ are random partition of $B_j$ such that $|B^l_j| = |B^h_j|$. We have $Q_1(i) = \frac{1}{\tau b \rho^j}$ for all $i \in B_j$ whereas $Q_2(i) = \frac{3}{2 \tau b \rho^j}$ for $ i \in B^h_j$ and  $Q_2(i) = \frac{1}{2 \tau b \rho^j}$ for $ i \in B^l_j$. }
    \label{fig:inside-buckets}
\end{figure}

By the construction of the NO instance $(Q_1,Q_2)$, it follows immediately that

\begin{observation}
$\dtv (Q_1, Q_2)= \frac{1}{2} \sum_{i \in [\tau]} \frac{1}{2\tau b \rho^j} \cdot b \rho^j = \frac{1}{4}$.
\end{observation}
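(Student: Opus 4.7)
The plan is to unpack the definition $\dtv(Q_1,Q_2) = \frac{1}{2}\sum_{s \in [n]}|Q_1(s)-Q_2(s)|$ and reduce the computation to a bucket-by-bucket arithmetic check. Since $Q_1$ and $Q_2$ share the same support $\bigcup_{j=1}^{\tau} B_j$, I will first decompose the outer sum over buckets $B_j$, and within each bucket split further into the two sub-buckets $B_j^h$ and $B_j^\ell$ used in the definition of $Q_2$.

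The key observation I will exploit is that the pointwise discrepancy $|Q_1(i) - Q_2(i)|$ is constant across an entire bucket $B_j$. Indeed, the uniform value $Q_1(i) = 1/(\tau b \rho^j)$ on $B_j$ is precisely the average of the two $Q_2$-values $3/(2\tau b \rho^j)$ and $1/(2\tau b \rho^j)$, so $Q_2$ shifts mass up by $1/(2\tau b \rho^j)$ on $B_j^h$ and down by the same amount on $B_j^\ell$. Thus each of the $|B_j| = b\rho^j$ elements contributes $1/(2\tau b \rho^j)$ to the sum, and the bucket total is $b\rho^j \cdot 1/(2\tau b \rho^j) = 1/(2\tau)$. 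Summing this constant contribution over the $\tau$ buckets and then applying the leading factor of $1/2$ in the definition of $\dtv$ yields exactly $1/4$.

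There is no conceptual obstacle here, so I will not belabor the arithmetic; what is worth flagging is the structural reason the answer is exactly $1/4$ rather than some $n$-dependent quantity. The construction moves half of each bucket's mass up by a fixed amount and the other half down by the same amount, so the $b\rho^j$ factor cancels against the bucket size and every bucket contributes identically. This uniformity across buckets is what gets leveraged later in the lower-bound argument, since all buckets must be treated on the same footing.
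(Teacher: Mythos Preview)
Your proposal is correct and is essentially the same computation the paper performs inline in the statement itself: decompose the sum over buckets, observe that each element of $B_j$ contributes $\frac{1}{2\tau b \rho^j}$ to $\sum_s |Q_1(s)-Q_2(s)|$, multiply by $|B_j|=b\rho^j$, sum over $j\in[\tau]$, and halve. The paper treats this as immediate from the construction and gives no further argument, so your write-up is if anything more detailed than what appears there.
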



\subsection{The $\Good$ Event}\label{sec:goodevent}

For any adaptive algorithm $\Alg$ and input $x$ (which is a pair of distributions $D_1, D_2$; recall, from the previous section that for YES instance $D_1=D_2=Q_1$, and for NO instance $D_1=Q_1$ and $D_2=Q_2$), let us first define the $\Good$ event that we will use with Theorem~\ref{thm:extyao} to prove Theorem~\ref{thm:mainlb}.

We will assume (from Theorem~\ref{thm:general-core-adaptive-equiv}) the algorithm $\Alg$ is a core-adaptive algorithm. 
We will define events $\Good_1(\Alg, x)$, $\Good_2(\Alg, x)$ and $\Good_3(\Alg, x)$ and finally we will have an event $\Good(\Alg, x) = \Good_1(\Alg, x) \wedge \Good_2(\Alg, x) \wedge \Good_3(\Alg, x)$. 


Throughout the rest of the paper, we will assume
\[
\gamma = (\log \log n)^{9},\quad \quad \alpha = (\log n)^{(\log \log n)^2},\quad \quad \phi = (\log \log n)^{20}.
\]
Further, recall that as mentioned in Section~\ref{sec:hardinstance},
\[
\rho = 2^{\sqrt{\log n}},\quad \quad \tau = \frac{\sqrt{\log n}}{4},\quad \quad b=2^{\kappa}
\]
where  $\kappa$ is chosen uniformly at random from $\left\{0,1,\dots,\left\lfloor \frac{\log n}{2} \right\rfloor \right\}$ .

\paragraph{The $\Good_1(\Alg, x)$ event: }
We start with defining $\phi_A$ associated with each atom $A$. 

\begin{definition}
\label{def:phi}
    For any atom $A$, $\phi_A$ is the smallest  integer $\triangle \in \{0,1,\dots,\tau-1\}$ such that  $\frac{|A|b \rho^{\tau-\triangle}}{n} < 1/\alpha$ and if such an integer does not exist then $\phi_A =\tau$. 
\end{definition}
Note that the expected size of the intersection of atom $A$ with bucket $B_{\tau - \triangle}$ is $\frac{|A|b\rho^{\tau - \triangle}}{n}$. Thus, $\phi_A$ is the number of buckets with which $A$ has a large intersection in expectation.

\begin{definition}[Good and Bad node]
\label{def:good node}
    For a node $v$ of the decision tree $R$, recall that $\bm{A}_v = (A_{v_0},\dots, A_{v_i})$ is the set of all queries made in the path from the root to the node $v$ in the decision tree. The node $v$ is called \emph{good} for $x$ if it satisfies all of the following conditions.
    \begin{enumerate}
        \item For any atom $A \in \At(\bm{A}_v)$, we have for every bucket $j \in [\tau]$, either $ \frac{|A|b\rho^j}{n} \ge \alpha$ or $\frac{|A|b\rho^j}{n} \le \frac{1}{\alpha}$. 
        \item For any atom $A \in \At(\bm{A}_v)$, we have either $\frac{|A|b \rho^{\tau-\phi}}{n} \ge \alpha$ or $\frac{|A|b \rho^{\tau}}{n} \le 1/\alpha$. If the former condition holds, we say atom $A$ is large, and if the latter condition holds, we say $A$ is small,
        \item For all $U_{v_\ell}$ ($\ell \le i$), we have for all $j \in [\tau]$, either $\frac{\sum_{A \in \At(U_{v_\ell})} \phi_A |A|}{\tau n} \ge  \frac{\gamma}{\tau b\rho^j}$ or $\frac{\sum_{A \in \At(U_{v_\ell})} \phi_A |A|}{\tau n} \le  \frac{1}{\gamma \tau b\rho^j}$.
    \end{enumerate}
     A node $v$ is called \emph{bad} for $x$ if it is not good for $x$.
\end{definition}
In simple terms, if the algorithm $\Alg$  is currently at a good node $v$, then so far (1) for all atoms, the expected intersection size with all buckets is either large (at least $\alpha$) or small (at most $1/\alpha$). This will help in showing tight concentration bounds (Lemma~\ref{lem:phi}) as, by Chernoff bound, the actual value is tightly concentrated to its expected value if the expectation is large or small, 
(2) any atom either has a large intersection with at least $\phi$ buckets (such atoms are large atoms)  or a small intersection with all buckets (such atoms are small atoms), (3)  the expected probability mass of  any unseen query set   is `incomparable' to the probability of any element in the entire distribution, that is, their ratio is either at least $\gamma$ or at most $1/\gamma$.
\begin{definition}
    The event $\Good_1(\Alg, x)$ is defined as ``the run of the algorithm $\Alg$ on input $x$ does not pass through a bad node."
\end{definition}

Before we give the definition of the events $\Good_2$ and $\Good_3$,  we prove concentration bounds for the good nodes.

\begin{lemma}\label{lem:phi}
    
    With probability at least $1 - \frac{1}{\poly(\log \log n)}$, for all good nodes $v$ and for all atoms $A \in \At(\bm{A}_v)$ , 
\begin{align}
    & \mbox{if $j \le \tau - \phi_A$ then $|B_j \cap A| = 0$, and,} \label{eq:goodnode1}\\
    & \mbox{if $j \ge \tau - \phi_A +1$ then } |A \cap B_j| \in \left[1-\frac{1}{\gamma}, 1+ \frac{1}{\gamma}\right] \frac{|A|b \rho^j}{n} \label{eq:goodnode2}.
\end{align}
    
\end{lemma}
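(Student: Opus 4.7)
The plan is to prove the concentration bound by a union bound over all atoms that can appear in the decision tree $R$, reducing each individual event to the hypergeometric Chernoff bound (Lemma~\ref{lem:chernoff}) or a Markov-type estimate. First I would fix a deterministic instantiation of the tester, so that along every path of $R$ the node labels $(O_v,U_v)$ are determined and the atoms in $\At(\bm{A}_v)$ are fixed subsets of $[n]$; for $q = \Tilde{O}(\log\log n)$ queries the tree contains at most $2^{O(q^2)}$ nodes, each of depth at most $q$, so the total number of (atom, bucket) pairs I need to control is at most $\tau \cdot 2^{O(q^2)}$.

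The key structural observation is that for any fixed subset $A \subseteq [n]$, the random variable $|A \cap B_j|$, over the random choice of support and its random partition into buckets, is distributed as a hypergeometric random variable with mean $\mu_{A,j} = |A|b\rho^j/n$, so Lemma~\ref{lem:chernoff} applies. The definition of $\phi_A$ together with condition (1) of Definition~\ref{def:good node} then yields a clean dichotomy at every good node: for every atom $A \in \At(\bm{A}_v)$ and every bucket $j$, either $\mu_{A,j} \ge \alpha$ (precisely when $j \ge \tau - \phi_A + 1$) or $\mu_{A,j} \le 1/\alpha$ (precisely when $j \le \tau - \phi_A$). In the large-mean case, Lemma~\ref{lem:chernoff} with $\lambda = 1/\gamma$ gives a two-sided failure probability at most $2\exp(-\mu_{A,j}/(3\gamma^2)) \le 2\exp(-\alpha/(3\gamma^2))$, yielding the $[1-1/\gamma,1+1/\gamma]\mu_{A,j}$ conclusion; in the small-mean case, since $|A \cap B_j|$ is integer-valued, Markov's inequality gives $\Pr[|A \cap B_j| \ge 1] \le \mu_{A,j} \le 1/\alpha$, yielding the ``$|B_j\cap A|=0$'' conclusion.

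With the parameter choices $\alpha = (\log n)^{(\log\log n)^2}$, $\gamma = (\log\log n)^9$, and $q = \Tilde{O}(\log\log n)$, each per-pair failure bound is of order at most $2^{-(\log\log n)^3}$, which comfortably absorbs the union-bound factor $\tau \cdot 2^{O(q^2)} = 2^{O((\log\log n)^2 \cdot \poly(\log\log\log n))}$ and leaves a residual failure probability $o(1/\poly(\log\log n))$, as required by the lemma.

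The main obstacle I anticipate is justifying the clean hypergeometric distribution of $|A \cap B_j|$ despite the adaptive nature of the tester, since the fresh component $U_v$ used to build atoms is drawn after some samples, and hence after some bucket information has already been revealed to the tester. I would handle this via deferred-decision exposure of the bucket labels, revealing the bucket of each element of $[n]$ only at the first moment it enters some queried set. Since the atoms are defined purely set-theoretically from query sets, and the tester's choice of queries depends only on configurations of previously revealed elements, the residual bucket assignments remain exchangeable, and the hypergeometric estimate survives the adaptivity.
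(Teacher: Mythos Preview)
Your proposal is correct and essentially matches the paper's proof: a union bound over the at most $2^{q+q^2}$ atoms and $\tau$ buckets, with Markov's inequality for the small-mean regime $\mu_{A,j}\le 1/\alpha$ and the hypergeometric Chernoff bound (Lemma~\ref{lem:chernoff}) with $\lambda=1/\gamma$ for the large-mean regime $\mu_{A,j}\ge \alpha$. The only differences are that the paper applies Chernoff to the sub-buckets $B_j^h$ and $B_j^\ell$ separately before summing (this stronger sub-bucket concentration is what is actually used downstream in Lemma~\ref{cor:prob-mass-conc} to control $D_2=Q_2$, so you may want to record it), and the paper does not address the adaptivity concern in your last paragraph at all---it silently treats the atoms as fixed subsets of $[n]$ for the union bound, whereas your deferred-decisions remark supplies the missing justification.
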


\begin{proof}
  
    If  $\frac{|A|b\rho^j}{n} \le \frac{1}{\alpha}$ for any atom $A$ and $j \in [\tau]$ then by Markov's inequality $\Pr[|A \cap B_j| > 0] \le \frac{1}{\alpha}$.  Taking union bound over all atoms corresponding to all good nodes (which can be at most $2^{q+ q^2} \le (\log n)^{\log \log n}$) and all $j \in [\tau]$, with probability at least $1 - \frac{1}{\poly(\log \log n)}$, 
    \begin{equation}
        \mbox{ for any atom $A$ and $j \le \tau - \phi_A$, we have $|B_j \cap A| = 0$}.
    \end{equation}

    Now note that if $j \ge \tau - \phi_A +1$ then
    $\E\left[|B^h_j \cap A|\right] = \frac{|A|b \rho^j}{2n} \ge \alpha/2$. Therefore, by Lemma~\ref{lem:chernoff}, 
    $$\Pr\left[|A \cap B^h_j| \not \in \left[1-\frac{1}{\gamma}, 1+ \frac{1}{\gamma}\right] \frac{|A|b \rho^j}{2n}\right] \le O(exp(-\alpha/\gamma^2)).$$
    Similarly, $$\Pr\left[|A \cap B^{\ell}_j| \not \in \left[1-\frac{1}{\gamma}, 1+ \frac{1}{\gamma}\right] \frac{|A|b \rho^j}{2n}\right] \le O(exp(-\alpha/\gamma^2)).$$ Hence 
    \begin{equation}
    \Pr\left[|A \cap B_j| \not \in \left[1-\frac{1}{\gamma}, 1+ \frac{1}{\gamma}\right] \frac{|A|b \rho^j}{n}\right] \le O(exp(-\alpha/\gamma^2)).\end{equation}
\end{proof}

From Lemma~\ref{lem:phi}, we deduce the following useful lemma.

\begin{lemma}
\label{cor:prob-mass-conc}
   With  probability at least $1 - o(1)$, for all good nodes $v$ and for all  atoms $A \in \At(\bm{A}_v)$, 
   \begin{align}
    & \mbox{if $j \le \tau - \phi_A$ then $D_1(A \cap B_j)= D_2(A \cap B_j) = 0$, and,} \\
    & \mbox{if $j \ge \tau - \phi_A +1$ then } D_1(A \cap B_j),D_2(A \cap B_j) \in \left[1-\frac{1}{\gamma}, 1+ \frac{1}{\gamma}\right] \frac{|A|}{\tau n}.
\end{align}
   Hence, $D_1(A),D_2(A) \in [1-1/\gamma,1+1/\gamma] \frac{\phi_A |A|}{\tau n}$. Therefore, for any query set $U_{v_\ell}$ ($\ell \le i$), we have $$D_1(U_{v_\ell}),D_2(U_{v_\ell}) \in \left[1-\frac{1}{\gamma},1+\frac{1}{\gamma}\right] \frac{\sum_{A \in \At(U_{v_\ell})} \phi_A |A|}{\tau n}.$$
\end{lemma}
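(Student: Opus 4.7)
The plan is to derive this as a direct consequence of Lemma~\ref{lem:phi} by translating the cardinality bounds on $|A \cap B_j|$ into probability mass bounds via the explicit formulas defining $Q_1$ and $Q_2$. We condition on the (probability $1-o(1)$) event guaranteed by Lemma~\ref{lem:phi}, so for every good node $v$ and every atom $A \in \At(\bm{A}_v)$, the stated cardinality concentrations hold simultaneously.

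For $D_1$, the argument is immediate: since $Q_1$ is uniform on each bucket with mass $1/(\tau b \rho^j)$ per element of $B_j$, we have $D_1(A \cap B_j) = |A \cap B_j|/(\tau b \rho^j)$. When $j \le \tau - \phi_A$, Lemma~\ref{lem:phi} gives $|A \cap B_j|=0$, so the mass is $0$; when $j \ge \tau - \phi_A + 1$, the cardinality bound of $[1-1/\gamma,\, 1+1/\gamma]\cdot|A|b\rho^j/n$ divided by $\tau b \rho^j$ yields exactly $[1-1/\gamma,\, 1+1/\gamma]\cdot|A|/(\tau n)$. For $D_2$, I would write $D_2(A \cap B_j) = \frac{3}{2\tau b \rho^j}|A \cap B^h_j| + \frac{1}{2\tau b \rho^j}|A \cap B^\ell_j|$ and use the concentration bounds for $|A \cap B^h_j|$ and $|A \cap B^\ell_j|$ individually, which are actually established inside the proof of Lemma~\ref{lem:phi} (each sub-bucket has expected intersection $\ge \alpha/2$, making Chernoff applicable). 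Both sub-bucket sizes lie in $[1-1/\gamma,\, 1+1/\gamma]\cdot|A|b\rho^j/(2n)$, and since the coefficients $3/2$ and $1/2$ sum to $2$, the weighted sum telescopes to $[1-1/\gamma,\, 1+1/\gamma]\cdot|A|/(\tau n)$, matching the $D_1$ bound.

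The second statement follows by summing over $j$: only the $\phi_A$ indices $j \in \{\tau - \phi_A + 1, \dots, \tau\}$ contribute, and each contributes a value in $[1-1/\gamma,\, 1+1/\gamma]\cdot|A|/(\tau n)$, so $D_k(A) \in [1-1/\gamma,\, 1+1/\gamma]\cdot \phi_A|A|/(\tau n)$ for $k \in \{1,2\}$. For the final statement about $U_{v_\ell}$, I use that $U_{v_\ell}$ is a disjoint union of the atoms in $\At(U_{v_\ell})$, so $D_k(U_{v_\ell}) = \sum_{A \in \At(U_{v_\ell})} D_k(A)$; summing the per-atom bounds gives the claimed $[1-1/\gamma,\, 1+1/\gamma]\cdot \sum_{A} \phi_A |A|/(\tau n)$.

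The proof is essentially a bookkeeping exercise rather than containing a deep obstacle. The only subtle point, which is worth flagging explicitly, is that we must apply Lemma~\ref{lem:phi} not only to full buckets $B_j$ but to each sub-bucket $B^h_j, B^\ell_j$ in the $D_2$ case; fortunately the proof of Lemma~\ref{lem:phi} already proves sub-bucket concentration, so I would simply cite that intermediate step rather than re-derive it. A second mild care point is that the union bound in Lemma~\ref{lem:phi} is taken over atoms of good nodes (at most $2^{q+q^2} \le (\log n)^{\log\log n}$ many) and over $j \in [\tau]$, and this same event suffices here, so no additional loss in probability is incurred.
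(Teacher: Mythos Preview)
Your proposal is correct and follows essentially the same approach as the paper: condition on the event of Lemma~\ref{lem:phi}, convert the cardinality bounds on $|A\cap B_j|$ into probability-mass bounds via the explicit per-element weights, and then sum over buckets and over atoms. In fact you are more careful than the paper's own write-up, which simply writes $D_k(A)=\sum_j |A\cap B_j|/(b\tau\rho^j)$ (literally valid only for $k=1$); your observation that the $D_2$ case requires the sub-bucket concentration for $|A\cap B_j^h|$ and $|A\cap B_j^\ell|$ already established inside the proof of Lemma~\ref{lem:phi} is exactly the right way to fill that gap.
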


\begin{proof} 
From Equation~\ref{eq:goodnode1} and \ref{eq:goodnode2} in Lemma~\ref{lem:phi} we have 
    $$D_k(A) = \sum_{j} \frac{|A \cap B_j|}{b\tau \rho^j} \in  \left[1-\frac{1}{\gamma}, 1+ \frac{1}{\gamma}\right] \sum_{j:\ge r - \phi_A +1} \frac{|A \cap B_j|}{b\tau \rho^j} = \left[1-\frac{1}{\gamma}, 1+ \frac{1}{\gamma}\right]\frac{\phi_A |A|}{\tau n}.$$
\end{proof}

\paragraph{The $\Good_2(\Alg, x)$ event: }
The input $x$ (we are concerned about) is drawn according to the $\IN$ or $\IY$.
Let  $b(s_i)$ denote the index of the bucket to which the seen element $s_i$ belongs to. Recall the notation $\bm{S}_i = (s_1,\dots,s_i)$. We will use $\bm{B}(\bm{S}_{i})$ for $(b(s_1),\dots,b(s_i))$.
\begin{definition}\label{def:good2}
Let $\Good_2(\Alg, x)$ be the event  defined as:
\begin{enumerate}
    \item   For all $i \in [q]$,  for any $s,s' \in \bm{S}_i$ with $D_k(s),D_k(s') \ne 0$ for any $k \in \{1,2\}$, we have $b(s) \neq b(s')$.
    \item  For all $i \in [q]$,   the ratio  $\frac{D_k(U_{i})}{D_k(s)}$ for any $k \in \{1,2\}$ and $s \in \bm{S}_i$ is not in $[\frac{1}{3\gamma},3\gamma]$.
\end{enumerate}
\end{definition}

\paragraph{The $\Good_3(\Alg, x)$ event: }  An $e_h \in O_i \cup \{U_i\}$ is called the unique heaviest element in $ O_i \cup \{U_i\}$ if $D_k(e_h) > D_{k}(e)$ for all $k \in \{1,2\}$ and $e \in  O_i \cup \{U_i\}$ such that $e \neq e_h$.
Let $\Good_3$ be the event that for all $i \in [q]$, the outcome of $\COND_{D_k}(O_i \cup \{U_i\})$ is the unique heaviest element in $O_i \cup \{U_i\}$ for all $i \in [q]$.

\paragraph{The $\Good(\Alg, x)$ event: }


We define the event  
\begin{equation}\label{def:good}
    \Good  = \Good_1 \wedge \Good_2 \wedge \Good_3.
\end{equation}

\subsection{Proof of Theorem~\ref{thm:mainlb}}

We will use Theorem~\ref{thm:extyao} to prove Theorem~\ref{thm:mainlb}. The event $\Bad(\Alg, x)$ is the event $\overline{\Good(\Alg, x)}$, where $\Good(\Alg, x)$ is the event defined in Definition~\ref{def:good}. So we note that Theorem~\ref{thm:mainlb} follows from the following two lemmas (Lemma~\ref{lem:Bad} and \ref{lem:if-good-then-lb}).

\begin{lemma}\label{lem:Bad} For any deterministic core-adapative tester $\Alg$ that makes $q \le \frac{\log \log n}{100 \log \log \log n}$  queries,
     \begin{enumerate}
         \item $\Pr_{x\in_R \IY}[\Good(\Alg, x)] \geq 1 - \delta/2$,
        \item $\Pr_{x\in_R \IN}[\Good(\Alg, x)] \geq 1- \delta/2$.
     \end{enumerate}
\end{lemma}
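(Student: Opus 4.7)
The plan is to bound $\Pr[\overline{\Good_i}]$ for $i=1,2,3$ separately by $\delta/6$ and conclude by a union bound, under both $\IY$ and $\IN$. The proof will center on $\Good_1$ (non-traversal of a bad node); once the atom statistics are well-behaved, $\Good_2$ and $\Good_3$ will follow from short computations. Since the decision tree of a general $\COND$-tester has $2^{O(q^2)}$ nodes, a direct union bound over atoms only reproduces the $\Omega(\sqrt{\log\log n})$ barrier of~\cite{acharya2018chasm}. The key will therefore be to first analyze $\Good_1$ in the $\Ratio$ model and only afterwards transfer to $\COND$.

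\paragraph{Bounding $\overline{\Good_1}$ via the $\Ratio$ model.} In $\Ratio$, the atom choice inside $U_i$ is driven only by internal randomness, so by Yao's principle the $\Ratio$-tester is a mixture of deterministic $\Ratio$-testers. Each deterministic tree has at most $|O_i|+1 \le q+1$ children per node (the atom $V$ is fixed by derandomization), hence at most $q^{O(q)}$ nodes in total and at most $q^{O(q)} \cdot 2^q$ atoms overall; for $q \le \log\log n/(100 \log\log\log n)$ this is only $(\log n)^{o(1)}$. I then union-bound over these atoms. For a fixed atom $A$ and bucket index $j$, the event $|A|b\rho^j/n \in (1/\alpha, \alpha)$ is an event on the integer $\kappa$ (since $b=2^\kappa$), satisfied for at most $O(\log \alpha) = O((\log\log n)^3)$ values of $\kappa$ out of $\Theta(\log n)$. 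An analogous count handles condition~2 and condition~3 of Definition~\ref{def:good node} (each imposes that certain specific ratios avoid a polylogarithmically-wide multiplicative window around $1$, again ruling out only polylogarithmically many $\kappa$). The resulting union bound gives $\Pr[\overline{\Good_1}] = o(1)$ in the $\Ratio$ model, and invoking the TV-closeness bound of Lemma~\ref{lem:good1-high-prob} (previewed in the overview) transports the bound to $T(\COND)$ at the cost of another $o(1)$.

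\paragraph{$\Good_2$, $\Good_3$, and the main difficulty.} Conditioned on $\Good_1$, Lemma~\ref{cor:prob-mass-conc} gives sharp concentration of atom masses. For $\Good_2$ part~1, any atom $A$ is either small (so $|A\cap B_j|=0$ for all $j$ by Lemma~\ref{lem:phi}, and hence contributes zero sample mass a.s.) or large (so a sample from it lands roughly uniformly over at least $\phi_A \ge \phi$ buckets); a birthday-style calculation then yields collision probability $O(q^2/\phi) = o(1)$. Part~2 is immediate from condition~3 of Definition~\ref{def:good node}, which places $D_k(U_i)/D_k(s)$ outside $[1/\gamma,\gamma]$ and therefore outside $[1/(3\gamma), 3\gamma]$ once the $(1\pm 1/\gamma)$ slack of Lemma~\ref{cor:prob-mass-conc} is accounted for. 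For $\Good_3$, $\Good_1 \wedge \Good_2$ forces the heaviest element in $O_i \cup \{U_i\}$ to dominate every other element by a factor $\Omega(\gamma)$, so each $\COND$ query returns it except with probability $O(1/\gamma)$, and a union bound over $q$ queries costs only $O(q/\gamma) = o(1)$. The most delicate step will be the $\Ratio$-to-$\COND$ transfer: the closeness of the two leaf-distributions itself presumes some control over atom statistics (exactly what $\Good_1$ is intended to provide), so care will be needed to avoid circularity when invoking Lemma~\ref{lem:good1-high-prob} inside the proof of $\Good_1$ -- presumably by applying the TV-closeness only on sub-trees that have not yet encountered a bad node.
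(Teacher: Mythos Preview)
Your proposal is correct and follows the paper's approach essentially step-for-step: the three-way decomposition into $\Good_1,\Good_2,\Good_3$, the $\Ratio$-model sparsification to a $q^{O(q)}$-node tree combined with a bad-$\kappa$ count (the paper's Claim~\ref{clm:prob-x-bad-u}), the TV transfer to $\COND$ conditioned on not yet having hit a bad node (Lemma~\ref{lem:tv-small-G1}), and then the birthday and heaviest-element arguments for $\Good_2$ and $\Good_3$. One clarification on references: Lemma~\ref{lem:good1-high-prob} \emph{is} the $\Good_1$ statement you are proving, so the TV-closeness result you invoke is a separate lemma (Lemma~\ref{lem:tv-small-G1}), and it is indeed stated conditionally on $\G_1 \wedge \G_2$ exactly as you anticipate in your final paragraph, which dissolves the circularity.
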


\begin{lemma}\label{lem:if-good-then-lb} For any deterministic core-adapative tester $\Alg$ that makes $q \le \frac{\log \log n}{100 \log \log \log n}$  queries,
for any $\sigma \in [n]^q$ 
    \begin{align*} 
\Pr_{x\in_R \IY}[\mbox{ the answers to the $q$} & \mbox{ queries made by $\Alg(x)$ is $\sigma$ } \mid \Good(\Alg, x)] \\
\leq & \frac{3}{2}\Pr_{x\in_R \IN}[\mbox{ the answers to the $q$ queries made by $\Alg(x)$ is $\sigma$ } \mid \Good(\Alg, x)].
\end{align*}
\end{lemma}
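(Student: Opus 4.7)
The plan is to reduce the statement to showing that the ratio $\Pr_{x \in_R \IY}[\sigma, \Good(\Alg,x)]/\Pr_{x \in_R \IN}[\sigma, \Good(\Alg,x)]$ lies in $1 \pm o(1)$. Combined with Lemma~\ref{lem:Bad} (which gives $\Pr[\Good] \ge 1 - \delta/2$ on both sides), this immediately yields the $3/2$ factor after canceling the denominators $\Pr[\Good]$. To compare the two joint probabilities, I will expand both by the chain-rule factorization $\prod_{i=1}^{q} \Pr[s_i \mid s_1,\dots,s_{i-1}]$ along the sample trajectory and argue step by step. A first simplification: by Definition~\ref{def:core-adaptive}, the label $(O_i,U_i)$ of the $i$-th node depends only on the configurations $(c_1,\dots,c_{i-1})$, and therefore only on the sample labels through those configurations. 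Consequently, once I fix the tester's internal randomness and the history of labels, the $i$-th query set is identical under $\IY$ and $\IN$, and only the sampling distribution responding to it differs.

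Next, I unpack each conditional factor using Observation~\ref{obs:cond-new-defn}. Conditioning on $\Good_3$, the first coin-flip of the $\COND$ query deterministically returns the unique heaviest element $e_h \in O_i \cup \{U_i\}$; moreover, the $3\gamma$-separation in $\Good_2(2)$ between $D_k(U_i)$ and each $D_k(s)$, together with the fact that (by $\Good_2(1)$) the elements of $O_i$ lie in pairwise distinct buckets whose $Q_1$-masses differ by the huge factor $\rho = 2^{\sqrt{\log n}}$, forces the identity of $e_h$ to agree across $Q_1$ and $Q_2$ (since the factor-$3/2$ perturbation going from $Q_1$ to $Q_2$ is negligible relative to these gaps). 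If $e_h \in O_i$, the $i$-th factor equals $1$ on both sides; otherwise $e_h = U_i$ and the factor equals $D_k(v)/D_k(U_i)$ for $v = s_i \in U_i$. Lemma~\ref{cor:prob-mass-conc} then gives $D_1(U_i), D_2(U_i) \in (1 \pm 1/\gamma)\,\phi_{U_i}|U_i|/(\tau n)$, so the denominators on the two sides agree up to a $(1 \pm O(1/\gamma))$ factor.

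The remaining task is to handle the numerators $D_k(v)$. For queries to $D_1$ they are identical on both sides (both equal $Q_1(v)$); for queries to $D_2$, $Q_2(v)/Q_1(v) \in \{3/2,1/2\}$ according to whether $v$ lies in the high or low subbucket of its bucket. Setting $X_i := Q_2(s_i)/Q_1(s_i)$ for each step with $k_i = 2$ and $e_h = U_i$, $\Good_2(1)$ guarantees that the samples lie in pairwise distinct buckets, so the $X_i$ are independent random variables over the uniform random subbucket partition that distinguishes $\IN$ from $\IY$, each with $\E[X_i] = 1$. Marginalizing over the subbucket randomness therefore yields $\E[\prod_i X_i] = 1$, and after telescoping the chain rule and absorbing the accumulated $(1 \pm 1/\gamma)^{q}$ slack, $\Pr_{\IN}[\sigma, \Good]/\Pr_{\IY}[\sigma, \Good] = 1 \pm O(q/\gamma)$. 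With $q \le \log\log n/(100 \log\log\log n)$ and $\gamma = (\log\log n)^{9}$, this is $1 \pm o(1)$, comfortably inside the $3/2$ target.

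The main technical obstacle I anticipate is that the $\Good$ event itself depends on the subbucket partition (through $\Good_2(2)$ and $\Good_3$ when $k=2$), so one cannot naively pull the indicator $\mathbf{1}_{\Good}$ outside the expectation over subbuckets. The resolution is to exploit the large gaps built into the definitions: the $3\gamma$-gap in $\Good_2(2)$ and the $\rho$-scale dominance gap in $\Good_3$ both dwarf the mere factor-$3/2$ perturbation that separates $Q_2$ from $Q_1$. This should let me argue that the indicator of $\Good$ is, up to negligible error, a function of the support, the bucket partition, and the sample labels alone, and is thus insensitive to the subbucket relabeling. Once this invariance is made precise, the subbucket marginalization passes through cleanly, and the per-step estimates above assemble into the claimed bound.
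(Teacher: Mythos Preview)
Your approach is sound in outline but takes a genuinely different route from the paper's. The paper never tracks the individual answer labels $\sigma$ and never marginalizes over the sub-bucket partition. Instead it augments the state to the pair $\bigl(V_i,\,\mcB(\bm{S}_i)\bigr)$ --- the current node of the decision tree together with the sequence of bucket indices of the samples seen so far --- and proves by induction on $i$ that
\[
\Pr\bigl[V^{\YES}_i=v_i,\ \mcB^{\YES}(\bm{S}_i)=\bm{B}_i\bigr]\ \le\ \Bigl(1+\tfrac{100i}{\gamma}\Bigr)\,\Pr\bigl[V^{\NO}_i=v_i,\ \mcB^{\NO}(\bm{S}_i)=\bm{B}_i\bigr].
\]
The key point is that, at the bucket level, the one-step transition probabilities are already nearly identical for $Q_1$ and $Q_2$: conditioned on $\Good_3$ picking $e_h=U_i$, the atom $W\in\At(U_i)$ is chosen with probability in $(1\pm 1/\gamma)\,|W|/|U_i|$ and the bucket index within $W$ is chosen with probability in $(1\pm 1/\gamma)\,1/\phi_W$, in both the $\YES$ and $\NO$ instances, directly from Lemma~\ref{cor:prob-mass-conc}. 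No averaging over sub-buckets is ever performed, and hence the $\Good$ indicator never needs to be decoupled from the sub-bucket randomness. Summing over all bucket sequences at $i=q$ gives the bound on the leaf distribution, which is what the application to Theorem~\ref{thm:extyao} actually uses.

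Your argument instead confronts the sub-bucket randomness head-on via the identity $\E\bigl[\prod_i X_i\bigr]=1$, and consequently must argue that $\mathbf{1}_{\Good}$ is essentially insensitive to the sub-bucket partition --- an obstacle you correctly isolate, and whose resolution (the $3\gamma$ and $\rho$ scale gaps dominate the factor-$3/2$ perturbation separating $Q_2$ from $Q_1$) is right, though it does take some care to make fully rigorous (in particular you must also absorb the sub-bucket dependence of the denominators $Q_2(U_i)$, which is a further $(1\pm 1/\gamma)^q$ slack rather than an exact cancellation). The payoff of your route is that it proves the $\sigma$-pointwise statement literally as written, whereas the paper really establishes (and only needs) the coarser leaf-level version; the cost is the extra decoupling step. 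The paper's bucket-augmentation trick is worth internalizing: by coarsening from individual elements to buckets, it makes $Q_1$ and $Q_2$ look identical deterministically up to the $1\pm 1/\gamma$ concentration error, and thereby sidesteps the averaging argument altogether.
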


We will prove Lemma~\ref{lem:if-good-then-lb} in Section~\ref{sec:ifGoodthenlb} and the proof of Lemma~\ref{lem:Bad} in Section~\ref{sec:bad}.

\subsection{Proof of Lemma~\ref{lem:Bad}}
\label{sec:bad}

Each of the parts of Lemma~\ref{lem:Bad} can be proved in three steps. First, we will prove that $\Good_1$ happens with probability at least $(1-\delta/6)$. Then we show that assuming $\Good_1$, the event $\Good_2$ happens with probability at least $(1-\delta/6)$. and finally assuming $\Good_1$ and $\Good_2$ then the event $\Good_3$ will happen with probability at least $(1-\delta/6)$. 

The lower bound on the probability of $\Good_1$ is a main technical lemma, and we present the precise statement given in Lemma~\ref{lem:good1-high-prob}. The proof of Lemma~\ref{lem:good1-high-prob} is presented in Section~\ref{sec:good1}.

\begin{lemma}
\label{lem:good1-high-prob}
    If the number of adaptive queries made by the algorithm $\Alg$ is  $q \le \frac{\log \log n}{100 \log \log \log n}$,  then 
    \begin{enumerate}
        \item $\Pr_{x\in_R \IY} \left[\Good_1(\Alg, x)\right] \geq 1-\frac{\delta}{6}$,
        \item $\Pr_{x\in_R \IN} \left[\Good_1(\Alg, x)\right] \geq 1-\frac{\delta}{6}$.
    \end{enumerate}
\end{lemma}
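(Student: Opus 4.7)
The plan is to prove Lemma~\ref{lem:good1-high-prob} via a comparison with the $\Ratio$-query analogue of the tester, following the framework sketched in Section~\ref{subsec:highlevelframe}. Let $\Alg^*$ denote the tester obtained from $\Alg$ by replacing every $\COND$ query with the corresponding $\Ratio$ query (Definition~\ref{def:ratio-query}). Whether a decision-tree node is good or bad depends only on the tree and on the random input $x$ (Definition~\ref{def:good node}), not on which oracle is accessed. Hence it suffices to establish (a) $\Pr_x[\Alg^*(x) \text{ traverses a bad node}] \le \delta/12$, and (b) the run distributions of $\Alg$ and $\Alg^*$ on the same random input $x$ are within $\delta/12$ in total variation distance. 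Combining (a) and (b) yields $\Pr_x[\Good_1(\Alg, x)] \ge 1 - \delta/6$ uniformly for $x \in_R \IY$ and for $x \in_R \IN$.

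For part (a), the crucial feature of the $\Ratio$ oracle is that the atom-selection step~\ref{step:ratio-second} of Definition~\ref{def:ratio-query} uses the distribution-independent rule $|V|/|U_i|$. I would therefore realise $\Alg^*$ as a mixture over ``atom-selection seeds'': each seed pre-commits, as a function of the configuration prefix, to which atom $V \in \At(U_i)$ the tester descends into at step $i$. Each deterministic instantiation of $\Alg^*$ now has per-node branching factor at most $|O_i|+1 \le q+1$ (either $s_i$ coincides with one of the $\le q$ previously seen samples, or $s_i$ is a fresh element of the pre-committed atom $V_i$), so its decision tree has at most $(q+1)^q = q^{O(q)}$ internal nodes. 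Re-running the per-node bad-probability calculation implicit in~\cite{acharya2018chasm} -- the uniform random choice of $\kappa$ over $\Theta(\log n)$ values makes any fixed node violate Definition~\ref{def:good node} with probability $\widetilde{O}(1/\sqrt{\log n})$, since each of the three ``intermediate''-range conditions spans only $\widetilde{O}(1)$ values of $\kappa$ per (atom, bucket) pair and there are at most $2^q \tau$ such pairs per node -- and union-bounding over the $q^{O(q)}$ nodes gives a bad-node probability of $\widetilde{O}(q^{O(q)}/\sqrt{\log n}) = o(1)$ for $q \le \log\log n/(100\log\log\log n)$, since then $q\log q \le \tfrac{1}{2}\log\log n - \omega(1)$. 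Averaging over the atom-selection seed preserves the bound.

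For part (b), I would couple $\Alg$ and $\Alg^*$ step by step on the same input, stopping at the first step at which either process exits a good node. On the common good prefix, Observation~\ref{obs:cond-new-defn} shows that the only difference at step $i$ is the atom-selection rule within $\At(U_i)$. Item~2 of Definition~\ref{def:good node}, combined with Lemma~\ref{lem:phi} and Corollary~\ref{cor:prob-mass-conc}, says that every atom is either large with $\phi_A \ge \phi$ and $D_k(A) = (1 \pm 1/\gamma)\phi_A|A|/(\tau n)$, or small with $D_k(A) = 0$ and $|A| \le n/(\alpha b \rho^\tau)$. The multiplicative gap $\alpha^2 \rho^\phi$ between small and large atom sizes dwarfs the $2^q$-many atoms, so small atoms contribute a negligible fraction of $|U_i|$; both selection rules are therefore essentially supported on large atoms. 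Item~3 of Definition~\ref{def:good node} then passes the comparison through the outer $\COND_{D_k}(O_i \cup \{U_i\})$ normalisation in Observation~\ref{obs:cond-new-defn}. This produces a per-step TV bound of $\widetilde{O}(1/\gamma)$, and summing over $q$ steps yields total variation $\widetilde{O}(q/\gamma) = o(1)$ since $\gamma = (\log\log n)^9 \gg q$.

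The main obstacle I expect is the per-step TV comparison in part (b). Even after restricting to good nodes and discarding small atoms, the distributions $\phi_V|V|/\sum_{V'}\phi_{V'}|V'|$ (from $\COND$) and $|V|/\sum_{V'}|V'|$ (from $\Ratio$) are not pointwise close, because $\phi_V$ can range from $\phi = (\log\log n)^{20}$ up to $\tau = \sqrt{\log n}/4$. What must be exploited is that $\phi_V$ appears symmetrically in numerator and denominator, so its effect largely cancels when one compares atom-selection probabilities; quantifying this cancellation requires a careful accounting of which large atoms carry the bulk of $\sum_{V'}\phi_{V'}|V'|$, and it is here that the $\gamma$-slack is spent. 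Making the bound hold uniformly along the random run of the tester, rather than just pointwise per node, is the delicate part and is what justifies defining the good event to be ``the tester does not pass through a bad node'' instead of ``no bad node exists in the tree''.
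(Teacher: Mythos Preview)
Your overall architecture matches the paper exactly: (a) bound the probability that the $\Ratio$-variant hits a bad node via the $q^{O(q)}$ tree-size argument and a per-node Claim giving probability $\widetilde O(2^q/\sqrt{\log n})$ over the random $\kappa$; (b) couple $\Alg_{\COND}$ and $\Alg_{\Ratio}$ level-by-level and show the atom-selection distributions are $O(1/\gamma)$-close per step. The paper formalises these as Lemma~\ref{lem:ratio-good}, Claim~\ref{clm:prob-x-bad-u}, and Lemma~\ref{lem:tv-small-G1} respectively.

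The one genuine gap is precisely the obstacle you flag at the end: comparing $\dfrac{\phi_V|V|}{\sum_W \phi_W|W|}$ with $\dfrac{|V|}{|U_v|}$. Your ``cancellation'' intuition is in the right direction but not yet an argument, and your split into large versus small atoms (in the sense of Item~2 of Definition~\ref{def:good node}) is not the split that works. The paper instead lets $V_m$ be the \emph{largest} atom in $\At(U_v)$ and splits according to whether $|V| < |V_m|/\rho$ or $|V| \ge |V_m|/\rho$. For $|V| < |V_m|/\rho$, both probabilities are at most $1/\rho$ (using that $\phi_V \le \phi_{V_m}$ since $\phi_A$ is monotone in $|A|$), and there are at most $2^q \le \log n$ such atoms, so their total contribution is $O(\log n/\rho)$. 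For $|V| \ge |V_m|/\rho$, the key observation you are missing is that Definition~\ref{def:phi} forces $|\phi_V - \phi_{V_m}| \le 1$, so all the relevant $\phi$-values are within a $1 \pm 1/\phi_{V_m} \le 1 \pm 1/\phi$ multiplicative window and indeed cancel, yielding $\bigl|\tfrac{\phi_V|V|}{\sum_W \phi_W|W|} - \tfrac{|V|}{|U_v|}\bigr| \le \tfrac{2}{\gamma}\cdot \tfrac{|V|}{|U_v|}$.

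One small correction: Item~3 of Definition~\ref{def:good node} is not what passes the comparison through the outer $\COND_{D_k}(O_i\cup\{U_i\})$ step --- that step is \emph{identical} in the $\COND$ and $\Ratio$ oracles (compare Observation~\ref{obs:cond-new-defn} with Definition~\ref{def:ratio-query}), so no comparison is needed there. Item~3 is used only later, for $\Good_2$ and $\Good_3$.
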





We will now prove that conditioned on  the event $\Good_1$ holds the event $\Good_2$ holds with high probability. Let us consider the case where $x$ is drawn according to $\IN$. We note that the equivalent statement holds for the case when $x$ is drawn according to $\IY$. 

We will bound the probability that $s_i$ falls in the same bucket corresponding to some previous samples. The seen 
element (with non-zero probability) can 
only come from the query $\COND_{D_k}(W)$ for 
some large atom $W$ for any $k \in \{1,2\}$. Note that for any large atom 
$W$, we have for all $k \in \{1,2\}$, 
$D_k(W \cap B_j) \in
[1-\frac{1}{\gamma},1+\frac{1}{\gamma}]\frac{|W|}{\tau n}$ for 
all $j \ge r - \phi_W +1$ and $D_j(W \cap B_j) = 0$ for $j \le r- \phi_W$ . 
Therefore, 
$$\Pr\left[\mbox{ the 
sample from $\COND_{D_k}(W)$ is in  bucket $B_j$}\right] = \left\{
	\begin{array}{ll}
		\in [1-1/\gamma,1+1/\gamma] \frac{1}{\phi_W}  & \mbox{if } j \ge r-\phi_W+1 \\
		= 0 & \mbox{if } j \le r - \phi_W
	\end{array}
\right.$$


Consider the inductive hypothesis - for all  $i \in [q]$, with probability at least $1-\frac{i^2}{\phi}$, we have $b(s) \neq b(s')$ for all $s,s' \in \bm{S}_{i}$.  Assuming the inductive hypothesis true for $i-1$, the probability that the $i$-th sample $s_i$ falls into any   bucket in $\bm{B}(\bm{S}_{i-1})$ is at most $(1+1/\gamma) \frac{i-1}{\phi_W} \le (1+1/\gamma) \frac{i-1}{\phi} \le \frac{2(i-1)}{\phi}$. In other words, with probability at least $1 - \frac{(i-1)^2}{\phi} - \frac{2(i-1)}{\phi} \ge 1 - \frac{i^2}{\phi},$   we have $b(s) \neq b(s')$ for all $s,s' \in \bm{S}_{i}$,   i.e., the induction hypothesis is true for $i$ as well. Hence the first point of event $\Good_2$ happens with probability at least $1-q^2/\phi$. 

Now we show that the second point happens with high probability assuming that the first point happens.
Assuming $\Good_1$ happens, for all $U_{i}$, we have for all $j \in [\tau]$,
 \begin{itemize}
     \item Either $\frac{\sum_{A \in \At(U_{i})} \phi_A |A|}{\tau n} \ge  \frac{\gamma}{\tau b\rho^j} \ge \frac{2 D_k(s_i)}{3\gamma}$,
     \item Or $\frac{\sum_{A \in \At(U_{i})} \phi_A |A|}{\tau n} \le  \frac{1}{\gamma \tau b\rho^j} \leq 2\gamma D_k(s_i)$.
 \end{itemize}

 Further by Lemma~\ref{cor:prob-mass-conc}, we have 
 $$D_k(U_i) \in \left[1-\frac{1}{\gamma},1+\frac{1}{\gamma}\right] \sum_{A \in \At(U_i)} \frac{\phi_A |A|}{\tau n}.$$ 

Therefore, \begin{equation}\label{eq:good3-1}
    \mbox{$D_k(U_i) \leq (1 + \frac{1}{\gamma})\cdot 2\gamma D_k(s_i) \leq 3\gamma D_k(s_i)$ or $D_k(U_i) \geq (1 - \frac{1}{\gamma})\cdot \frac{2D_k(s_i)}{3\gamma}\geq \frac{D_k(s_i)}{3\gamma}$.}
\end{equation}

Thus, we have 
\begin{equation}\label{eq:good2}
\Pr_{x\in_R \IN} \left[\Good_2(\Alg, x) \mid \Good_1(\Alg, x)\right] \geq 1-\frac{\delta}{6} \mbox{ and }
        \Pr_{x\in_R \IY} \left[\Good_2(\Alg, x) \mid \Good_1(\Alg, x)\right]\geq 1-\frac{\delta}{6}.\end{equation}



Now let us assume that the event $\Good_2$ happens. 
Note that $\rho >> \gamma$ and thus if $b(s) \neq b(s')$, we have
\begin{equation}\label{eq:good3-2}
    \mbox{either $\frac{D_k(s)}{D_k(s')} \ge \rho \ge \gamma$ or $\frac{D_k(s)}{D_k(s')} \le 1/\rho\le 1/\gamma$.} 
\end{equation}

Let $e_h$ be the unique heaviest element in $O_i\cup \{U_i\}$. From Equation~\ref{eq:good3-1} and \ref{eq:good3-2} we note that $$\sum_{e \in (O_i \cup \{U_i\})\setminus e_h}D_k(e) \le \frac{D_k(e_h)}{\gamma} + \frac{D_k(e_h)}{\gamma^2} + \cdots \le 2 D_k(e_h)/\gamma.$$   Therefore, with probability at least $(1 - 2/\gamma)$, the outcome of the conditional query $\COND_{D_k}(O_i \cup \{U_i\})$ is  $e_h \in O_i \cup \{U_i\}$. Thus,

   \begin{equation}\label{eq:good3}
       \Pr_{x\in_R \IN} \left[\Good_3(\Alg, x) \mid \Good_1(\Alg, x)\wedge \Good_2(\Alg, x)\right], 
       \Pr_{x\in_R \IY} \left[\Good_3(\Alg, x) \mid \Good_1(\Alg, x)\wedge \Good_2(\Alg, x)\right]
       \geq 1-\frac{\delta}{6} .
   \end{equation}

From Lemma~\ref{lem:good1-high-prob} and Equation~\ref{eq:good2} and \ref{eq:good3} we have Lemma~\ref{lem:Bad}.

\subsubsection{Proof of Lemma~\ref{lem:good1-high-prob}}\label{sec:good1}
In this subsection, we prove Lemma~\ref{lem:good1-high-prob}, i.e., $\Good_1(\Alg,x)$ happens with high probability (when $\Alg$ has an access to a $\COND$ query model). 



Informally speaking, we first argue that if $\Alg$ had access to a $\Ratio$ oracle (instead of $\COND$), then $\Good_1(\Alg,x)$ would happen with high probability. Let us consider the event $\G_1$ that denotes the run of an algorithm $\Alg$ having access to $\Ratio$ oracle, on input $x$ does not pass through a bad node. It is worth highlighting the difference between the $\Good_1$ and $\G_1$ event: The first one is defined for $\Alg$ having access to $\COND$ oracle, whereas the latter is defined for $\Alg$ having access to $\Ratio$ oracle. In Lemma~\ref{lem:ratio-good}, we claim that $\G_1$ happens with a high probability. Then we show that given the event $\G_1$ happens, $\Good_1$ happens with a high probability. As a consequence, we conclude that $\Good_1$ happens with a high probability, which completes the proof of Lemma~\ref{lem:good1-high-prob}.

\begin{lemma}
    \label{lem:ratio-good}
    If an algorithm $\Alg$ makes $q \le \frac{\log \log n}{100 \log \log \log n}$ adaptive $\Ratio$ queries, then 
    \begin{enumerate}
        \item $\Pr_{x\in_R \IY} \left[\G_1(\Alg, x)\right] \geq 1- o(1)$,
        \item $\Pr_{x\in_R \IN} \left[\G_1(\Alg, x)\right] \geq 1-o(1)$.
    \end{enumerate}
\end{lemma}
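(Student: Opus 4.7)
The plan is to leverage the defining difference between $\Ratio$ and $\COND$: in the $\Ratio$ model, the atom $V\in\At(U_i)$ selected in Step~\ref{step:ratio-second} is drawn with input-independent probabilities $|V|/|U_i|$. Consequently this choice is internal randomness of the tester, so any randomized $\Ratio$ tester can be written as a convex combination of \emph{deterministic instantiations} $\Alg'$, each of which commits in advance (as a function of the seen configurations) to the atom $V$ of every $U_i$. It suffices to prove $\Pr_x[\overline{\G_1(\Alg',x)}]=o(1)$ uniformly over deterministic $\Alg'$ and then average.

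The decisive gain from this derandomization is that the decision tree $R'$ of any such $\Alg'$ has branching factor at most $q+1$ rather than $2^q$: at any node of depth $i-1$, the sample $s_i$ is either one of the $|O_i|\le q$ elements of $O_i$ (giving $|O_i|$ distinct configurations) or a fresh element of the pre-committed atom $V\subseteq U_i$ (giving one further configuration). Hence $|R'|\le(q+1)^q$, so the total number of atoms appearing across $R'$ is at most $(q+1)^q\cdot 2^q \le 2^{O(\log\log n/100)} \le (\log n)^{1/50}$ under our hypothesis $q\le\log\log n/(100\log\log\log n)$---a dramatic reduction from the $2^{q^2}$ bound that previously limited lower bounds to $\Omega(\sqrt{\log\log n})$.

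With $R'$ fixed, I would union-bound, over all atoms $A$ appearing in $R'$ and all buckets $j\in[\tau]$, the probability that any of the three conditions of Definition~\ref{def:good node} fails. The only relevant randomness is the uniform choice of $\kappa\in\{0,\ldots,\lfloor\log n/2\rfloor\}$ defining $b=2^\kappa$: on the $\log$-scale $\log(|A|b\rho^j/n)=\log|A|-\log n+\kappa+j\sqrt{\log n}$ is linear in $\kappa$ with slope $1$, so the forbidden band $(1/\alpha,\alpha)$ of Condition~$1$ has $\kappa$-preimage of length $2\log\alpha=O((\log\log n)^3)$, contributing probability $O((\log\log n)^3/\log n)$ per pair $(A,j)$. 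Condition~$2$ follows from Condition~$1$ except when the transition bucket $j_0(A)$ (above which $|A|b\rho^j/n\ge\alpha$) lands in the top $\phi$ buckets, an event whose $\kappa$-preimage has length $O(\phi\sqrt{\log n})$ and hence probability $O(\phi/\sqrt{\log n})$ per atom. For Condition~$3$, conditional on Condition~$1$, the sum $S_U(\kappa)=\sum_{A\in\At(U)}\phi_A|A|$ is piecewise constant in $\kappa$ with at most $O(2^q\tau)$ pieces, and on each piece $S_U\cdot b\rho^j/n$ is exponential in $\kappa$, so the forbidden window $[1/\gamma,\gamma]$ contributes a $\kappa$-interval of length $O(\log\gamma)=O(\log\log\log n)$; since $\log\gamma\ll\sqrt{\log n}$, the forbidden windows for distinct $j$ are well separated and at most one $j$ is bad for any fixed $(\kappa,U)$.

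The step I expect to be most delicate is Condition~$3$: the piecewise structure of $S_U(\kappa)$ mixes contributions from many atoms of different sizes and threshold locations, and a careful count---splitting pieces according to whether their length exceeds $\sqrt{\log n}$---is needed to verify that the set of bad $\kappa$ for any fixed $U$ has measure $o(1/|R'|)$. Once the three per-condition bounds are in place, a union bound over the $(\log n)^{1/50}$ atoms and $U$'s and over $\tau=O(\sqrt{\log n})$ buckets gives total failure probability $o(1)$. The analysis depends only on the marginal law of $\kappa$ and of the random bucket partition, which agree under $\IY$ and $\IN$, so both parts of Lemma~\ref{lem:ratio-good} follow simultaneously, and averaging over deterministic instantiations completes the proof.
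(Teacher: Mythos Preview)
Your proposal is correct and follows essentially the same route as the paper: derandomize the input-independent atom-selection step of $\Ratio$, observe that each deterministic instantiation has a decision tree with at most $(q+1)^{O(q)}$ nodes, and then union-bound using a per-node bad-probability estimate (packaged in the paper as Claim~\ref{clm:prob-x-bad-u}) that depends only on the marginal law of $\kappa$, which is identical under $\IY$ and $\IN$. The step you flag as delicate (Condition~3) is actually simpler than you fear: both $2^\kappa$ and $\sum_{A}\phi_A|A|$ are non-decreasing in $\kappa$, so their product at least doubles with each unit increase of $\kappa$, and hence the bad $\kappa$-set for any fixed $(U,j)$ is automatically an interval of length at most $2\log\gamma$---no piecewise bookkeeping is needed.
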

\begin{proof}
First, observe that for any $\Ratio$ query, 
Step~\ref{step:ratio-second} (of Definition~\ref{def:ratio-query}) does not depend on the
input distribution (on which the $\Ratio$ query is placed). Thus a randomized algorithm can simulate this step without accessing the input distribution (by picking an atom $V \in \At(U_i)$ with probability $|V|/|U_i|$). Let $\Alg'$ be the new (randomized) algorithm that simulates Step~\ref{step:ratio-second}. One can think of the randomized algorithm $\Alg'$ as a distribution over a set of algorithms $\Alg'_1,\Alg'_2,\cdots$, where each $\Alg'_r$ is an instantiation of $\Alg'$ by fixing internal randomness that is used to simulate Step~\ref{step:ratio-second}. Thus each $\Alg'_r$ can be represented as a decision tree where each node denotes access (of the form either Step~\ref{step:ratio-second} or Step~\ref{step:ratio-third}) to the input distribution. Since the original algorithm $\Alg$ makes $q$ $\Ratio$ queries, each decision tree has a height at most $2q$. Recall, by our assumption, $\Alg$ is a core-adaptive tester. Thus for any node at the $i$-th level, the number of children is $|O_i| + 1$ (follows from Step~\ref{step:ratio-first}), which in turn is upper bounded by $q+1$ (since $\Alg$ makes $q$ $\Ratio$ queries, by the definition of $O_i$ in Definition~\ref{def:core-adaptive}, $|O_i| \le q$). So the number of nodes present in each such decision tree is at most $(q+1)^{2q}$.

We now use the following claim that helps us in bounding the probability of a node being bad (see Definition~\ref{def:good node}) in a decision tree.
\begin{claim}
\label{clm:prob-x-bad-u}
     Irrespective of whether the input $x \in_R \IY$ or $x \in_R \IN$, the probability that a node of a decision tree is bad is at most $ O\left(\frac{2^q (\log \log n)^{20}}{\sqrt{\log n}}\right)$.
\end{claim}
The proof of the above claim is an adaptation of the arguments used in~\cite{acharya2018chasm} and is provided at the end of this subsection. For now, we assume the above claim holds and continue with the proof. Now, by taking a union bound over all the nodes in the decision tree, the probability that at least one of the nodes is bad is at most $(q+1)^{2q} \cdot O\left(\frac{2^q (\log \log n)^{20}}{\sqrt{\log n}}\right)$. Thus when $q \le \frac{\log \log n}{100 \log \log \log n}$, for each $\Alg'_r$,
\[
\Pr_{x \in_R \IY}\left[\Alg'_r\text{ reaches a bad node on input }x\right]=o(1).
\]
Since the randomized algorithm $\Alg'$ can be thought of as a distribution over the set of algorithms $\Alg'_1,\Alg'_2,\cdots$, we get that
\[
\Pr_{x \in_R \IY}\left[\Alg'\text{ reaches a bad node on input }x\right]=o(1).
\]
By the construction of $\Alg'$, for every input $x$, the probability of reaching a bad node by $\Alg$ and $\Alg'$ are the same. Hence, $\Pr_{x\in_R \IY} \left[\G_1(\Alg, x)\right] \geq 1- o(1)$. A similar argument holds for $x \in_R \IN$. This concludes the proof.
\end{proof}

From now, throughout the rest of this subsection, we use $\Alg_{\COND}$ and $\Alg_{\Ratio}$ to denote the core-adaptive tester $\Alg$ (with the corresponding decision tree $R$) having access to the $\COND$ and $\Ratio$ oracle respectively. Next, for any $i \le q$, let us consider the following two distributions: $L_{\COND}(i)$ and $L_{\Ratio}(i)$ are the distributions of nodes at the $i$-th level of the decision tree associated with $\Alg_{\COND}$ and $\Alg_{\Ratio}$ respectively.

Let $\G_2$ be the event of Lemma~\ref{cor:prob-mass-conc}, i.e., for all good nodes $v$ and for all  atoms $A \in \At(\bm{A}_v)$, we have $D_1(A),D_2(A) \in [1-1/\gamma,1+1/\gamma] \frac{\phi_A |A|}{\tau n}$. Now, we upper bound the total variation distance between the distribution over the nodes of $R$ at any depth $i \le q$, for $\Alg_{\COND}$ and $\Alg_{\Ratio}$, conditioned on event $\G_2$.

\begin{lemma}
    \label{lem:tv-small-G1}
    Given the event $\G_1 \wedge \G_2$ happens, for each $i \le q$, the total variation distance between the two distributions $L_{\COND}(i)$ and $L_{\Ratio}(i)$ is at most $4i/\gamma$.
\end{lemma}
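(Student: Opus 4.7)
The plan is to prove the bound by induction on $i$ via the chain rule for total variation distance, establishing a per-step increment of at most $4/\gamma$ at every good node and then summing.

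First, fix depth $i-1$ and condition on the tester sitting at a good node $v$. By Observation~\ref{obs:cond-new-defn}, a $\COND_{D_k}$ sample at $v$ is produced in two stages: (a) pick $e \in O_v \cup \{U_v\}$ with probability $D_k(e)/D_k(O_v \cup U_v)$ (where $\{U_v\}$ receives total mass $D_k(U_v)/D_k(O_v\cup U_v)$), and (b) if $e = \{U_v\}$, pick an atom $V \in \At(U_v)$ with probability $D_k(V)/D_k(U_v)$ and return a sample from $\COND_{D_k}(V)$. By Definition~\ref{def:ratio-query}, the $\Ratio$ sampler differs from this only in stage (b), where it instead picks $V$ with probability $|V|/|U_v|$. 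Since the pair $(e, V)$ determines the configuration $c_i$ (hence the child of $v$ in $R$) under both models, the one-step TV distance between the two child distributions equals
\[
\frac{D_k(U_v)}{D_k(O_v \cup U_v)} \cdot \frac{1}{2}\sum_{V \in \At(U_v)} \left|\frac{D_k(V)}{D_k(U_v)} - \frac{|V|}{|U_v|}\right|.
\]

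Second, because $v$ is good and $\G_2$ holds, Lemma~\ref{cor:prob-mass-conc} yields
\[
D_k(V) = \left(1 \pm \tfrac{1}{\gamma}\right)\frac{\phi_V |V|}{\tau n}, \qquad D_k(U_v) = \left(1 \pm \tfrac{1}{\gamma}\right)\frac{\Phi}{\tau n}, \qquad \Phi := \sum_{V' \in \At(U_v)} \phi_{V'} |V'|.
\]
I split $\At(U_v)$ into small atoms ($\phi_V = 0$) and large atoms ($\phi_V \ge \phi$). The large gap between the small- and large-atom size thresholds in items~1--2 of Definition~\ref{def:good node}, combined with the parameter choices $\alpha = (\log n)^{(\log\log n)^2}$, $\phi = (\log\log n)^{20}$, and $\rho = 2^{\sqrt{\log n}}$, yields $\sum_{V \text{ small}} |V|/|U_v| \ll 1/\gamma$, so small atoms contribute negligibly. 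For large atoms I rewrite $\frac{D_k(V)}{D_k(U_v)} - \frac{|V|}{|U_v|} = (1 \pm O(1/\gamma))\frac{\phi_V |V|}{\Phi} - \frac{|V|}{|U_v|}$ and use the balanced identity $\sum_V |V|(\phi_V - \Phi/|U_v|) = 0$ together with item~3 of Definition~\ref{def:good node} (incomparability of $\Phi/(\tau n)$ with the per-bucket element weights $1/(\tau b\rho^j)$) to bound the resulting $L_1$ sum by $O(1/\gamma)$. Overall this gives a one-step TV distance of at most $4/\gamma$.

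Finally, the chain rule for TV distance applied to the depth-$i$ random walk on $R$ yields $d_{TV}(L_{\COND}(i), L_{\Ratio}(i)) \le 4i/\gamma$, using that $\G_1$ guarantees every visited node is good, so the one-step estimate applies at each step. The main obstacle is the one-step estimate itself: the two atom-sampling distributions are not proportional on the nose when $\phi_V$ varies across large atoms, so one must combine the size gap in Definition~\ref{def:good node} to neutralise small atoms with the $(1 \pm 1/\gamma)$ concentration from $\G_2$ and the good-node mass incomparability condition to show that the variation of $\phi_V$ around its $|V|$-weighted mean produces only an $O(1/\gamma)$ $L_1$ error rather than a polynomially large one.
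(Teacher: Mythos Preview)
Your overall framework matches the paper's: induct on depth via a per-step TV increment, and at each good node reduce to bounding $\sum_{V \in \At(U_v)} \bigl| D_k(V)/D_k(U_v) - |V|/|U_v| \bigr|$. Your treatment of the atoms with $\phi_V = 0$ is also fine (and essentially equivalent to what the paper does, just with a different size threshold).

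The gap is in your handling of the large atoms. You write that the ``balanced identity'' $\sum_V |V|(\phi_V - \Phi/|U_v|) = 0$ together with item~3 of Definition~\ref{def:good node} bounds the $L_1$ sum by $O(1/\gamma)$. Neither ingredient does any work here. The balanced identity is the trivial statement that the signed sum is zero; it says nothing about $\sum_V |V|\,|\phi_V - \Phi/|U_v||$. Item~3 says that $\Phi/(\tau n)$ avoids each window $[1/(\gamma \tau b \rho^j), \gamma/(\tau b \rho^j)]$; this is a condition comparing the mass of $U_v$ to individual-element masses (and is used only for $\Good_2$), not a constraint on the spread of the $\phi_V$'s across atoms. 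In particular, large atoms can have $\phi_V$ ranging anywhere in $[\phi, \tau]$, so without further input the relative error $|\phi_V/\bar\phi - 1|$ need not be $O(1/\gamma)$.

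The paper supplies the missing idea: split the atoms by \emph{size relative to the largest atom} $V_m$, not by large versus small. For atoms with $|V| < |V_m|/\rho$ both probabilities are at most $1/\rho$, so their total contribution is at most $2^q/\rho \ll 1/\gamma$. For atoms with $|V| \ge |V_m|/\rho$, the definition of $\phi_A$ directly forces $|\phi_V - \phi_{V_m}| \le 1$, hence $\phi_V|V|/\Phi$ and $|V|/|U_v|$ agree up to a multiplicative $(1 \pm 1/\phi_{V_m})$; since $V_m$ is large, $\phi_{V_m} \ge \phi \gg \gamma$, and the per-atom error is at most $(1/\gamma + 1/\phi_{V_m})|V|/|U_v| \le (2/\gamma)|V|/|U_v|$. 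Summing gives the $4/\gamma$ one-step bound. The key combinatorial fact ``$\phi_V$ varies by at most $1$ among atoms within a $\rho$-factor in size'' is what drives the estimate, and it is absent from your argument.
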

\begin{proof}
     Consider an $i \le q$. We denote the total variation distance between the two distributions $L_{\COND}(i)$ and $L_{\Ratio}(i)$ (conditioning on the event $\G_1 \wedge \G_2$) by $\tv(i)$. Note that if a node is bad, then all its descendants are also bad. For any node $v$, let $R(v)$ denote the subtree (of $R$) rooted at the node $v$. Let $R'$ be the tree obtained from $R$ by iteratively pruning $R(v) \setminus v$ (i.e., removing the subtree rooted at $v$ while only keeping the node $v$) for all the bad nodes $v$ in $R$, starting from the lowest level. Observe, in $R'$, no internal node is bad. Furthermore, the probability of reaching a bad node in $R'$ is the same as that in $R$. So it suffices to argue with the distribution of nodes at any level $i$ of the tree $R'$.

Consider any good node $v$ at depth $i$. We upper bound the total variation distance between distributions over children of $v$ for $\Alg_{\COND}$ and $\Alg_{\Ratio}$, conditioned on the fact that the run of the tester is currently at the node $v$. It is not hard to see that if, for each node $v$, the total variation distance between distributions over children of $v$ for $\Alg_{\COND}$ and $\Alg_{\Ratio}$ is upper bounded by $\eta$ (for some $\eta \ge 0$), then
\begin{equation}
    \label{eq:tv-level}
    \tv(i+1) \le \tv(i) + \eta.
\end{equation}

Now, since $\Alg_{\Ratio}$ only differs at Step~\ref{step:ratio-second} (of the $\Ratio$ oracle) with $\Alg_{\COND}$ (see Observation~\ref{obs:cond-new-defn}), it suffices to upper bound the total variation distance for distributions over the atoms of $U_v$. If $U_v$ has no large atom, then $\phi_{U_v}=0$, and thus by Lemma~\ref{cor:prob-mass-conc}, we have $D_1(U_v) = D_2(U_v) = 0$ (conditioned on the event $\G_2$). In this case, the total variation distance is zero. So let us assume that $U_v$ has at least one large atom. Let $V_m$ be the largest atom in $\At(U_v)$ (breaking ties arbitrarily). Hence,
\begin{equation}
    \label{eq:phi-lb}
    \phi_{V_m} \ge \phi.
\end{equation}

Also, by Lemma~\ref{cor:prob-mass-conc}, conditioning on $\G_2$, for all $k \in \{1,2\}$ and $V \in \At(U_v)$, we have
\[
\frac{D_k(V)}{D_k(U_v)} = \theta_V \cdot \frac{\phi_V |V|}{\sum_{W \in \At(U)} \phi_W |W|}
\]
where $\theta_V \in [1-2/\gamma,1+2/\gamma] $. Thus the total variation distance is equal to

\begin{align}
\label{eq:tv-bound}
  &  \sum_{V \in \At(U_v)} \left| \frac{\theta_V \phi_V |V|}{\sum_{W \in \At(U_v)} \phi_W |W|} - \frac{|V|}{|U_v|}\right| \nonumber \\
    &  = \sum_{V \in \At(U_v): |V| < \frac{V_m}{\rho}} \left|\frac{\theta_V \phi_V |V|}{\sum_{W \in \At(U_v)} \phi_W |W|} - \frac{|V|}{|U_v|}\right| +  \sum_{V \in \At(U_v): |V| \ge \frac{V_m}{\rho}}\left|\frac{\theta_V \phi_V |V|}{\sum_{W \in \At(U_v)} \phi_W |W|} - \frac{|V|}{|U_v|}\right|
\end{align}

Now, since $V_m$ is the largest atom in $\At(U_v)$ for any $V$ such that $|V| < \frac{V_m}{\rho}$, both $\frac{\phi_V |V|}{\sum_{W \in \At(U_v)} \phi_W |W|}$ and $\frac{|V|}{|U_v|}$ are in $\left[0,\frac{1}{\rho}\right]$. Since $\At(U_v)$ contains at most $2^q \le \log n$ atoms, 
\begin{equation}
    \label{eq:first-term}
    \sum_{V \in \At(U_v): |V| < \frac{V_m}{\rho}}\left|\frac{\theta_V \phi_V |V|}{\sum_{W \in \At(U_v)} \phi_W |W|} - \frac{|V|}{|U_v|}\right| \le \frac{(1+1/\gamma)\log n}{\rho}.
\end{equation}

Now, it directly follows from the Definition~\ref{def:phi} that
\[
\text{For any }V,W \in \At(U_v) \text{ such that } \frac{|V_m|}{\rho} \le |V|,|W| \le |V_m|,\quad \quad|\phi_V - \phi_W| \in \{0,1\}.
\]
Therefore,
\[
\left(\frac{\phi_{V_m} -1}{\phi_{V_m}} \right)\cdot \frac{|V|}{|U_v|} \le  \frac{\phi_V |V|}{\sum_{W \in \At(U_v)} \phi_W |W|} \le \left(\frac{\phi_{V_m} }{\phi_{V_m} -1}\right) \cdot \frac{|V|}{|U_v|}
\]
and hence
\begin{equation}
    \label{eq:second-term}
    \left|\frac{\theta_V\phi_V |V|}{\sum_{W \in \At(U_v)} \phi_W |W|} - \frac{|V|}{|U_v|}\right| \le \left(\frac{1}{\gamma} + \frac{1}{\phi_{V_m}}\right)\cdot \frac{|V|}{|U_v|} \le \frac{2}{\gamma}\cdot \frac{|V|}{|U_v|}
\end{equation}
where the last inequality follows from Equation~\ref{eq:phi-lb}. Hence, by combining Equation~\ref{eq:tv-bound},~\ref{eq:first-term}, and~\ref{eq:second-term}, the total variation distance is bounded above by 
\[
\frac{(1+1/\gamma)\log n}{\rho} + \frac{2}{\gamma} \le  \frac{4}{\gamma}.
\]
Then by Equation~\ref{eq:tv-level},
\[
\tv(i+1) \le \tv(i) + 4/\gamma.
\]
So we conclude that for each $i \le q$, $\tv(i) \le  4i/\gamma$. 
\end{proof}

Now, we are ready to finish the proof of Lemma~\ref{lem:good1-high-prob}.
\begin{proof}[Proof of Lemma~\ref{lem:good1-high-prob}]
It directly follows from Lemma~\ref{lem:tv-small-G1} that the probability that $\Alg_{\COND}$ reaches a bad node (on the input $x$) is at most 
\[
\sum_{i=1}^q 4i/\gamma + \Pr[\overline{\G_1}] + \Pr[\overline{\G_2}] \le 2q^2/\gamma + \Pr[\overline{\G_1}] + \Pr[\overline{\G_2}].
\]
Then by Lemma~\ref{lem:ratio-good} and Lemma~\ref{cor:prob-mass-conc},
\[
\Pr_{x \in_R \IY}\left[\overline{\Good_1(\Alg,x)}\right] \le 2q^2/\gamma + o(1) \le \delta/6
\]
where the last inequality follows for $q \le \frac{\log \log n}{100 \log \log \log n}$. The same argument also holds for $\Pr_{x \in_R \IN}\left[\overline{\Good_1(\Alg,x)}\right]$, and that concludes the proof.
\end{proof}

So it only remains to prove Claim~\ref{clm:prob-x-bad-u}.
\begin{proof}[Proof of Claim~\ref{clm:prob-x-bad-u}]
Note that $b = 2^\kappa$. Fix a node $v$.  Let 
\begin{align*}
\K_1 =\bigcup_{j \in [\tau],A \in \At(\bm{A}_v)} \left\{\kappa:  \log \left(\frac{n}{\alpha |A| \rho^j}\right) <\kappa < \log \left(\frac{n \alpha }{|A| \rho^j}\right)\right\}
\end{align*}
be all the possible values of $\kappa$ for which the first item of Definition~\ref{def:good node} can get violated. We have
\[
|\K_1| \le \tau \cdot |\At(\bm{A}_v)| \cdot \log \frac{\frac{n \alpha }{|A| \rho^j}}{\frac{n}{\alpha |A| \rho^j}} \le \tau  2^{q+1} \log \alpha.
\]
Let 
\begin{align*}
    \K_2 =\bigcup_{A \in \At(\bm{A}_v)} \left\{\kappa:  \log \left(\frac{n}{\alpha |A| \rho^\tau}\right) <\kappa < \log \left(\frac{n \alpha \rho^\phi}{|A|}\right)\right\}
\end{align*}
 be all the possible values of $\kappa$ for which the second item of Definition~\ref{def:good node} can get violated. We have
 \[
 |\K_2| \le |\At(\bm{A}_v)| \cdot \log \frac{\frac{n \alpha \rho^\phi}{|A|}}{\frac{n}{\alpha |A| \rho^\tau}} \le 2^q (2 \log \alpha + \phi \log \rho).
 \]

Let  

\begin{align*}
\K_3 =\bigcup_{i' \in [i],j \in [\tau]} \left\{\kappa:\frac{n}{\gamma \rho^j}  <  2^\kappa \sum_{A \in At(U_{v_{i'}})} \phi_A |A| < \frac{\gamma n}{\rho^j}\right\}    
\end{align*}
be all the possible values of $\kappa$ for which the third item of Definition~\ref{def:good node} can get violated. 
Observe that 
\begin{align*}
\left|\left\{\kappa:\frac{n}{\gamma \rho^j}  <  2^\kappa \sum_{A \in At(U_{v_{i'}})} \phi_A |A| < \frac{\gamma n}{\rho^j}\right\}\right| \le \left|\left\{\kappa:\frac{n}{\gamma \rho^j}  <  2^\kappa  < \frac{\gamma n}{\rho^j}\right\}\right|.
\end{align*}

So we have $|\K_3| \le 2q \tau \log \gamma$. Therefore,
\[
|\K_1|+|\K_2|+|\K_3| \le \left( 2^{q+1} \log \alpha +  2^q \left(2 \log \alpha + \phi \log \rho\right) + 2q \tau \log \gamma\right).
\]
In both $\YES$ and $\NO$ instances, since $\kappa$ is drawn uniformly from $\{0,1,\dots,\lfloor \frac{\log n}{2} \rfloor \}$, the probability that any given node $v$ is bad is at most
\[
\frac{\left( 2^{q+1} \log \alpha +  2^q \left(2 \log \alpha + \phi \log \rho\right) + 2q \tau \log \gamma\right)}{\log n} \le O\left(\frac{2^q (\log \log n)^{20}}{\sqrt{\log n}}\right).
\]
\end{proof}

\subsection{Proof of Lemma~\ref{lem:if-good-then-lb}}\label{sec:ifGoodthenlb}
 Let $V^{\YES}_i$ be the node reached by $\Alg$ after $i$ queries when the input is drawn from the $\YES$ instance. And let $B^{\YES}(s_i)$ be the bucket in which the $i$th sample belongs. 
Similarly, we define $V^{\NO}_i$ and $B^{\NO}(s_i)$.

Conditioned on the event $\Good$ we will prove that  for any node $v_i$ (at depth $i$) and $(b_1,\dots,b_i) \in [\tau]^i$, 
\begin{align*}
\Pr\left[V^{\YES}_q = v_q, (B^{\YES}(s_1),\dots,B^{\YES}(s_q)) = (b_1,\dots,b_q)\right] & \le \\
\left(1+\frac{100i}{\gamma}\right) & \Pr\left[V^{\NO}_q = v_q, (B^{\NO}(s_1),\dots,B^{\NO}(s_q)) = (b_1,\dots,b_q)\right]
\end{align*}


We prove it by induction on $i$. 
Let the parent of the node $v_i$ be $v_{i-1}$. For brevity, we use $\mcB^{\YES}(\bm{S}_i)$ for $\left(B^{\YES}(s_1),\dots,B^{\YES}(s_i)\right)$ and $\bm{B}_i$ for $(b_1,\dots,b_i)$. 
\begin{align*}
   & \Pr\left[V^{\YES}_i = v_i, \mcB^{\YES}(\bm{S}_i) = \bm{B}_i\right] \\
  = & \Pr\left[V^{\YES}_i = v_i, V^{\YES}_{i-1} = v_{i-1}, \mcB^{\YES}(\bm{S}_i) = \bm{B}_i\right]\\
     = & \Pr\left[V^{\YES}_i = v_i,B^{\YES}(s_i) = b_i|V^{\YES}_{i-1} = v_{i-1},\mcB^{\YES}(\bm{S}_{i-1}) = \bm{B}_{i-1}] \cdot \Pr[V^{\YES}_{i-1} = v_{i-1},\mcB^{\YES}(\bm{S}_{i-1}) = \bm{B}_{i-1}\right]\\
     \le & \Pr\left[V^{\YES}_i = v_i,B^{\YES}(s_i) = b_i|V^{\YES}_{i-1} = v_{i-1},\mcB^{\YES}(\bm{S}_{i-1}) = \bm{B}_{i-1}\right] \cdot \\
   & \hspace{8cm} \left(1+\frac{100(i-1)}{\gamma}\right) \Pr\left[V^{\NO}_{i-1} = v_{i-1},\mcB^{\NO}(\bm{S}_{i-1}) = \bm{B}_{i-1}\right]\\
   = & \left(1+\frac{100(i-1)}{\gamma}\right) \Pr\left[V^{\YES}_i = v_i,B^{\YES}(s_i) = b_i\mid V^{\YES}_{i-1} = v_{i-1},\mcB^{\YES}(\bm{S}_{i-1})= \bm{B}_{i-1}\right]  \cdot \\
   & \hspace{9cm} \Pr\left[V^{\NO}_{i-1} = v_{i-1},\mcB^{\NO}(\bm{S}_{i-1}) = \bm{B}_{i-1}\right] 
\end{align*}
Fixing the node $v_{i-1}$ fixes  the next query $\COND_k(O_{v_{i-1}} \cup \{U_{v_{i-1}}\})$ irrespective of the $\YES$ or $\NO$ instances. Since we are conditioning on the event $\Good_3$, the unique heaviest element  $e \in O_{v_{i-1}} \cup \{U_{v_{i-1}}\}$ will be the outcome of this query  and is same for both instances. 
We have two cases to consider.
\begin{enumerate}
    \item If $e \in O_{v_{i-1}}$ then the value of $V^{\YES}_i$ and $V^{\NO}_i$  is same which is (with probability $1$) the child of $v^{i-1}$  with the corresponding edge labeled by $e$. Obviously, there is no new seen element in this case, i.e., as $e \in O_{v_{i-1}}$ so we have  $s_i = s_j$ for some $j \in [i-1]$ and hence $B^{\YES}(s_i) = B^{\NO}(s_i) = b_j$. Hence, in this case, we have  
    \begin{align*}
    \Pr\left[V^{\YES}_i = v_i,B^{\YES}(s_i) = b_i\mid V^{\YES}_{i-1} = v_{i-1}, \right. & \left. \mcB^{\YES}(\bm{S}_{i-1}) = \bm{B}_{i-1}\right] \\
   =  \Pr &\left[V^{\NO}_i = v_i,B^{\NO}(s_i) = b_i\mid V^{\NO}_{i-1} = v_{i-1},\mcB^{\NO}(\bm{S}_{i-1}) = \bm{B}_{i-1}\right]
    \end{align*}

    \item Now  consider  $e = \{U_{v_{i-1}}\}$.  In this case, by Lemma~\ref{cor:prob-mass-conc} in both $\YES$ and $\NO$ instances, each atom $W \in \At(U_{v_{i-1}})$ is picked with probability whose value is in $\left[1-1/\gamma,1+1/\gamma\right]\frac{|W|}{|U_{v_{i-1}}|}$. Note that each atom $U$ in one to one manner corresponds to a child of the node $v_{i-1}$. Thus we have 
\begin{align*}
\Pr\left[V^{\YES}_i = v_i \mid V^{\YES}_{i-1} = v_{i-1},\mcB^{\YES}(\bm{S}_{i-1}) = \bm{B}_{i-1}\right] & \\
\le & \left( \frac{1+1/\gamma}{1- 1/\gamma} \right) \Pr\left[V^{\NO}_i = v_i \mid V^{\NO}_{i-1} = v_{i-1},\mcB^{\NO}(\bm{S}_{i-1})= \bm{B}_{i-1}\right]
\end{align*}

 From Lemma~\ref{lem:phi}, we have for all $k \in \{1,2\}$, $D_k(W \cap B_j) = 0$ if $j \le r- \phi_W$ and $\left[1-\frac{1}{\gamma},1+\frac{1}{\gamma}\right] \frac{|W|}{\tau n}$ if  $j \ge r- \phi_W + 1$. Therefore, for any $k \in \{1,2\}$, the index of the bucket $b(s_i)$, when $s_i \sim \COND_k(W)$, takes any particular value in $\{r-\phi_W +1,r-\phi_W+2,\dots,r\}$ with probability  $\left[1-\frac{1}{\gamma},1+\frac{1}{\gamma}\right] \frac{1}{\phi_W}$ and any value in $\{1,2,\dots,r - \phi_W\}$ with probability $0$.
 \begin{align*}
 \Pr\left[B^{\YES}(s_i) = b_i\mid V^{\YES}_i = v_i,V^{\YES}_{i-1} = v_{i-1},\mcB^{\YES}(\bm{S}_{i-1}) = \bm{B}_{i-1}\right] & \\
 \le  \left(\frac{1+1/\gamma}{1- 1/\gamma} \right)  \Pr\left[B^{\NO}(s_i) = b_i\mid  V^{\NO}_i =\right. & \left.  v_i,V^{\NO}_{i-1} = v_{i-1},\mcB^{\NO}(\bm{S}_{i-1}) = \bm{B}_{i-1}\right]
\end{align*}

Hence, in this case, we have 
 \begin{align*}
   \Pr\left[V^{\YES}_i = v_i,B^{\YES}(s_i) = b_i\mid V^{\YES}_{i-1} = v_{i-1},\mcB^{\YES}(\bm{S}_{i-1}) = \bm{B}_{i-1}\right] & \\
    \le \left(\frac{1+1/\gamma}{1-1/\gamma}\right)^2  \Pr\left[V^{\NO}_i = v_i,B^{\NO}(s_i) = b_i\mid V^{\NO}_{i-1} \right. & \left. = v_{i-1},\mcB^{\NO}(\bm{S}_{i-1}) = \bm{B}_{i-1}\right]
    \end{align*}
\end{enumerate}  

Since $\left(\frac{1+1/\gamma}{1-1/\gamma}\right)^2 \left(1+100(i-1)/\gamma\right) \le \left(1+100i/\gamma\right)$, the induction hypothesis is true for $i$ as well.

    
    Fixing any $v_q$ and summing the inequality in for all possible values of $(b_1,\dots,b_q)$, we get the result.


\section{ $\Omega(\log \log n)$ lower bound for testing Label Invariant Property}\label{sec:label}

In this section, we prove Theorem~\ref{thm:main-odd-even}. We will prove that there is a label invariant property testing which takes $\Tilde{\Omega}(\log\log n)$ $\COND$ queries. CFGM~\cite{chakraborty2016power} 
defined a label invariant property (called Even uniblock property) and show a lower bound of $\Omega(\sqrt{\log \log n})$  on the query complexity. 
We will improve this lower bound to $\Tilde{\Omega}(\log \log n)$. 

\paragraph*{Even uniblock property:}  A distribution on $[n]$ is called even uniblock if and only if it is uniform over some subset $U \subseteq [n]$ of size $2^{2\kappa}$ for some  $\frac{\log n}{8} \le \kappa \le \frac{3 \log n}{8}$. Note that a distribution $D$ on $[n]$ is said to be uniform  on  set $S \subseteq [n]$ if we have $D(i) = 1/|S|$ for all $i \in S$ and $0$ otherwise.
\paragraph*{Odd uniblock   property:}  A distribution on $[n]$ is called odd uniblock if and only if it is uniform over some subset $U \subseteq [n]$ of size $2^{2\kappa+1}$ for some  $\frac{\log n}{8} \le \kappa \le \frac{3 \log n}{8}$. Note that a distribution $D$ on $[n]$ is said to be uniform  on  set $S \subseteq [n]$ if we have $D(i) = 1/|S|$ for all $i \in S$ and $0$ otherwise.

\begin{observation}[\cite{chakraborty2016power}]
For any distribution $D_e$ satisfying even uniblock property and any distribution $D_o$ satisfying odd uniblock property, we have 
$d_{TV} (D_e, D_o) \ge \frac{1}{2}$.
\end{observation}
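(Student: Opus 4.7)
The plan is to leverage the key structural fact that $|U_e|$ is an even power of two while $|U_o|$ is an odd power of two, so the two support sizes must differ by at least a factor of two. Writing $|U_e| = 2^{2\kappa_e}$ and $|U_o| = 2^{2\kappa_o + 1}$, no choice of $\kappa_e, \kappa_o$ in the allowed range can make these within a factor of two of each other, so by the symmetry of $\dtv$ in its two arguments I would assume without loss of generality that $|U_o| \ge 2|U_e|$.

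Next I would unfold $\dtv(D_e, D_o) = \tfrac{1}{2}\sum_{i \in [n]} |D_e(i) - D_o(i)|$ and partition $[n]$ into the four pieces $U_e \cap U_o$, $U_e \setminus U_o$, $U_o \setminus U_e$, and $[n] \setminus (U_e \cup U_o)$. Since $D_e$ assigns mass $1/|U_e|$ uniformly on $U_e$ and $D_o$ assigns mass $1/|U_o|$ uniformly on $U_o$, the contribution of each region is a constant per element depending only on $|U_e|$, $|U_o|$, and $a := |U_e \cap U_o|$. Under the WLOG assumption we have $1/|U_e| \ge 1/|U_o|$, so the absolute value inside $U_e \cap U_o$ can be dropped without a sign flip.

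A short algebraic simplification (in which the $a/|U_e|$ terms coming from $U_e \cap U_o$ and from $U_e \setminus U_o$ cancel) collapses the sum to the clean closed form $\dtv(D_e, D_o) = 1 - a/|U_o|$. Bounding $a \le |U_e|$ trivially and substituting $|U_o| \ge 2|U_e|$ from the first step immediately yields $\dtv(D_e, D_o) \ge 1 - 1/2 = 1/2$, as required.

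I do not anticipate a genuine obstacle here: once the factor-of-two gap between $|U_e|$ and $|U_o|$ is isolated, the rest is essentially an identity. The only thing I would double-check is the sign handling in the intersection region (to confirm that the WLOG assumption lets us discard the absolute value cleanly) and the bookkeeping of the four regions so that every element of $[n]$ is counted exactly once. Both are routine, so the entire argument should fit comfortably in a few lines.
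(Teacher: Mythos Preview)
Your argument is correct. The parity gap between the exponents $2\kappa_e$ and $2\kappa_o+1$ forces $|U_e|$ and $|U_o|$ to be distinct powers of two, hence one is at least twice the other; your four-region decomposition then gives the exact identity $\dtv(D_e,D_o)=1-a/|U_o|$ (under the WLOG $|U_o|\ge 2|U_e|$), and $a\le |U_e|\le |U_o|/2$ finishes it. One tiny phrasing nit: you write that the two sizes cannot be ``within a factor of two of each other,'' but of course they can be \emph{exactly} a factor of two apart; what you need (and use) is the weak inequality $|U_o|\ge 2|U_e|$ or its reverse, which is fine.

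As for comparison with the paper: there is nothing to compare against. The paper states this observation with attribution to~\cite{chakraborty2016power} and gives no proof of its own, so your self-contained computation is a strict addition rather than an alternative route.
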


To prove Theorem~\ref{thm:main-odd-even} we will use the same hard instances (explained in Section~\ref{sec:labelhard}) as defined in \cite{chakraborty2016power}. Then, in Section~\ref{sec:labelgood}, we define our $\Good$ event, and in Section~\ref{sec:labelgoodprob}, we prove (using the $\Ratio$ model) the $\Good$ event happens with high probability. Finally, in Section~\ref{sec:labelfinal} we prove Theorem~\ref{thm:main-odd-even} using Theorem~\ref{thm:extyao}.

\subsection{Distributions over Even uniblock and Odd uniblock properties}\label{sec:labelhard}
We now consider two distributions  $\IE$ and $\IO$,  over distributions satisfying even uniblock property and  over distributions satisfying odd uniblock property, respectively. Note that these are the same distributions considered in~\cite{chakraborty2016power}.
\begin{enumerate}
    \item Uniformly choose  an integer $\kappa$   such that $\frac{\log n}{8} \le \kappa \le \frac{3 \log n}{8}$.
    \item  Uniformly pick  a set $S_e$ of size  $2^{2\kappa}$ and a set $S_o$ of size $2^{2\kappa+1}$.
    \item  The distribution $\IE$ is a  uniform  distribution on $S_e$ while $\IO$ is a  uniform distribution on $S_o$
\end{enumerate}


\subsection{The $\Good$ event}\label{sec:labelgood}




\begin{definition}[\cite{chakraborty2016power}]
 We call   a number $b$  large with respect to $S_e$ if  $\frac{b |S_e|}{n} \ge 2^{\sqrt{\log n}}$ and small with respect to $S_e$ if 
 $\frac{b |S_e|}{n} < \frac{1}{2^{\sqrt{\log n}}}$. We have an analogous definition for $S_o$. Note that $|S_e| = 2^{2 \kappa}$ and $|S_o| = 2^{2 \kappa +1}$.
\end{definition}

\begin{definition}
\label{def:good node-oddeven}
    A node $v$ of the decision tree $R$ (recall that $\bm{A}_v = (A_{v_0},\dots, A_{v_i})$ is the set of all queries made in the path from the root to the node $v$ in the decision tree.) is called good if for all atoms $A \in At(\bm{A}_v)$, $|A|$ is  large with respect to both $S_e$ and $S_o$ or small with respect to both $S_e$ and $S_o$.
 A node $v$ is called bad  if it is not good.
\end{definition}

We now show the following lemma that helps us in bounding the probability of a node being bad in a decision tree. The proof of the following lemma is an adaptation of the arguments used in~\cite{chakraborty2016power}.
\begin{lemma}
\label{clm:prob-x-bad-u-odd-even}
     Irrespective of whether the input $x \in_R \IE$ or $x \in_R \IO$, the probability that a node of a decision tree is bad is at most $\frac{8 \cdot 2^q}{\sqrt{\log n}}$.
\end{lemma}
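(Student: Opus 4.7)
The plan is to mimic the structure of Claim~\ref{clm:prob-x-bad-u} (the equivalence-testing analog), but with a much simpler accounting since there is only one ``scale parameter'' $\kappa$ and no bucket structure to worry about. Fix any node $v$ at depth $i \le q$ in the decision tree $R$. The path from the root to $v$ induces at most $2^i \le 2^q$ atoms in $\At(\bm{A}_v)$, each of some (path-determined) size $|A|$. Crucially, since $\kappa$ is chosen independently of the tester's deterministic choices along the path, the set of sizes $\{|A| : A \in \At(\bm{A}_v)\}$ does not depend on $\kappa$.

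Next, for each atom $A$, I determine the set of ``bad'' values of $\kappa$, i.e.\ those for which $|A|$ is neither both-large-w.r.t.-$S_e$-and-$S_o$ nor both-small-w.r.t.-both. Writing the large/small conditions as inequalities in $\kappa$ and using $|S_e|=2^{2\kappa}$, $|S_o|=2^{2\kappa+1}$:
\begin{itemize}
\item ``large w.r.t.\ both $S_e$ and $S_o$'' iff $2\kappa \ge \log(n/|A|) + \sqrt{\log n}$ (the $S_e$ condition is the stricter of the two);
\item ``small w.r.t.\ both'' iff $2\kappa + 1 < \log(n/|A|) - \sqrt{\log n}$ (here the $S_o$ condition is the stricter).
\end{itemize}
Hence the bad $\kappa$'s for a fixed $A$ lie in an integer interval of $2\kappa$-length at most $2\sqrt{\log n}+1$, i.e.\ in at most $\sqrt{\log n}+1 \le 2\sqrt{\log n}$ integer values of $\kappa$.

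A union bound over the at most $2^q$ atoms in $\At(\bm{A}_v)$ shows that the node $v$ is bad for at most $2^q \cdot 2\sqrt{\log n}$ values of $\kappa$. Since $\kappa$ is drawn uniformly from the integer range $\bigl[\frac{\log n}{8},\frac{3\log n}{8}\bigr]$, which contains at least $\frac{\log n}{4}$ values, the probability that $v$ is bad is bounded by
\[
\frac{2^q \cdot 2\sqrt{\log n}}{\log n/4} \;=\; \frac{8 \cdot 2^q}{\sqrt{\log n}},
\]
regardless of whether the input instance comes from $\IE$ or from $\IO$, because $\kappa$ has the same marginal distribution in both and only $\kappa$ enters the goodness condition.

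The argument is essentially bookkeeping, so the only mild subtlety will be double-checking that the ``mismatch'' cases (large w.r.t.\ $S_e$ but not $S_o$, or vice-versa) are already subsumed by the interval computed above; this holds because the factor-$2$ gap between $|S_e|$ and $|S_o|$ is absorbed into the additive $+1$ in the $2\kappa$-length estimate, while the large/small thresholds are separated by $2^{2\sqrt{\log n}}$, so any mismatch at the boundary contributes only $O(1)$ further integer values per atom and does not affect the stated constant.
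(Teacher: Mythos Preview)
Your proof is correct and follows essentially the same approach as the paper's own argument: fix a node, bound for each atom the number of values of $\kappa$ that make it neither ``large w.r.t.\ both'' nor ``small w.r.t.\ both,'' then union-bound over the at most $2^q$ atoms and divide by the $\log n/4$ possible values of $\kappa$. The only cosmetic difference is that the paper bounds separately the values of $\kappa$ for which $|A|$ is neither large nor small with respect to $S_e$ (at most $\sqrt{\log n}$) and with respect to $S_o$ (another $\sqrt{\log n}$), then notes their union covers the bad set, whereas you directly compute the bad interval for $2\kappa$ by identifying the stricter of the two large/small conditions; both routes give $2\sqrt{\log n}$ bad values per atom and the same final bound.
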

\begin{proof}
    An atom $A$ is neither large nor small with respect to $S_e$ if $n 2^{-\sqrt{\log n}} < |A||S_e| < n 2^{\sqrt{\log n}}$ where $|S_e| = 2^{2\kappa}$ and $\kappa$ is chosen uniformly such that $\frac{\log n}{8} \le \kappa \le \frac{3 \log n}{8}$ . Therefore, for a fixed $|A|$, there are at most $\sqrt{\log n}$ values of $\kappa$, which will make it neither large nor small with respect to $S_e$. So there are at most $2 \sqrt{\log n}$ values of $\kappa$, which will make it neither large nor small with respect to  both $S_e$ and $S_o$.
 Since the range of $\kappa$ is $\log n/4$ (in both $\mu_e$ and $\mu_o$), with probability at most $\frac{8}{\sqrt{\log n}},$ the atom $A$ will be neither large nor small. Since there are at most $2^q$ atoms corresponding to a node, the statement of the lemma follows.
\end{proof}

From now on, fix
\[
\beta = 2^{\frac{\sqrt{\log n}}{4}}.
\]

\begin{lemma}\label{lem:prob-conc-odd-even}
 With probability at least $1-o(1)$,   for all good nodes $v$ and for all  atoms $A \in \At(\bm{A}_v)$:
 \begin{enumerate}
     \item If $|A|$ is large, we have  $|A \cap S_e| \in [1-1/\beta,1+1/\beta] \frac{|A| |S_e|}{n}$ and $|A \cap S_o| \in [1-1/\beta,1+1/\beta] \frac{|A| |S_o|}{n}$.
     \item If $|A|$ is small, we have $|A \cap S_e| = |A \cap S_o| =0$. 
 \end{enumerate}
\end{lemma}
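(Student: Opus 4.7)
The plan is to follow the same recipe as the proof of Lemma~\ref{lem:phi} in the equivalence testing section, now with the role of buckets $B_j$ replaced by the single set $S_e$ (and separately $S_o$), and with the ``large/small'' threshold shifted to $2^{\sqrt{\log n}}$ vs.\ $2^{-\sqrt{\log n}}$ as required by Definition~\ref{def:good node-oddeven}.

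First I would fix an atom $A \in \At(\bm{A}_v)$ at some good node $v$. Since the atom's underlying subset of $[n]$ is determined by the tester's queries (which, by core-adaptivity, depend only on configurations and internal randomness, not on the labels of $S_e$ or $S_o$), conditioning on the tester's view the random variable $|A \cap S_e|$ is distributed as $\mathrm{Hypergeometric}(|S_e|, |A|, n)$ with mean $\mu_e := |A||S_e|/n$, and similarly $|A \cap S_o| \sim \mathrm{Hypergeometric}(|S_o|, |A|, n)$ with mean $\mu_o := |A||S_o|/n$. Since $v$ is good, either both $\mu_e, \mu_o \ge 2^{\sqrt{\log n}}$ (the ``large'' case) or both $\mu_e, \mu_o \le 2^{-\sqrt{\log n}}$ (the ``small'' case).

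For the large case I would apply the hypergeometric Chernoff bound (Lemma~\ref{lem:chernoff}) with $\lambda = 1/\beta$. Since $\mu_e \ge 2^{\sqrt{\log n}}$ and $\beta = 2^{\sqrt{\log n}/4}$, the deviation probability is at most $2\exp\!\bigl(-\mu_e/(3\beta^2)\bigr) \le 2\exp\!\bigl(-2^{\sqrt{\log n}/2}/3\bigr)$, which is doubly exponentially small in $\sqrt{\log n}$; the same bound applies for $S_o$. For the small case I would instead apply Markov's inequality: $\Pr[|A \cap S_e| \ge 1] \le \mu_e \le 2^{-\sqrt{\log n}}$, and analogously for $S_o$.

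Finally, I would take a union bound over all atoms at all good nodes of the decision tree $R$. Exactly as in the proof of Lemma~\ref{lem:phi}, the total number of such atoms is at most $2^{q + q^2}$, which for $q \le \frac{\log \log n}{100 \log \log \log n}$ is at most $(\log n)^{\log \log n}$ --- negligible compared to both $\exp\!\bigl(2^{\sqrt{\log n}/2}/3\bigr)$ and $2^{\sqrt{\log n}}$. Hence the combined failure probability is $o(1)$, which gives the claim. The only mildly delicate point is making the conditioning on the tester's view rigorous so that hypergeometric concentration can be invoked atom by atom; this is handled exactly as in the equivalence-testing proof, using that core-adaptive testers act on label-invariant information so the distribution of $S_e$ (resp.\ $S_o$) conditional on the tester's observations remains uniform over sets of the prescribed size that are consistent with the observed configurations.
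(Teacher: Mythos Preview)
Your proposal is correct and mirrors the paper's proof essentially step for step: Markov's inequality for small atoms, the hypergeometric Chernoff bound (Lemma~\ref{lem:chernoff}) with deviation $1/\beta$ for large atoms, and a union bound over the at most $2^{q+q^2}\le(\log n)^{\log\log n}$ atoms appearing at good nodes. The only difference is your closing paragraph about conditioning on the tester's view, which the paper omits---and indeed it is unnecessary, since the union bound is taken over \emph{all} atoms in the fixed decision tree $R$ (whose node labels $(O_v,U_v)$ are determined once the tester's internal randomness is fixed), so no conditioning on the realized path is required.
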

\begin{proof}
     If  $\E\left[|S_e \cap A|\right] = \frac{|A||S_e|}{n} \le \frac{1}{2^{\sqrt{\log n}}}$ for any atom $A$  then by Markov's inequality $\Pr[|A \cap S_e| > 0] \le \frac{1}{2^{\sqrt{\log n}}}$.  Taking a union bound over all the atoms corresponding to all the good nodes (which can be at most $2^{q+q^2} \le (\log n)^{\log \log n}$), with probability at least $1 - o(1)$, 
    \begin{equation}
        \mbox{ for any small atom $A$, we have $|S_e \cap A| = 0$}.
    \end{equation}

    Now note that for any large atom,
    $\E\left[|S_e \cap A|\right] = \frac{|A||S_e|}{n} \ge 2^{\sqrt{\log n}} $. Therefore, by Chernoff bound,
    \[
    \Pr\left[|A \cap S_e| \not \in \left[1-\frac{1}{\beta}, 1+ \frac{1}{\beta}\right] \frac{|A||S_e|}{n}\right] \le O\left(exp\left(-2^{\sqrt{\log n}/\beta^2}\right)\right) \le O\left(exp\left(-2^{\sqrt{\log n}/2}\right)\right).
    \]
The proof analogously holds for $S_o$.
\end{proof}

\begin{definition}
\label{def:good-uniblock}
    The event $\Good(\Alg, x)$ is defined as ``the run of the algorithm $\Alg$ on input $x$ does not pass through a bad node."
\end{definition}

\subsection{The event $\Good(\Alg, x)$ happens with high probability}\label{sec:labelgoodprob}
In this subsection, we will prove the following lemma.
\begin{lemma}\label{lem:Good-odd-even} For any  core-adaptive tester $\Alg$ that makes $q \le \frac{\log \log n}{100 \log \log \log n}$  queries,
     \begin{enumerate}
         \item $\Pr_{x \in_R \IE}[\Good(\Alg, x)] \geq 1 - o(1)$
        \item $\Pr_{x \in_R  \IO}[\Good(\Alg, x)] \geq 1- o(1)$
     \end{enumerate}
\end{lemma}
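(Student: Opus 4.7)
The plan is to follow the three-step strategy used for Lemma~\ref{lem:good1-high-prob} in the equivalence testing setting, with Lemma~\ref{lem:prob-conc-odd-even} replacing the concentration bound of Lemma~\ref{cor:prob-mass-conc}, and Lemma~\ref{clm:prob-x-bad-u-odd-even} replacing the per-node ``badness'' bound of Claim~\ref{clm:prob-x-bad-u}. Specifically, I would (i) prove the analogous statement first in the weaker $\Ratio$ model, (ii) upper bound the total variation distance between the $\Ratio$ and $\COND$ runs of $\Alg$ at each step, and (iii) combine by a triangle inequality to transfer the estimate to the $\COND$ model. Throughout I will use that ``goodness'' of a node (Definition~\ref{def:good node-oddeven}) is a property of the decision tree together with the hidden $(S_e,\kappa)$ or $(S_o,\kappa)$, and is insensitive to whether $\Alg$ accesses the distribution through $\COND$ or $\Ratio$.

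For step (i), the randomness used at Step 3(a) of Definition~\ref{def:ratio-query} (choosing an atom $V\in\At(U_i)$ with probability $|V|/|U_i|$) is independent of the input, so it can be simulated internally by the tester. The $\Ratio$ version of $\Alg$ thus becomes a distribution over deterministic algorithms, each described by a decision tree of out-degree at most $|O_i|+1\le q+1$ per node and overall size at most $(q+1)^{2q}$, exactly as in the proof of Lemma~\ref{lem:ratio-good}. Lemma~\ref{clm:prob-x-bad-u-odd-even} bounds the probability that any single node of such a tree is bad by $O(2^q/\sqrt{\log n})$ for either $x \in_R \IE$ or $x \in_R \IO$. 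A union bound then gives that in every deterministic instantiation the probability of traversing a bad node is at most $(q+1)^{2q}\cdot O(2^q/\sqrt{\log n}) = o(1)$ whenever $q \le \log\log n/(100\log\log\log n)$, and the same bound holds for the randomized $\Ratio$ tester itself.

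For step (ii), conditioned on being at a good node $v$, the $\Ratio$ and $\COND$ procedures only differ in how the atom $V\in\At(U_v)$ is selected after $\{U_v\}$ is picked at Step~\ref{step:ratio-first}: $\Ratio$ uses $|V|/|U_v|$, while $\COND$ uses $D_k(V)/D_k(U_v)$. By Lemma~\ref{lem:prob-conc-odd-even}, on a good node each atom $V$ is either large, with $D_k(V)\in[1\pm 1/\beta]|V|/n$, or small, with $D_k(V)=0$. A short calculation then shows that the total variation between the two atom-selection distributions is bounded by $(|U_v|-L)/|U_v| + O(1/\beta)$, where $L$ denotes the total size of large atoms in $\At(U_v)$. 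Combining the good-node size bounds ($|V|\ge n\cdot 2^{\sqrt{\log n}}/|S_e|$ for large atoms, $|V| \le n/(|S_e|\cdot 2^{\sqrt{\log n}})$ for small atoms) with the fact that there are at most $2^q\le \log n$ atoms, one checks that if some large atom exists then $(|U_v|-L)/|U_v| \le (\log n)/2^{2\sqrt{\log n}} = o(1)$; otherwise $D_k(U_v)=0$ and both procedures reduce to a uniform sample on $U_v$, so the TV is zero. In particular, the per-step TV is $o(1)$.

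Combining (i) and (ii) via the triangle inequality, summing the per-step TV contributions over the at most $q$ levels of the decision tree, and adding the small failure probability of the event in Lemma~\ref{lem:prob-conc-odd-even}, yields $\Pr_{x\in_R \IE}[\Alg_\COND\text{ reaches a bad node}] \le o(1)$, and the same argument works for $\IO$. I expect the main obstacle to be step (ii): establishing that the $\COND$ atom-selection distribution is close to the ``size-proportional'' one used by $\Ratio$. The argument hinges on the exponential gap between large- and small-atom sizes enforced by the good-node condition, together with the fact that a decision tree of depth $q = o(\log\log n)$ can produce at most $\log n$ atoms, so the cumulative small-atom mass is negligible relative to the size of any single large atom; one also has to verify, somewhat tediously, that the degenerate case $D_k(U_v)=0$ is handled correctly by the convention that $D|_A$ is uniform when $D(A)=0$, which makes $\COND$ and $\Ratio$ coincide in that case.
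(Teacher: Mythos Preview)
Your proposal is correct and follows the same three-step strategy as the paper (Lemmas~\ref{lem:ratio-good-odd-even} and~\ref{lem:tv-small-G1-odd-even}, then combining). Your step~(ii) is in fact slightly more careful than the paper's version, which simply asserts $D_k(V)/D_k(U_v)\in[1\pm 2/\beta]\,|V|/|U_v|$ for all atoms of $U_v$ without separately treating small atoms; your explicit $(|U_v|-L)/|U_v|\le(\log n)/2^{2\sqrt{\log n}}$ accounting is precisely what makes ignoring those small atoms legitimate.
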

\subsubsection{Proof for $\Ratio$ oracle}
Let us consider the event $\G_1$ that denotes the run of an algorithm $\Alg$ having access to $\Ratio$ oracle, on input $x$ does not pass through a bad node. In Lemma~\ref{lem:ratio-good-odd-even}, we claim that $\G_1$ happens with a high probability. Then we show that given the event $\G_1$ happens, $\Good$ happens with a high probability. As a consequence, we conclude that $\Good$ happens with a high probability, which completes the proof of Lemma~\ref{lem:Good-odd-even}.

\begin{lemma}
    \label{lem:ratio-good-odd-even}
    If an algorithm $\Alg$ makes $q \le \frac{\log \log n}{100 \log \log \log n}$ adaptive $\Ratio$ queries, then 
    \begin{enumerate}
        \item $\Pr_{x \in_R \IE} \left[\G_1(\Alg, x)\right] \geq 1- o(1)$,
        \item $\Pr_{x \in_R \IO} \left[\G_1(\Alg, x)\right] \geq 1-o(1)$.
    \end{enumerate}
\end{lemma}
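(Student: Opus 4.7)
The plan is to mirror the argument already used for Lemma~\ref{lem:ratio-good} in the equivalence-testing setting, adapting it to the single-distribution hard instances $\IE$ and $\IO$ and invoking Lemma~\ref{clm:prob-x-bad-u-odd-even} (the odd/even analogue of Claim~\ref{clm:prob-x-bad-u}) in place of Claim~\ref{clm:prob-x-bad-u}. The decisive observation is that Step~\ref{step:ratio-second} of the $\Ratio$ query (picking an atom $V \in \At(U_i)$ with probability $|V|/|U_i|$) is completely independent of the input distribution. Hence the tester $\Alg$ can simulate this step using its internal randomness, producing an equivalent randomized algorithm $\Alg'$ whose only genuine distributional accesses are the element-selection in Step~\ref{step:ratio-first} and the conditional sample in Step~\ref{step:ratio-third}.

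Next I would regard $\Alg'$ as a distribution over deterministic instantiations $\Alg'_1, \Alg'_2, \ldots$, each obtained by fixing the internal coins that drive the atom choices. Each $\Alg'_r$ is representable by a decision tree $R_r$ whose nodes encode the sequence of distributional accesses; since every $\Ratio$ query contributes at most two such accesses, $R_r$ has height at most $2q$. Because $\Alg$ is core-adaptive (Theorem~\ref{thm:general-core-adaptive-equiv}), the branching at each node is governed only by the resulting configuration: the Step~\ref{step:ratio-first} node branches over the $|O_i|+1 \le q+1$ choices in $O_i \cup \{U_i\}$, while, once an atom $V$ has been fixed by the internal randomness, the subsequent $\COND_{D_k}(V)$ step induces a configuration fully determined by $V$ (since $V \subseteq U_i$ consists only of fresh elements). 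Consequently $R_r$ has at most $(q+1)^{2q}$ nodes.

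Now I would apply Lemma~\ref{clm:prob-x-bad-u-odd-even}, which states that for $x \in_R \IE$ or $x \in_R \IO$ any fixed node is bad with probability at most $\frac{8 \cdot 2^q}{\sqrt{\log n}}$ (the randomness here is over the draw of $S_e$ or $S_o$, i.e.\ over the choice of $\kappa$ and the subset). A union bound over the nodes of $R_r$ gives
\[
\Pr_{x}\bigl[\Alg'_r \text{ reaches a bad node on input } x\bigr] \le (q+1)^{2q} \cdot \frac{8 \cdot 2^q}{\sqrt{\log n}}.
\]
For $q \le \frac{\log \log n}{100 \log \log \log n}$ a short calculation shows $(q+1)^{2q} \cdot 2^q = 2^{O(q \log q)} = (\log n)^{o(1)}$, so the right-hand side is $o(1)$. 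Averaging over the internal randomness $r$ yields the same $o(1)$ bound for $\Alg'$, and since $\Alg$ and $\Alg'$ induce identical distributions over traversed nodes for every fixed input $x$, we conclude $\Pr_{x \in_R \IE}[\G_1(\Alg,x)] \ge 1 - o(1)$, and symmetrically for $\IO$.

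The main obstacle I anticipate is verifying carefully that the branching count on the Step~\ref{step:ratio-third} portion of the tree is indeed controlled (so that $|R_r| \le (q+1)^{2q}$, or whatever polynomial-in-$q$ bound is genuinely correct here): one must ensure that, after internal randomness has fixed the atom $V$, the resulting configuration really is determined and does not secretly introduce extra branches via coincidences such as collisions with previously seen samples. Once this bookkeeping is made precise exactly as in the proof of Lemma~\ref{lem:ratio-good}, the remaining steps are routine instantiations of the generic framework.
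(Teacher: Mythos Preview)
Your proposal is correct and follows essentially the same argument as the paper: simulate Step~\ref{step:ratio-second} internally to obtain a distribution over deterministic testers $\Alg'_r$, bound each $|R_r|$ by $(q+1)^{2q}$ via the core-adaptive branching, apply Lemma~\ref{clm:prob-x-bad-u-odd-even} with a union bound, and then average. Your extra care about the Step~\ref{step:ratio-third} branching (that the configuration is determined once $V$ is fixed) is a point the paper itself treats only implicitly, so the bookkeeping concern you flag is not a gap in your argument.
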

\begin{proof}
First, observe that for any $\Ratio$ query, Step~\ref{step:ratio-second} (of Definition~\ref{def:ratio-query}) does not depend on the input distribution (on which the $\Ratio$ query is placed). Thus, a randomized algorithm can simulate this step without accessing the input distribution (by picking an atom $V \in \At(U_i)$ with probability $|V|/|U_i|$). Let $\Alg'$ be the new (randomized) algorithm that simulates Step~\ref{step:ratio-second}. One can think of the randomized algorithm $\Alg'$ as a distribution over a set of algorithms $\Alg'_1,\Alg'_2,\cdots$, where each $\Alg'_r$ is an instantiation of $\Alg'$ by fixing internal randomness that is used to simulate Step~\ref{step:ratio-second}. Thus, each $\Alg'_r$ can be represented as a decision tree where each node denotes access (of the form either Step~\ref{step:ratio-second} or Step~\ref{step:ratio-third}) to the input distribution. Since the original algorithm $\Alg$ makes $q$ $\Ratio$ queries, each decision tree has a height at most $2q$. Recall, by our assumption, $\Alg$ is a core-adaptive tester. Thus for any node at the $i$-th level, the number of children is $|O_i| + 1$ (follows from Step~\ref{step:ratio-first}), which in turn is upper bounded by $q+1$ (since $\Alg$ makes $q$ $\Ratio$ queries, by the definition of $O_i$ in Definition~\ref{def:core-adaptive}, $|O_i| \le q$). So the number of nodes present in each such decision tree is at most $(q+1)^{2q}$.

  Now, by Lemma~\ref{clm:prob-x-bad-u-odd-even} and  taking a union bound over all the nodes in the decision tree, the probability that at least one of the nodes is bad is at most $\frac{8 \cdot 2^q \cdot (q+1)^{2q}}{\sqrt{\log n}}$. Thus when $q \le \frac{\log \log n}{100 \log \log \log n}$, for each $\Alg'_r$,
\[
\Pr_{x \in_R \IE}\left[\Alg'_r\text{ reaches a bad node on input }x\right]=o(1).
\]
Since the randomized algorithm $\Alg'$ can be thought of as a distribution over the set of algorithms $\Alg'_1,\Alg'_2,\cdots$, we get that
\[
\Pr_{x \in_R \IE}\left[\Alg'\text{ reaches a bad node on input }x\right]=o(1).
\]
By the construction of $\Alg'$, for every input $x$, the probability of reaching a bad node by $\Alg$ and $\Alg'$ are the same. Hence, $\Pr_{x \in_R \IE} \left[\G_1(\Alg, x)\right] \geq 1- o(1)$. A similar argument holds for $x \in_R \IO$. This concludes the proof.
\end{proof}

\subsubsection{The total variation distance is small for testers given access to $\COND$ and $\Ratio$ oracle }

From now on, throughout the rest of this subsection, we use $\Alg_{\COND}$ and $\Alg_{\Ratio}$ to denote the core-adaptive tester $\Alg$ (with the corresponding decision tree $R$) having access to the $\COND$ and $\Ratio$ oracle respectively. Next, for any $i \le q$, let us consider the following two distributions: $L_{\COND}(i)$ and $L_{\Ratio}(i)$ are the distributions of nodes at the $i$-th level of the decision tree associated with $\Alg_{\COND}$ and $\Alg_{\Ratio}$ respectively.

Let $\G_2$ be the event of Lemma~\ref{lem:prob-conc-odd-even}, i.e., for all good nodes $v$ and for all  atoms $A \in \At(\bm{A}_v)$, we have $|A \cap S_e| \in [1-1/\beta,1+1/\beta] \frac{|A| |S_e|}{ n}$ and $|A \cap S_o| \in [1-1/\beta,1+1/\beta] \frac{|A| |S_o|}{ n}$. Now, we upper bound the total variation distance between the distribution over the nodes of $R$ at any depth $i \le q$, for $\Alg_{\COND}$ and $\Alg_{\Ratio}$, conditioned on the event $\G_1 \wedge \G_2$.

\begin{lemma}
    \label{lem:tv-small-G1-odd-even}
    Given the event $\G_1 \wedge \G_2$ happens, for each $i \le q$, the total variation distance between the two distributions $L_{\COND}(i)$ and $L_{\Ratio}(i)$ is at most $2i/\beta$.
\end{lemma}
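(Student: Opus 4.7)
The plan is to mirror the proof of Lemma~\ref{lem:tv-small-G1} (the analogous total-variation bound in the equivalence-testing part) and adapt the per-node calculation to the uniform-distribution setting of this section. First, I would note that every descendant of a bad node is itself bad (since the bad condition only accumulates as more queries are made), so we can iteratively prune from $R$ all subtrees rooted at bad internal nodes (keeping each such node as a leaf) to form a tree $R'$ in which every internal node is good, without altering the distribution over the node reached at any level. By the telescoping inequality used in the proof of Lemma~\ref{lem:tv-small-G1}, it then suffices to show that at every good internal node $v$ the total variation distance $\eta_v$ between the child distributions under $\Alg_{\COND}$ and $\Alg_{\Ratio}$ is at most $2/\beta$.

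At such a $v$, by Observation~\ref{obs:cond-new-defn} and Definition~\ref{def:ratio-query}, the two oracles differ only in how they sample an atom $V\in\At(U_v)$: $\Alg_{\COND}$ picks $V$ with probability $D(V)/D(U_v)$ while $\Alg_{\Ratio}$ picks $V$ with probability $|V|/|U_v|$. Conditioning on $\G_2$ via Lemma~\ref{lem:prob-conc-odd-even}, every large atom satisfies $|V\cap S|\in[1\pm 1/\beta]\,|V|\,|S|/n$, where $S\in\{S_e,S_o\}$ is the support of the input, while every small atom satisfies $|V\cap S|=0$. If $\At(U_v)$ contains no large atom then $D(U_v)=0$, and by Definition~\ref{def:cond} both models collapse to the uniform distribution on $U_v$, giving $\eta_v=0$. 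Otherwise, writing $|U_v^L|$ and $|U_v^S|$ for the total sizes of large and small atoms in $\At(U_v)$, the above estimate gives $D(V)/D(U_v)\in[1\pm 2/\beta]\cdot|V|/|U_v^L|$ for every large atom $V$.

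The crucial quantitative step is to show $|U_v^S|/|U_v|\le 1/\beta$. This uses the size gap built into the good-node definition: every large atom has size at least $n\,2^{\sqrt{\log n}}/|S|$, every small atom has size less than $n\,2^{-\sqrt{\log n}}/|S|$, and $|\At(U_v)|\le 2^q$; since a single large atom dominates the combined size of all small atoms by a factor of at least $2^{2\sqrt{\log n}}/2^q\ge\beta$ for $q\le \frac{\log\log n}{100\log\log\log n}$, the claim follows. Combining this bound with the $(1\pm 2/\beta)$ estimate and the identity $\frac{|V|}{|U_v^L|}-\frac{|V|}{|U_v|}=\frac{|V|\,|U_v^S|}{|U_v^L|\,|U_v|}$, a triangle-inequality split of the TV-sum into its large-atom and small-atom parts yields $\eta_v\le 2/\beta$. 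Telescoping over the $i$ levels then gives $\tv(i)\le 2i/\beta$.

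I expect the main obstacle to be precisely this small-atom mass bound, which is the point at which the ``$\Ratio$ sparsification'' idea from Section~\ref{sec:overview} bites: the requirement $2^q\ll 2^{2\sqrt{\log n}}$ is exactly what caps the query lower bound at $\Tilde{\Omega}(\log\log n)$, and it is what allows us to treat the small-atom contribution as a negligible perturbation when comparing $\Alg_{\COND}$ and $\Alg_{\Ratio}$.
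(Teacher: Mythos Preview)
Your argument follows the same skeleton as the paper's proof of Lemma~\ref{lem:tv-small-G1-odd-even}: prune $R$ so all internal nodes are good, bound the per-node TV gap between the $\COND$ and $\Ratio$ child distributions, and telescope. The one place you go further than the paper is the explicit treatment of small atoms. The paper simply asserts that for every $V\in\At(U_v)$ one has $D(V)/D(U_v)=\theta_V\,|V|/|U_v|$ with $\theta_V\in[1-2/\beta,1+2/\beta]$ and sums; taken literally this is false when $V$ is small (then $D(V)=0$ while $|V|/|U_v|>0$). Your split into $U_v^L$ and $U_v^S$ and the size-gap estimate $|U_v^S|/|U_v|\le 2^q/2^{2\sqrt{\log n}}$ is exactly what is needed to make that step rigorous, and it is available precisely because the good-node condition forces atoms to be either large or small.

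One minor bookkeeping point: with only the weak consequence $|U_v^S|/|U_v|\le 1/\beta$, your triangle-inequality split yields $\eta_v\le 2/\beta+2|U_v^S|/|U_v|\le 4/\beta$, not $2/\beta$. To land exactly on the stated constant you should use the full strength of the gap, namely $|U_v^S|/|U_v|\le 2^q\cdot 2^{-2\sqrt{\log n}}=o(1/\beta)$ for $q\le\frac{\log\log n}{100\log\log\log n}$, which absorbs the small-atom terms and gives $\eta_v\le 2/\beta$ for $n$ large. This does not affect the downstream use of the lemma.
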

\begin{proof}
     Consider an $i \le q$. We denote the total variation distance between the two distributions $L_{\COND}(i)$ and $L_{\Ratio}(i)$ (conditioning on the event $\G_1 \wedge \G_2$) by $\tv(i)$. Note that if a node is bad, then all its descendants are also bad. For any node $v$, let $R(v)$ denote the subtree (of $R$) rooted at the node $v$. Let $R'$ be the tree obtained from $R$ by iteratively pruning $R(v) \setminus v$ (i.e., removing the subtree rooted at $v$ while only keeping the node $v$) for all the bad nodes $v$ in $R$, starting from the lowest level. Observe in $R'$ no internal node is bad. Furthermore, the probability of reaching a bad node in $R'$ is the same as that in $R$. So, it suffices to argue with the distribution of nodes at any level $i$ of the tree $R'$.

Consider any good node $v$ at depth $i$. We upper bound the total variation distance between distributions over children of $v$ for $\Alg_{\COND}$ and $\Alg_{\Ratio}$, conditioned on the fact that the run of the tester is currently at the node $v$. It is not hard to see that if, for each node $v$, the total variation distance between distributions over children of $v$ for $\Alg_{\COND}$ and $\Alg_{\Ratio}$ is upper bounded by $\eta$ (for some $\eta \ge 0$), then
\begin{equation}
    \label{eq:tv-level-odd-even}
    \tv(i+1) \le \tv(i) + \eta.
\end{equation}

Now, since $\Alg_{\Ratio}$ only differs at Step~\ref{step:ratio-second} (of the $\Ratio$ oracle) with $\Alg_{\COND}$, it suffices to upper bound the total variation distance for distributions over the atoms of $U_v$. If $U_v$ has no large atom,  by Lemma~\ref{lem:prob-conc-odd-even}, we have $|U_v \cap S_e| = |U_v \cap S_o| = 0$ (conditioned on the event $\G_2$). In this case, the total variation distance is zero. So, let us assume that $U_v$ has at least one large atom.

Further, as we are conditioning on $\G_2$,  we have both  $\frac{D_e(V)}{D_e(U_v)}$ and $\frac{D_o(V)}{D_o(U_v)}$ $= \theta_V \frac{|V|}{|U_v|}$ where $\theta_V \in [1-2/\beta,1+2/\beta] $. In this case, the total variation distance is at most 
\begin{align*}
  &  \sum_{V \in At(U)}\left| \frac{\theta_V  |V|}{ |U|} - \frac{|V|}{|U|}\right|
      \le \frac{2}{\beta}.
\end{align*}

Then by Equation~\ref{eq:tv-level-odd-even},
\[
\tv(i+1) \le \tv(i) + 2/\beta.
\]
So we conclude that for each $i \le q$, $\tv(i) \le  2i/\beta$. 
\end{proof}

\subsubsection{$\Good$ happens with high probability}
\begin{proof}[Proof of Lemma~\ref{lem:Good-odd-even}]
It directly follows from Lemma~\ref{lem:tv-small-G1-odd-even} that the probability that $\Alg_{\COND}$ reaches a bad node (on the input $x$) is at most 
\[
\sum_{i=1}^q 2i/\beta + \Pr[\overline{\G_1}] + \Pr[\overline{\G_2}] \le q^2/\beta + \Pr[\overline{\G_1}] + \Pr[\overline{\G_2}].
\]
Then by Lemma~\ref{lem:ratio-good-odd-even} and Lemma~\ref{lem:prob-conc-odd-even},
\[
\Pr_{x \in_R \IE}\left[\overline{\Good(\Alg,x)}\right] \le q^2/\beta + o(1) = o(1)
\]
where the last inequality follows for $q \le \frac{\log \log n}{100 \log \log \log n}$. The same argument also holds for $\Pr_{x \in_R \IO}\left[\overline{\Good(\Alg,x)}\right]$, and that concludes the proof.
\end{proof}

\subsection{Proof of Theorem~\ref{thm:main-odd-even}}\label{sec:labelfinal}

\begin{lemma}[\cite{chakraborty2016power}] \label{lem:if-good-then-small-tv-odd-even}
   Given the event $\Good$ happens, consider the resulting distributions over the set of leaves reached by the algorithm. These two distributions, under $\IE$ compared to under $\IO$, are at most $\frac{2^{3q+1}}{\sqrt{\log n}}$ apart (in the total variation distance) from each other.
\end{lemma}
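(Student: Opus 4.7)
The plan is to prove the bound by induction on the depth of the decision tree. Let $\mu_i^E$ and $\mu_i^O$ denote the distributions over nodes at depth $i$ reached by the algorithm, conditioned on the event $\Good$, under $\IE$ and $\IO$ respectively. Writing $\Delta_i := \dtv(\mu_i^E,\mu_i^O)$, the goal is to show that $\Delta_i - \Delta_{i-1}$ grows by at most $O(2^{2q}/\sqrt{\log n})$ per level, so that after $q$ levels the bound of the lemma is met. Conditioning on $\Good$ is essential because only under this event can we invoke the concentration guarantees of Lemma~\ref{lem:prob-conc-odd-even} uniformly along the traversed path.

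Fix a good node $v$ at depth $i-1$ and consider the query $\COND_{D_k}(A_v)$ with $A_v = O_v \cup U_v$. Via Observation~\ref{obs:cond-new-defn}, this sampling process first picks an element in $O_v \cup \{U_v\}$ with probability proportional to $D_k$-mass, and (if $U_v$ is chosen) picks an atom $V \in \At(U_v)$ proportional to $D_k(V)/D_k(U_v)$. The key point is that conditioning on $\Good$ and hence on the concentration conclusions of Lemma~\ref{lem:prob-conc-odd-even}, for any large atom $V$ we have $D_k(V) = |V \cap S_k|/|S_k| \in [1 - 1/\beta,\, 1 + 1/\beta] \cdot |V|/n$ regardless of whether $k = e$ or $k = o$, and $D_k(V) = 0$ for small atoms. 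Consequently, the distribution over atoms inside $U_v$ is $(1 \pm O(1/\beta))$-close between the two cases, contributing negligibly compared to the $1/\sqrt{\log n}$ target.

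The main subtlety is the collision branch, i.e., outcomes corresponding to $s_i = s_j$ for some $s_j \in O_v$. Here $D_e(s_j) = 1/|S_e| = 2^{-2\kappa}$ while $D_o(s_j) = 1/|S_o| = 2^{-2\kappa-1}$, so the per-element weights differ by exactly a factor of two. One must therefore compare $D_e(s_j)/D_e(A_v)$ with $D_o(s_j)/D_o(A_v)$, where $D_k(A_v) = |O_v|/|S_k| + D_k(U_v)$, and use the large/small dichotomy provided by $\Good$ to conclude that in each regime (either $D_k(U_v)$ dominates, or the $O_v$ part dominates) the ratio of these branch probabilities between $\IE$ and $\IO$ is $1 + O(1/\sqrt{\log n})$, since all large-atom masses $D_k(V)$ are symmetric between $k=e,o$ and the mismatch introduced by the $|S_e|$ vs.\ $|S_o|$ factor can be expressed by a bounded ratio of summed atom sizes. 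Summing over the at most $|O_v|+|\At(U_v)| \le q + 2^q$ possible outcomes at each node yields an increment of at most $O(2^{2q}/\sqrt{\log n})$ to $\Delta_i$.

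The main obstacle is in the bookkeeping for the collision branch, because the $|S_e|$ vs.\ $|S_o|$ discrepancy is a genuine distinguisher (indeed, it is the only source of asymmetry between the two ensembles), so one must argue that under $\Good$ this discrepancy is already diluted by the atom-structure of $U_v$: either $D_k(U_v)$ is dominated by a large atom whose mass depends on $S_k$ only through $|S_k \cap V|/|S_k| \approx |V|/n$, thereby cancelling the $|S_k|$ factor, or $D_k(U_v)$ is negligible so that the collision probabilities are the same fraction $(1/|S_k|)/(|O_v|/|S_k|) = 1/|O_v|$ under both cases. Telescoping the per-level increments across all $q$ levels and accounting for the $2^q$ branching at each node gives the stated bound $2^{3q+1}/\sqrt{\log n}$.
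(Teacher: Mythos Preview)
The paper does not give its own proof of this lemma: it is quoted from \cite{chakraborty2016power}, with only the remark that CFGM's argument uses merely that the nodes \emph{traversed} by the run are good (rather than all nodes of the tree), so it still applies under the present Definition~\ref{def:good-uniblock}. Your level-by-level induction is presumably along the same lines as CFGM's, but since you are reconstructing the argument I assess it on its own merits.

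There is a genuine gap in your treatment of the collision branch when $U_v$ contains a large atom. You assert that in this regime ``the ratio of these branch probabilities between $\IE$ and $\IO$ is $1 + O(1/\sqrt{\log n})$,'' but this is not true. If $U_v$ has a large atom then $D_k(U_v) \approx L := \sum_{V \text{ large}} |V|/n$ independently of $k$, so for a fixed $s_j \in O_v$ the collision probability is
\[
\frac{D_k(s_j)}{D_k(A_v)} \;\approx\; \frac{1/|S_k|}{|O_v|/|S_k| + L} \;=\; \frac{1}{|O_v| + L\,|S_k|}.
\]
Since $|S_o| = 2|S_e|$ and $L\,|S_k| \gg |O_v|$, these two probabilities differ by a factor close to $2$, not $1 + O(1/\sqrt{\log n})$; there is no cancellation of the $|S_k|$ factor here. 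The correct argument is additive rather than multiplicative: any large atom $V$ satisfies $|V|\,|S_k|/n \ge 2^{\sqrt{\log n}}$, hence $L\,|S_k| \ge 2^{\sqrt{\log n}}$ and each collision probability is at most $2^{-\sqrt{\log n}}$. Thus the at most $q$ collision outcomes contribute $O(q\cdot 2^{-\sqrt{\log n}})$ to the per-level TV increment, which is negligible against the target. With this repair your telescoping yields the claimed bound. A smaller bookkeeping point: conditioning on $\Good$ alone does not give you the concentration of Lemma~\ref{lem:prob-conc-odd-even}; that is a separate high-probability event (the paper's $\G_2$), which you must also condition on and absorb as an additional $o(1)$ error.
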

We want to point out that in~\cite{chakraborty2016power}, the event $\Good$ in the above lemma stands for  ``none of the nodes in the decision tree is bad". However, the proof only uses the fact that none of the nodes encountered by the run of the algorithm is bad. Hence, the lemma remains valid for our definition of the event $\Good$ (as in Definition~\ref{def:good-uniblock}) as well.

Now consider a decision tree of the $\Alg$ and feed to it either $\mu_e$ or $\mu_o$. Unless the queries made by $\Alg$ is $\Tilde{\Omega}(\log \log n)$, in both cases, the event $\Good$ happens with $1-o(1)$ probability (Lemma~\ref{lem:Good-odd-even}). Further, conditioned on the event $\Good$, the total variation distance between the resulting distribution over the leaves is at most $o(1)$ (Lemma~\ref{lem:if-good-then-small-tv-odd-even}).  Hence, the (unconditional) total variation distance between the resulting distribution over the leaves is at most $o(1) +o(1) = o(1)$. This means the $\Alg$ cannot distinguish between $\mu_e$ and $\mu_o$ (unless $q = \Tilde{\Omega}(\log \log n)$).

\section{Conclusion}\label{sec:conclusion}

In this paper, we introduce the $\Ratio$ model and show how this model can be used to obtain improved lower bounds for the seemingly stronger $\COND$ model. The concept of the core-adaptive tester was crucially used in previous works to prove the previous best lower bounds. In our paper, we kind of show that the core-adaptive tester with $\COND$ queries is similar to the $\Ratio$ samplers. In this context, we want to leave with an important question:

\begin{center}
\textbf{Are core-adaptive testers equivalent to a further restricted class of testers? }
\end{center}

We believe this question is important for understanding exactly where the power of $\COND$ lies. Suppose we restrict the definition of the configuration (Definition~\ref{def:configuration}) of the $i$ sample by removing the Condition~\ref{step:confg-2}, i.e.,  now the configuration $c'_i$ is the information whether $s_i = s_j$ for which $j \le i-1$. Consider a restricted class of core-adaptive testers  where, for any $i \in [q]$, the $i$-th query $A_i$ is
\begin{enumerate}
    \item  Either of the form $\COND_{D_k}(U)$ where $U$ is the set of unseen elements (and in this case, we get a sample $s_i \sim \COND_{D_k}(U)$),
    \item Or of the form $\COND_{D_k}(O_i)$  where as before $O_i$ is some  subset of previously seen elements. 
\end{enumerate}
Further, the next query set $A_{i+1}$ is decided by the configurations of the previous samples.

In simplest words, at any step, the tester from this restricted class has the power to (i) sample $s \sim \COND(U)$ for any set of unseen elements $U$, (ii) can determine the value of $j \le i-1$ such that  $s_j = s$ where   $s \sim \COND(O)$. One can also say that this class of testers does not consider the unique atoms to which the samples belong for any decision. It is easy to see that this restricted class of testers is the same as the tester when given access to $\Ratio$ queries.  

This seems like a big restriction, and indeed the size of the decision tree for testers in this class is $q^q$ as compared to $2^{q^2}$ before. Interestingly, to the best of our knowledge, all the testers for any label invariant property in the literature belong to this restricted class. Further, our work shows that the power of testers from this restricted class is the same as that of general core-adaptive testers in the context of equivalence and even uniblock testing. This raises an important question.

\textit{
For what subset of label invariant properties does this restricted class of testers have the same power as general core-adaptive testers?}

\bibliography{references}
\end{document}